\makeatletter \@addtoreset{equation}{section} \makeatother
\newtheorem{proposition}{Proposition}
\newtheorem{theorem}{Theorem}
\newtheorem{lemma}{Lemma}
\newtheorem{remark}{Remark}
\def\mdet{\mathrm{det}}
\def\intd{\displaystyle\int}
\def\Tr{\mathrm{Tr}\,}
\begin{document}

\title{On the second mixed moment of the characteristic polynomials of 1D band matrices}

\author{ Tatyana Shcherbina\thanks {Supported by NSF grant DMS 1128155}
\\ \\
\small{\it IAS, Princeton, USA}
 \\\small{\it e-mail: t\underline{ }shcherbina@rambler.ru}
}
\date{}
\maketitle

\begin{abstract}
We consider the asymptotic behavior of the second mixed moment of the characteristic polynomials of 1D
Gaussian band matrices, i.e. of the Hermitian $N\times N$ matrices $H_N$ with independent Gaussian entries such that
 $\langle H_{ij}H_{lk}\rangle=\delta_{ik}\delta_{jl}J_{ij}$, where $J=(-W^2\triangle+1)^{-1}$.
Assuming that $W^2=N^{1+\theta}$, $0<\theta\le 1$, we show that the moment's asymptotic behavior (as $N\to\infty$) in
the bulk of the spectrum coincides with that for the Gaussian Unitary Ensemble.
\end{abstract}
\section{Introduction}
The Hermitian Gaussian random band matrices (RBM) are Hermitian $N\times N$
matrices $H_N$ (we enumerate indices of entries by $i,j\in \mathcal{L}$, where
$\mathcal{L}=[-n,n]^d\cap \mathbb{Z}^d$, $N=(2n+1)^d$) whose entries $H_{ij}$ are random
Gaussian variables with mean zero such that
\begin{equation}\label{ban}
\mathbf{E}\big\{ H_{ij}H_{lk}\big\}=\delta_{ik}\delta_{jl}J_{ij},
\end{equation}
where $J_{ij}$ is a symmetric function which is small for large $|i-j|$ and
\[
\sum\limits_{i=-n}^nJ_{ij}=1.
\]
In this paper we consider the especially convenient choice of $J_{ij}$, which is given by the lattice Green's
function
\begin{equation}\label{J}
J_{ij}=\left(-W^2\Delta+1\right)^{-1}_{ij},
\end{equation}
where $\Delta$ is the discrete Laplacian on $\mathcal{L}$. For the case $d=1$
\begin{equation}\label{lapl}
(-\Delta f)_j=\left\{\begin{array}{ll} -f_{j-1}+2f_j-f_{j+1},&j\ne -n,n,\\
-f_{j-1}+f_j-f_{j+1},& j=-n,n
\end{array}\right.
\end{equation}
with $f_{-n-1}=f_{n+1}=0$ (i.e. we consider the discrete Laplacian with Neumann boundary conditions).
The advantage of (\ref{J}) is that its inverse, which appears
in the integral representation (see (\ref{HS}) below), is only three-diagonal matrix.

Note that $J_{ij}\approx C_1W^{-1}\exp\{-C_2|i-j|/W\}$ for $J$ of (\ref{J}) with $d=1$, and so the variance of matrix
elements is exponentially small when $|i-j|\gg W$. Hence $W$ can be considered as the width of the band.

It should be noted also that the odd size of the matrices is chosen only because it is
more convenient to have the symmetric segment $[-n, n]$ and it does not play any role
in the consideration below.

The probability law of 1D RBM $H_N$ can be written in the form
\begin{equation}\label{band}
P_n(d H_N)=\prod\limits_{-n\le i<j\le n}\dfrac{dH_{ij}d\overline{H}_{ij}}{2\pi J_{ij}}e^{-\frac{|H_{ij}|^2}{J_{ij}}}
\prod\limits_{i=-n}^n\dfrac{dH_{ii}}{\sqrt{2\pi J_{ii}}}e^{-\frac{H_{ii}^2}{2J_{ii}}}.
\end{equation}

Varying $W$, we can see that random band matrices are natural interpolations between random
Schr$\ddot{\hbox{o}}$dinger matrices $ H_{RS}=-\Delta+\lambda V$, in which the randomness
only appears in the diagonal potential $V$ ($\lambda$ is a small parameter which
measures the strength of the disorder) and
mean-field random matrices such as $N\times N$ Wigner matrices, i.e. Hermitian
random matrices with i.i.d elements. Moreover, random Schr$\ddot{\hbox{o}}$dinger matrices with parameter
$\lambda$ and RBM with the width of the band $W$ are expected to have some similar qualitative
properties when $\lambda\approx W^{-1}$
(for more details on these conjectures see \cite{Sp:12}).

The key physical parameter of these models is the localization length,
which describes the typical length scale of the eigenvectors of random matrices.
The system is called delocalized if the localization length $\ell$ is
comparable with the matrix size, and it is called localized otherwise.
Delocalized systems correspond to electric
conductors, and localized systems are insulators.

In the case of 1D RBM there is a physical conjecture (see \cite{Ca-Co:90, FM:91}) stating that $\ell$ is
of order $W^2$ (for the energy in the bulk of the spectrum), which means that varying $W$
we can see the crossover: for $W\gg \sqrt{N}$ the eigenvectors are expected to be
delocalized and for $W\ll \sqrt{N}$ they are localized. In terms of eigenvalues
this means that the local eigenvalue statistics in the bulk of the spectrum changes from Poisson,
for $W\ll \sqrt{N}$, to GUE
(Hermitian matrices with i.i.d Gaussian elements),
for $W\gg \sqrt{N}$. At the present time only some upper and lower
bounds for $\ell$ are proven rigorously. It is known from the paper \cite{S:09} that $\ell\le W^8$.
On the other side, in the resent papers \cite{EK:11, Yau:12} it was proven first that $\ell\gg W^{7/6}$,
and then that
$\ell\gg W^{5/4}$.

The questions of the order of the localization length are closely related to the universality conjecture
of the bulk local regime of the random matrix theory, which we briefly outline now.

Let $\lambda_1^{(N)},\ldots,\lambda_N^{(N)}$ be the eigenvalues of
$H_N$. Define their Normalized Counting Measure
(NCM) as
\begin{equation} \label{NCM}
\mathcal{N}_N(\sigma)=\sharp\{\lambda_j^{(N)}\in
\sigma,j=1,\ldots,N \}/N,\quad \mathcal{N}_N(\mathbb{R})=1,
\end{equation}
where $\sigma$ is an arbitrary interval of the real axis.
The behavior of $\mathcal{N}_N$ as $N\to\infty$ was studied for many ensembles.
For 1D RBM it was shown
in \cite{BMP:91, MPK:92} that $\mathcal{N}_{N}$ converges weakly, as $N,W\to\infty$, to a non-random measure
$\mathcal{\mathcal{N}}$, which is called the limiting NCM of the ensemble. The measure $\mathcal{N}$ is absolutely continuous
and its density $\rho$ is given by the well-known Wigner semicircle law (the same result
is valid for Wigner ensembles, in particular, for Gaussian ensembles GUE, GOE):
\begin{equation}\label{rho_sc}
\rho(\lambda)=\dfrac{1}{2\pi}\sqrt{4-\lambda^2},\quad \lambda\in[-2,2].
\end{equation}
Much more delicate result about the density of states at arbitrarily short scales is proven in
\cite{DPS:02} for $d=3$.

These results characterize the so-called global distribution of the eigenvalues.

The local regime deals with the behavior of eigenvalues of $N\times N$
random matrices on the intervals whose length is of the order of the mean distance between
nearest eigenvalues. The main objects of the local regime are $k$-point correlation functions
$R_k$ ($k=1,2,\ldots$), which can be defined by the equalities:
\begin{multline} \label{R}
\mathbf{E}\left\{ \sum_{j_{1}\neq ...\neq j_{k}}\varphi_k
(\lambda_{j_{1}}^{(N)},\dots,\lambda_{j_{k}}^{(N)})\right\}\\ =\int_{\mathbb{R}^{k}} \varphi_{k}
(\lambda_{1}^{(N)},\ldots,\lambda_{k}^{(N)})R_{k}(\lambda_{1}^{(N)},\ldots,\lambda_{k}^{(N)})
d\lambda_{1}^{(N)}\ldots d\lambda_{k}^{(N)},
\end{multline}
where $\varphi_{k}: \mathbb{R}^{k}\rightarrow \mathbb{C}$ is
bounded, continuous and symmetric in its arguments and the
summation is over all $k$-tuples of distinct integers $
j_{1},\dots,j_{k}\in\{1,\ldots,N\}$.

According to the Wigner -- Dyson universality conjecture (see e.g. \cite{Me:91}), the local behavior
of the eigenvalues does not depend on the matrix probability
law (ensemble) and is determined only by the symmetry type of matrices (real
symmetric, Hermitian, or quaternion real in the case of real eigenvalues and orthogonal,
unitary or symplectic in the case of eigenvalues on the unit circle).
For example, the conjecture states that for Hermitian random matrices in the bulk of the spectrum
and in the range of parameters for which the eigenvectors are
delocalized
\begin{multline}\label{Un}
\lim\limits_{N\to \infty}\displaystyle\frac{1}{(N\rho(\lambda_0))^k}
R_k\left(\lambda_0+\displaystyle\frac{\xi_1}{\rho(\lambda_0)\,N},
\ldots,\lambda_0+\displaystyle\frac{\xi_k}{\rho(\lambda_0)\,N}\right)\\
=\det \Big\{\dfrac{\sin \pi(\xi_i-\xi_j)}
{\pi(\xi_i-\xi_j)}\Big\}_{i,j=1}^k
\end{multline}
for any fixed $k$, and the limit is uniform
in $\xi_1, \xi_2,\ldots, \xi_k$ varying
in any compact set in $\mathbb{R}$. This means that the limit coincides with that for GUE.

In this language the conjecture about the crossover for 1D RBM states that we get (\ref{Un})
for $W\gg \sqrt{N}$ (which corresponds to delocalized states), and
we get another behavior, which is determined by the Poisson statistics, for $W\ll \sqrt{N}$
(and corresponds to localized states). For the general Hermitian Wigner matrices (i.e. $W=N$)
bulk universality (\ref{Un}) has been proved recently in \cite{EYY:10, TV:11}. However, in the general case
of RBM the question of bulk universality of local spectral statistics
is still open even for $d=1$.

Other more simple objects of the local regime of the random matrix theory are the
correlation functions (or the mixed moments) of characteristic polynomials.

Characteristic polynomials of random matrices have been actively studied
in the last years (see e.g.
\cite{BaD:03,Br-Hi:01,BorSt:06,St-Fy:03_1,Got-K:08,HKC:01,K-Sn:00,Me-Nor:01,TSh:11,TSh:11_1,St-Fy:03,Va:03}).
 The interest to this topic is stimulated by its connections to the
number theory, quantum chaos, integrable systems, combinatorics, representation
theory and others.

An additional source of motivation for the current work is the development of the supersymmetric method
(SUSY) in the context of random operators with non-trivial spatial structures. This method is widely used in
the physics literature (see e.g. \cite{Ef, M:00}) and is potentially very powerful but the rigorous
control of the
integral representations, which can be obtained by this method, is difficult and so far for the band matrices it
has been performed only for the density of states (see \cite{DPS:02}). From the SUSY point of view
characteristic polynomials correspond to the so-called fermionic sector of the supersymmetric full model,
which describes the correlation functions $R_k$. So the analysis of the local regime of correlation functions
of the characteristic polynomial is an important step
towards the proof of (\ref{Un}).

The correlation function of the characteristic polynomials is
\begin{equation}\label{F}
F_{2k}(\Lambda)=\intd \prod\limits_{s=1}^{2k}\mdet(\lambda_s-H_N)P_n(d\,H_N),
\end{equation}
where $P_n(d\,H_N)$ is defined in (\ref{band}),
and $\Lambda=\hbox{diag}\,\{\lambda_1,\ldots,\lambda_{2k}\}$ are real or complex parameters
that may depend on $N$.

The asymptotic local behavior in the bulk of the spectrum of the $2k$-point mixed moment
for GUE is well-known (see e.g. \cite{St-Fy:03}):
\begin{multline}\label{b_GUE}
F_{2k}\left(\Lambda_0+\hat{\xi}/N\rho(\lambda_0)\right)\\
=C_N\dfrac{\mdet
\Big\{\dfrac{\sin\pi(\xi_i-\xi_{j+k})}{\pi(\xi_i-\xi_{j+k})}
\Big\}_{i,j=1}^k}{\triangle(\xi_1,\ldots,\xi_k)\triangle(\xi_{k+1},\ldots,\xi_{2k})}\times
e^{\lambda_0(\xi_1+\ldots+\xi_{2k})/2\rho(\lambda_0)}(1+o(1)),
\end{multline}
where $\triangle(\xi_1,\ldots,\xi_k)$ is the
Vandermonde determinant of $\xi_1,\dots, \xi_k$, $\hat{\xi}=\hbox{diag}\,\{\xi_1,\ldots,\xi_{2k}\}$,
$\Lambda_0=\lambda_0\cdot I$.

The similar result for the $\beta$-ensembles with $\beta=2$ was obtained in \cite{Br-Hi:00, St-Fy:03}
(see the reference for the more precise statement). In the case of general Hermitian Wigner matrices it was
proven that constant $C_N$ depends only on the first four moments of the matrix elements distribution and does not
depend on any higher moments (see \cite{Got-K:08} for the case $k=1$ and \cite{TSh:11} for any $k$).
The same result was obtained for general Hermitian sample covariance matrices (see \cite{Kos:09} for
the case $k=1$ and \cite{TSh:11_1} for any $k$).
This
shows that the local regime of correlation functions of characteristic polynomials is universal up to the
first four moments.

In this paper we are interested in the asymptotic behavior of (\ref{F}) with $k=1$ for matrices (\ref{ban})
-- (\ref{band}) as $N,W\to\infty$, $W^2=N^{1+\theta}$, $0<\theta\le 1$ (i.e. $W\gg \sqrt{N}$), and for
\begin{equation*}
\lambda_j=\lambda_0+\dfrac{\xi_j}{N\rho(\lambda_0)},\quad j=1,2,
\end{equation*}
where $N=2n+1$, $\lambda_0\in (-2,2)$, $\rho$ is defined in (\ref{rho_sc}), and
$\widehat{\xi}=\hbox{diag}\,\{\xi_1,\xi_2\}$ are real parameters varying in any compact
set $K\subset \mathbb{R}$.

Set also
\begin{equation}\label{D_2}
D_2=\prod\limits_{l=1}^2F_2^{1/2}\Big(\lambda_0+\dfrac{\xi_l}{N\rho(\lambda_0)},\lambda_0+
\dfrac{\xi_l}{N\rho(\lambda_0)}\Big).
\end{equation}

 The main result of the paper is the following theorem:
\begin{theorem}\label{thm:1}
Consider the random matrices (\ref{ban}) -- (\ref{band}) with $W^2=N^{1+\theta}$, where $0<\theta\le 1$.
Define the second mixed moment $F_2$ of the characteristic polynomials as in (\ref{F}). Then we have
\begin{equation}\label{lim1}
\lim\limits_{n\to\infty}
D_2^{-1}F_{2}\Big(\Lambda_0+\hat{\xi}/(N\rho(\lambda_0))\Big)
=\dfrac{\sin(\pi(\xi_1-\xi_{2}))}{\pi(\xi_1-\xi_2)},
\end{equation}
and the limit is uniform in $\xi_1, \xi_2$ varying in any compact set $K\subset\mathbb{R}$. Here
$\rho(\lambda)$ and $D_2$ are defined in (\ref{rho_sc}) and (\ref{D_2}),
$\Lambda_0=\mathrm{diag}\,\{\lambda_0,\lambda_0\}$,
$\lambda_0\in (-2,2)$, $\hat{\xi}=\mathrm{diag}\,\{\xi_1,\xi_2\}$.
\end{theorem}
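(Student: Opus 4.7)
\emph{Integral representation.} Since $F_2$ involves only products of determinants, my first step is to derive a purely fermionic SUSY representation. Write each factor $\mdet(\lambda_s-H_N)$, $s=1,2$, as a Gaussian Grassmann integral, average over $H_N$ using (\ref{ban}), and decouple the resulting quartic fermion action by a Hubbard--Stratonovich transformation introducing Hermitian $2\times 2$ matrices $X_j$ at each site $j\in\{-n,\ldots,n\}$. Integrating out the Grassmann variables yields a representation of the form
\begin{equation*}
F_2(\Lambda)=C_n\intd\prod_{j=-n}^n dX_j\,\exp\!\Bigl(-\tfrac12\sum_{i,j}(J^{-1})_{ij}\Tr X_iX_j+\sum_j\log\mdet(\Lambda+iX_j)\Bigr).
\end{equation*}
The tridiagonal structure of $J^{-1}=-W^2\Delta+1$ turns the quadratic term into a lattice nearest-neighbour action $\tfrac{W^2}{2}\sum_j\Tr(X_{j+1}-X_j)^2+\tfrac12\sum_j\Tr X_j^2$, which is the reason behind the convenient choice (\ref{J}).

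\emph{Saddle-point analysis.} The on-site action $\mathcal{A}_0(X)=\tfrac12\Tr X^2-\log\mdet(\lambda_0+iX)$ has critical equation $X=i(\lambda_0+iX)^{-1}$, so the eigenvalues of $iX$ satisfy $\mu^2+\lambda_0\mu+1=0$, giving $\mu\in\{a_+,a_-\}$ with $a_\pm=-\lambda_0/2\pm i\pi\rho(\lambda_0)$. This produces two scalar saddles with $iX=a_+I$ or $iX=a_-I$ (whose contribution is precisely cancelled when one forms the ratio $F_2/D_2$) together with a compact saddle manifold $\mathcal{M}=\{U\,\mathrm{diag}(a_+,a_-)\,U^{-1}/i:U\in U(2)\}\simeq U(2)/(U(1)\times U(1))$. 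After deforming the Hermitian contour for $X_j$ to pass through $\mathcal{M}$, I expand the action to quadratic order around constant configurations $X_j\equiv X_*\in\mathcal{M}$: every non-zero spatial Fourier mode of momentum $p$ acquires an effective mass $\gtrsim W^2p^2$ from the kinetic term, and the condition $W^2=N^{1+\theta}$ ensures that the corresponding Gaussian fluctuations contribute an $X_*$-independent factor to leading order. The residual finite-dimensional integral over $\mathcal{M}$, with the small perturbation $\widehat\xi/(N\rho(\lambda_0))$ entering the logarithmic terms, is then identical (after rescaling) to the single-matrix GUE calculation underlying (\ref{b_GUE}) with $k=1$, and yields the sine kernel on the right-hand side of (\ref{lim1}).

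\emph{Main obstacle.} The central difficulty is the rigorous implementation of the saddle-point analysis above. One must (i) deform the original Hermitian contour for $X_j$ to a contour passing through the complex manifold $\mathcal{M}$ while keeping the real part of the action bounded below uniformly in $n$; (ii) prove that configurations outside a shrinking tubular neighbourhood of the constant-saddle manifold give a negligible contribution, despite the flat directions along $\mathcal{M}$ itself and the large number of spatial sites; and (iii) show that the $n$-dependent product of Gaussian fluctuation determinants over non-constant modes factorizes uniformly in $X_*\in\mathcal{M}$ and in $\xi_1,\xi_2$ on compact sets. Step (iii) is exactly where the hypothesis $W^2\gg N$ enters: it guarantees that every non-zero spatial Fourier mode is stiff enough to permit a perturbative treatment. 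The most delicate estimates arise in reconciling the compact zero-mode integral over $\mathcal{M}$ with the Vandermonde-type singularity of the effective GUE density, which must combine to produce the sine kernel rather than a divergent or trivial answer.
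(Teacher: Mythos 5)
Your outline correctly reproduces the fermionic SUSY representation and identifies the saddle manifold, and the obstacles you list are real ones, but the plan has several gaps that the paper's argument is specifically built to close and that your Fourier-mode sketch does not. First, your step (i) (deforming the Hermitian contour into the complex saddle manifold $\mathcal{M}$) is avoided altogether in the paper by shifting $X_j\to X_j+i\Lambda_0/2+i\hat\xi/(N\rho(\lambda_0))$ \emph{before} doing any contour analysis; after this shift the eigenvalue saddles are the \emph{real} numbers $a_\pm=\pm\pi\rho(\lambda_0)$, so $\mathcal{M}$ lies on the original integration contour. If you skip this shift you face a genuinely hard uniform contour-deformation problem that the paper never needs to solve. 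Second, your claim that ``every non-zero Fourier mode is stiff enough to permit a perturbative treatment'' is too optimistic: the mode with momentum $p$ has effective mass $1+W^2p^2$, so modes with $p\lesssim 1/W$ only have mass $O(1)$, and there are $\sim N/W=N^{(1-\theta)/2}\to\infty$ of them when $\theta<1$. Correspondingly, the eigenvalue fluctuations around the saddle are of order $W^{-1/2}$ per site (the lattice Green's function has $\langle\tilde a_j^2\rangle\sim W$), so the cubic correction $\varphi_\pm(\tilde a_j/W)\sim W^{-3/2}$ per site does \emph{not} trivially sum to $o(1)$ over $N$ sites; showing that $\langle\exp\{\sum_j\varphi_\pm(\tilde a_j/W)\}\rangle\to1$ requires the full Wick-theorem/graph-counting machinery of Lemmas~\ref{l:s_Wick} and~\ref{l:in}, which your ``expand to quadratic order'' step silently assumes.

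Third, the off-critical term $\exp\{-\tfrac{i}{N\rho}\sum_j\Tr(P_jU_{-n})^*A_j(P_jU_{-n})\hat\xi\}$, which couples every angle $V_s$ to every eigenvalue pair $(a_j,b_j)$, is not a harmless ``small perturbation entering the logarithmic terms''; proving that to leading order it can be replaced by its value at $V_s=I$, $A_j=L$ is the content of the separate Lemma~\ref{l:un}, which requires a delicate majorant argument in the product of complex Gaussian measures and the unitary group (Lemma~\ref{l:maj}) together with the explicit Harish--Chandra/Itzykson--Zuber formula. More broadly, the paper does not expand in spatial Fourier modes at all; it diagonalizes $X_j=U_j^*A_jU_j$, passes to relative angles $V_j=U_jU_{j-1}^*$, performs the $V_j$ integrals exactly via Itzykson--Zuber, and then controls the remaining eigenvalue integral by the complex-Gaussian-measure analysis of Section~4.2. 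That route sidesteps both the contour-deformation issue and the accumulation-of-soft-modes issue, which are precisely the points where your sketch would get stuck. In short, your ``Main obstacle'' paragraph names the right difficulties but underestimates two of them and does not offer the ideas (the real-saddle shift, the Wick-theorem majorant scheme, the exact Itzykson--Zuber integration) that the paper uses to overcome them.
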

The theorem shows that the limit above for the second mixed moment of characteristic polynomials for
1D Gaussian random band matrices (with $W^2=N^{1+\theta}$, $0<\theta\le 1$) coincides with that for the
Gaussian unitary ensemble, i.e. the local behavior of the second mixed moment in the bulk of the spectrum is
universal. In the case $W\ll \sqrt{N}$ the limit is expected to be different from (\ref{lim1}), but we will not
discuss it in this paper.

The paper is organized as follows. In Section $2$ we obtain a convenient integral
representation for $F_{2}$,
using the integration
over the Grassmann variables. The method is a generalization of that of \cite{Br-Hi:00,Br-Hi:01} and is an
analog of the method of \cite{TSh:11,TSh:11_1}, where the Hermitian Wigner and general sample covariance matrices
were considered.
In Section $3$ we give the sketch of the proof of Theorem
\ref{thm:1}. Section $4$ deals with the most important preliminary results needed for the proof.
In Section $5$ we prove Theorem \ref{thm:1}, applying the steepest descent method
to the integral representation. Section 6 is devoted to the proofs of the auxiliary statements.

\subsection{Notation}
We denote by $C$, $C_1$, etc. various $W$ and $N$-independent quantities below, which
can be different in different formulas. Integrals
without limits denote the integration (or the multiple integration) over the whole
real axis, or over the Grassmann variables.

Moreover,
\begin{itemize}
\item $N=2n+1;$

\item $J=(-W^2\Delta+1)^{-1};$

\item $\mathbf{E}\big\{\ldots\big\}$ is an expectation with respect to the measure (\ref{band});

\item $U_\varepsilon(x)=(x-\varepsilon,x+\varepsilon)\subset \mathbb{R};$
\end{itemize}

\begin{fleqn}[16pt]
 \begin{equation}\label{a_pm}
\bullet \,\,\, a_{\pm}=\pm\dfrac{\sqrt{4-\lambda_0^2}}{2}=\pm\pi\rho(\lambda_0),\quad \overline{a}_\pm
=(a_\pm,\ldots,a_\pm)\in \mathbb{R}^N,
\end{equation}
\quad \quad \,where $\rho$ is defined in (\ref{rho_sc});

\begin{itemize}
\item $
\Lambda_0=\left(\begin{array}{cc}
\lambda_0&0\\
0& \lambda_0
\end{array}\right),\quad
\Lambda=\left(\begin{array}{cc}
\lambda_1&0\\
0& \lambda_2
\end{array}\right),\quad
\hat{\xi}=\left(\begin{array}{cc}
\xi_1&0\\
0& \xi_2
\end{array}\right),\quad
L=\left(\begin{array}{cc}
a_+&0\\
0& a_-
\end{array}\right);
$

\item $d\mu$ is the Haar measure on $U(2)$;

\end{itemize}

\begin{align}\label{f}
\bullet \,\,\,\, &f(x)=(x+i\lambda_0/2)^2/2-\log(x-i\lambda_0/2),\\ \notag
\quad\, &f_*(x)=\Re(f(x)-f(a_\pm));
\end{align}

\begin{itemize}
\item $\Omega_\delta$ is a union of
\begin{align} \label{Om_delta}
 \Omega^+_\delta&=\{ \{a_j\}, \{b_j\}: a_j, b_j\in U_\delta(a_+) \,\,\forall j\},\\ \notag
 \Omega^-_\delta&=\{ \{a_j\}, \{b_j\}: a_j, b_j\in U_\delta(a_-)\,\,\forall j\},\\ \notag
\Omega^\pm_\delta&=\{ \{a_j\}, \{b_j\}: (a_j\in U_\delta(a_+),\,\, b_j\in U_\delta(a_-))\\ \notag
&\quad\quad\quad\quad\quad\quad\quad\quad\quad\quad\quad\quad\quad\,\,\hbox{or}\,\,
(a_j\in U_\delta(a_-),\, b_j\in U_\delta(a_+))\,\,\forall j\},
\end{align}
where $\delta=W^{-\kappa}$ and $\kappa<\theta/8$.

\item $\Sigma$ is an integral over $\Omega_\delta$, $\Sigma_c$ is an integral over its complement,
and $\Sigma_\pm$, $\Sigma_+$ and $\Sigma_-$ are integrals over $\Omega^\pm_\delta$,
$\Omega^+_\delta$ and $\Omega^-_\delta$.
\end{itemize}

\begin{align} \label{c_pm}
\bullet \,\,\, c_\pm&=1-\dfrac{\lambda_0^2}{4}\pm
\dfrac{i\lambda_0}{2}\cdot \sqrt{1-\lambda_0^2/4},\quad c_0=\Re f(a_+);
\end{align}

\begin{align}\label{mu}
\bullet \,\,\, \mu_{\gamma}(x)=\exp\big\{-\frac{1}{2}\sum\limits_{j=-n+1}^{n}(x_j-x_{j-1})^2
-\frac{\gamma}{W^2}\sum\limits_{j=-n}^{n}x_j^2\big\};
\end{align}

\begin{align}\label{angle}
\bullet \,\,\, &\langle \ldots \rangle_0=Z_{\delta,\gamma}^{-1}\intd_{-\delta W}^{\delta W} (\ldots) \cdot
\mu_{\gamma}(x)\prod\limits_{q=-n}^nd x_q,& Z_{\delta,\gamma}&=
\intd_{-\delta W}^{\delta W} \mu_{\gamma}(x) \prod\limits_{q=-n}^nd x_q,\\ \notag
&\langle \ldots \rangle=Z^{-1}_\gamma\intd (\ldots) \cdot \mu_{\gamma}(x) \prod\limits_{q=-n}^nd x_q, & Z_\gamma&=
\intd \mu_{\gamma}(x) \prod\limits_{q=-n}^nd x_q,
\end{align}
\end{fleqn}
\quad \quad \,where $\delta>0$ and $\gamma\in \mathbb{C}$, $\Re\gamma>0$;
\begin{itemize}
\item $\langle \ldots \rangle_*$
(and $\langle \ldots \rangle_{0,*}$) is (\ref{angle}) with $\mu_{\Re \gamma}(x)$
instead of $\mu_{\gamma}(x)$.
\end{itemize}

\section{Integral representation}
In this section we obtain an integral representation for $F_{2}$ of (\ref{F}) by using integration
over the Grassmann variables. This method allows us to obtain
the formula for the product of characteristic
polynomials, which is very useful for the
averaging because it is a Gaussian-type integral (see the formula (\ref{G_Gr}) below).
After averaging over the probability measure we can integrate over the Grassmann variables to get
an integral representation (in complex variables) which can be studied by the steepest descent method.

Integration over the Grassmann variables has been introduced by Berezin and is widely used in the physics
literature (see e.g.
\cite{Ber, Ef, M:00}). For the reader's convenience we give a brief outline of the techniques.

\subsection{Grassmann integration}
Let us consider two sets of formal variables
$\{\psi_j\}_{j=1}^n,\{\overline{\psi}_j\}_{j=1}^n$, which satisfy the anticommutation
conditions
\begin{equation}\label{anticom}
\psi_j\psi_k+\psi_k\psi_j=\overline{\psi}_j\psi_k+\psi_k\overline{\psi}_j=\overline{\psi}_j\overline{\psi}_k+
\overline{\psi}_k\overline{\psi}_j=0,\quad j,k=1,\ldots,n.
\end{equation}
Note that this definition implies $\psi_j^2=\overline{\psi}_j^2=0$.
These two sets of variables $\{\psi_j\}_{j=1}^n$ and $\{\overline{\psi}_j\}_{j=1}^n$ generate the Grassmann
algebra $\mathfrak{A}$. Taking into account that $\psi_j^2=0$, we have that all elements of $\mathfrak{A}$
are polynomials of $\{\psi_j\}_{j=1}^n$ and $\{\overline{\psi}_j\}_{j=1}^n$ of degree at most one
in each variable. We can also define functions of
the Grassmann variables. Let $\chi$ be an element of $\mathfrak{A}$, i.e.
\begin{equation}\label{chi}
\chi=a+\sum\limits_{j=1}^n (a_j\psi_j+ b_j\overline{\psi}_j)+\sum\limits_{j\ne k}
(a_{j,k}\psi_j\psi_k+
b_{j,k}\psi_j\overline{\psi}_k+
c_{j,k}\overline{\psi}_j\overline{\psi}_k)+\ldots.
\end{equation}
For any
sufficiently smooth function $f$ we define by $f(\chi)$ the element of $\mathfrak{A}$ obtained by substituting $\chi-a$
in the Taylor series of $f$ at the point $a$. Since $\chi$ is a polynomial of $\{\psi_j\}_{j=1}^n$,
$\{\overline{\psi}_j\}_{j=1}^n$ of the form (\ref{chi}), according to (\ref{anticom}), there exists such
$l$ that $(\chi-a)^l=0$, and hence the series terminates after a finite number of terms, and so $f(\chi)\in \mathfrak{A}$.

For example, we have
\begin{align}\notag
 &\exp\{a\,\overline{\psi}_1\psi_1\}=1+a\,\overline{\psi}_1\psi_1+(a\,\overline{\psi}_1\psi_1)^2/2+\ldots
 =1+a\,\overline{\psi}_1\psi_1,\\ \notag
&\exp\{a_{11}\overline{\psi}_1\psi_1+a_{12}\overline{\psi}_1\psi_2+
a_{21}\overline{\psi}_2\psi_1+a_{22}\overline{\psi}_2\psi_2\}=1+ a_{11}\overline{\psi}_1\psi_1\\
\label{ex_12} &+a_{12}\overline{\psi}_1\psi_2+ a_{21}\overline{\psi}_2\psi_1+a_{22}\overline{\psi}_2\psi_2+
(a_{11}\overline{\psi}_1\psi_1+a_{12}\overline{\psi}_1\psi_2\\ \notag &+
a_{21}\overline{\psi}_2\psi_1+a_{22}\overline{\psi}_2\psi_2)^2/2+\ldots=1+
a_{11}\overline{\psi}_1\psi_1+a_{12}\overline{\psi}_1\psi_2+ a_{21}\overline{\psi}_2\psi_1\\ \notag
&+a_{22}\overline{\psi}_2\psi_2+(a_{11}a_{22}-a_{12}a_{21})\overline{\psi}_1\psi_1\overline{\psi}_2\psi_2.
\end{align}
Following Berezin \cite{Ber}, we define the operation of
integration with respect to the anticommuting variables in a formal
way:
\begin{equation}\label{int_gr}
\intd d\,\psi_j=\intd d\,\overline{\psi}_j=0,\quad \intd
\psi_jd\,\psi_j=\intd \overline{\psi}_jd\,\overline{\psi}_j=1,
\end{equation}
and then extend the definition to the general element of $\mathfrak{A}$ by
the linearity. A multiple integral is defined to be a repeated
integral. Assume also that the ``differentials'' $d\,\psi_j$ and
$d\,\overline{\psi}_k$ anticommute with each other and with the
variables $\psi_j$ and $\overline{\psi}_k$.

Thus, according to the definition, if
$$
f(\psi_1,\ldots,\psi_k)=p_0+\sum\limits_{j_1=1}^k
p_{j_1}\psi_{j_1}+\sum\limits_{j_1<j_2}p_{j_1j_2}\psi_{j_1}\psi_{j_2}+
\ldots+p_{1,2,\ldots,k}\psi_1\ldots\psi_k,
$$
then
\begin{equation}\label{int}
\intd f(\psi_1,\ldots,\psi_k)d\,\psi_k\ldots d\,\psi_1=p_{1,2,\ldots,k}.
\end{equation}

 Let $A$ be an ordinary matrix with a positive Hermitian part. The following Gaussian
integral is well-known:
\begin{equation}\label{G_C}
\intd \exp\Big\{-\sum\limits_{j,k=1}^nA_{j,k}z_j\overline{z}_k\Big\} \prod\limits_{j=1}^n\dfrac{d\,\Re
z_jd\,\Im z_j}{\pi}=\dfrac{1}{\mdet A}.
\end{equation}
One of the important formulas of the Grassmann variables theory is the analog of (\ref{G_C}) for the
Grassmann variables (see \cite{Ber}):
\begin{equation}\label{G_Gr}
\int \exp\Big\{-\sum\limits_{j,k=1}^nA_{j,k}\overline{\psi}_j\psi_k\Big\}
\prod\limits_{j=1}^nd\,\overline{\psi}_jd\,\psi_j=\mdet A,
\end{equation}
where $A$ now is any $n\times n$ matrix.

For $n=1$ and $n=2$ this formula follows immediately from (\ref{ex_12}) and (\ref{int}).

Also we will need the Hubbard-Stratonovich transform (see e.g. \cite{Sp:12}).
This is a well-known simple trick, which
is just the Gaussian integration. In the simplest form it looks as following:
\begin{equation}\label{Hub}
e^{a^2/2}=(2\pi)^{-1/2} \int e^{-x^2/2+ax}dx.
\end{equation}
Here $a$ can be complex number or the sum of the products of even numbers of Grassmann variables.


\subsection{Formula for $F_{2}$}
\begin{lemma}\label{l:int_repr}
The second mixed moment of the characteristic polynomials for 1D Hermitian Gaussian band
matrices, defined in (\ref{F}), can be represented as follows:
\begin{align}\label{F_rep}
&F_2\Big(\Lambda_0+\dfrac{\hat{\xi}}{N\rho(\lambda_0)}\Big)=-(2\pi^2)^{-N}\mdet^{-2} J
\int\exp\Big\{-\frac{W^2}{2}\sum\limits_{j=-n+1}^n\Tr
(X_j-X_{j-1})^2\Big\}\\ \notag
&\times\exp\Big\{-
\frac{1}{2}\sum\limits_{j=-n}^n \Tr\Big(X_j+\frac{i\Lambda_0}{2}+
\frac{i\hat{\xi}}{N\rho(\lambda_0)}\Big)^2\Big\}\prod\limits_{j=-n}^n
\det\big(X_j-i\Lambda_0/2\big)\prod\limits_{j=-n}^ndX_j,
\end{align}
where $\{X_j\}$ are $2\times 2$ Hermitian matrices and
\begin{equation}\label{dX}
dX_j=d(X_{j})_{11}d(X_{j})_{22}d\Re (X_{j})_{12}d\Im (X_{j})_{12}.
\end{equation}
Moreover, this formula can be rewritten in the form
\begin{align}\label{F_0_1}
F_2\Big(\Lambda_0+&\dfrac{\hat{\xi}}{N\rho(\lambda_0)}\Big)=-\dfrac{C(\xi)\mdet^{-2}J}{(4\pi)^{N}}
\intd\limits\exp\Big\{-\frac{W^2}{2}\sum\limits_{j=-n+1}^n\Tr (V_j^*A_jV_j-A_{j-1})^2\Big\}\\ \notag
&\times \exp\Big\{-\sum\limits_{j=-n}^n(f(a_j)+f(b_j))-
\frac{i}{N\rho(\lambda_0)}\sum\limits_{j=-n}^n\Tr \big(P_jU_{-n}\big)^*A_j\,
(P_jU_{-n}\big)\hat{\xi}\Big\}
\\ \notag
&\times\prod\limits_{l=-n}^n(a_l-b_l)^2d\,\mu(U_{-n})\,d\overline{a\vphantom{b}}\, d\overline{b}\,
\prod\limits_{q=-n+1}^nd\mu(V_q),
\end{align}
where $f$ is defined in (\ref{f}),
$A_j=\mathrm{diag}\{a_j,b_j\}$, $\{V_j\}$ and $U_{-n}$ are $2\times 2$ unitary matrices,
$d\mu(U)$ is the Haar measure on $U(2)$, and
\begin{equation}\label{P_k}
P_k=\prod\limits_{s=k}^{-n+1}V_s,\quad C(\xi)=\exp\Big\{\dfrac{\lambda_0(\xi_1+\xi_2)}{2\rho(\lambda_0)}+
\dfrac{\xi_1^2+\xi_2^2}{2N\rho(\lambda_0)^2}\Big\}.
\end{equation}
\end{lemma}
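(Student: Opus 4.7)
The plan is to use the fermion-only supersymmetric method (no bosonic sector is needed, since we have a product rather than a ratio of characteristic polynomials). The two parts of the lemma, the matrix representation (\ref{F_rep}) and its angular/eigenvalue form (\ref{F_0_1}), will be established in two stages.

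\textbf{Step 1: Proving (\ref{F_rep}).} First, for each $s=1,2$, I would represent $\mdet(\lambda_s-H_N)$ as a Grassmann Gaussian integral via (\ref{G_Gr}), using two independent sets $\{\psi_j^{(s)},\overline{\psi}_j^{(s)}\}_{j=-n}^n$. Since the resulting exponent is linear in the matrix elements $H_{jk}$, the subsequent average with respect to (\ref{band}) is an ordinary Gaussian integral in $H$, and by the covariance (\ref{ban}) it produces the quartic Grassmann factor
\[
 \exp\Big(\tfrac12\sum_{j,k}J_{jk}M_{jk}M_{kj}\Big),\qquad M_{jk}=\sum_{s=1}^2\overline{\psi}_j^{(s)}\psi_k^{(s)}.
\]
A short rearrangement by anticommutation recasts $M_{jk}M_{kj}$ as $\Tr(\tau_j\tau_k)$, where $\tau_j$ is the $2\times 2$ Grassmann matrix with entries $(\tau_j)_{st}=\overline{\psi}_j^{(s)}\psi_j^{(t)}$. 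Now the Hubbard--Stratonovich identity (\ref{Hub}), in its $2\times 2$ Hermitian-matrix version, introduces auxiliary matrices $X_j$ that linearize this quartic; because $J^{-1}=-W^2\Delta+1$ is tridiagonal with the Neumann boundary conditions (\ref{lapl}), the resulting Gaussian in $X$ is precisely the nearest-neighbour form $\tfrac{W^2}{2}\sum\Tr(X_j-X_{j-1})^2+\tfrac12\sum\Tr(X_j^2)$. The Grassmann integral, now on-site Gaussian in $iX_j-\Lambda$, is evaluated by (\ref{G_Gr}) and yields a product of $2\times 2$ determinants over the sites. Finally, shifting $X_j\mapsto X_j-i\Lambda_0/2$ absorbs part of $i\Lambda$ into the Gaussian weight, places the perturbation $i\hat\xi/(N\rho(\lambda_0))$ linearly in the exponent, and produces exactly (\ref{F_rep}).

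\textbf{Step 2: From (\ref{F_rep}) to (\ref{F_0_1}).} I would diagonalize each $X_j=U_j^*A_jU_j$ with $A_j=\mathrm{diag}(a_j,b_j)$ and $U_j\in U(2)$; the on-site Haar measure contributes the Jacobian $(a_j-b_j)^2$. Because $\Lambda_0=\lambda_0 I$ is scalar, the on-site weight $-\tfrac12\Tr(X_j+i\Lambda_0/2)^2+\log\mdet(X_j-i\Lambda_0/2)$ is $U_j$-invariant and collapses to $-f(a_j)-f(b_j)$ by the very definition (\ref{f}) of $f$. The coupling $\Tr(X_j-X_{j-1})^2$ depends on $U_j,U_{j-1}$ only through the relative rotation $U_jU_{j-1}^*$, so I would fix a single global Haar factor $U_{-n}$ and parametrize $U_j=V_j V_{j-1}\cdots V_{-n+1}U_{-n}=P_jU_{-n}$, with $P_j$ as in (\ref{P_k}); then, in the frame diagonalizing $X_{j-1}$, $\Tr(X_j-X_{j-1})^2=\Tr(V_j^*A_jV_j-A_{j-1})^2$, matching the kinetic term in (\ref{F_0_1}). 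The $\hat\xi$-linear coupling transforms into $\Tr((P_jU_{-n})^*A_j(P_jU_{-n})\hat\xi)$ as required.

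\textbf{Main obstacle: the constant $C(\xi)$ and the bookkeeping.} The prefactor $C(\xi)$ of (\ref{P_k}) must be extracted from the $X$-independent pieces obtained by expanding $\tfrac12\sum_j\Tr(X_j+i\Lambda_0/2+i\hat\xi/(N\rho(\lambda_0)))^2$: the cross term $\Tr((i\Lambda_0/2)(i\hat\xi/(N\rho(\lambda_0))))$ summed over the $N$ sites yields the linear piece $\lambda_0(\xi_1+\xi_2)/(2\rho(\lambda_0))$, and $\tfrac12\Tr(\hat\xi^2)/(N\rho(\lambda_0))^2$ summed over $N$ sites yields $(\xi_1^2+\xi_2^2)/(2N\rho(\lambda_0)^2)$. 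I expect the hard part to be not conceptual but a careful tracking of signs and constants: the sign in the Grassmann rearrangement $M_{jk}M_{kj}\to\Tr(\tau_j\tau_k)$, the compound prefactors $-(2\pi^2)^{-N}\mdet^{-2}J$ and $-(4\pi)^{-N}C(\xi)\mdet^{-2}J$ arising from the Grassmann, complex-Gaussian and Haar-measure normalizations, and the reverse-order convention $P_k=\prod_{s=k}^{-n+1}V_s$ in (\ref{P_k}).
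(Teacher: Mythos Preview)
Your proposal is correct and follows essentially the same route as the paper: Grassmann representation of the determinants, Gaussian averaging over $H$ to produce the quartic, Hubbard--Stratonovich with $2\times2$ Hermitian $X_j$ using the tridiagonality of $J^{-1}$, Grassmann integration back to determinants, a contour shift, and then spectral decomposition $X_j=U_j^*A_jU_j$ followed by the change to relative rotations $V_j=U_jU_{j-1}^*$. One small slip: to land exactly on (\ref{F_rep}) the shift must be by the full $i\Lambda_0/2+i\hat\xi/(N\rho(\lambda_0))$, not just $i\Lambda_0/2$, since the determinant in (\ref{F_rep}) is $\det(X_j-i\Lambda_0/2)$ with no $\hat\xi$; your later bookkeeping of $C(\xi)$ already reflects this, so it is only a wording issue.
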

\begin{remark}
Formula (\ref{F_rep}) is valid for any dimension if we change the sum $\sum\Tr
(X_j-X_{j-1})^2$ to $\sum\Tr
(X_j-X_{j^\prime})^2$, where the last sum runs over all pairs of nearest neighbor
$j, j^\prime$ in the volume $\mathcal{L}\subset \mathbb{Z}^d$ (see the definition of RBM
(\ref{ban}) -- (\ref{J})).
\end{remark}
\begin{proof}
Using (\ref{G_Gr}) we obtain
\begin{equation}\label{ne_usr}
\begin{array}{c}
F_2(\Lambda)={\bf E}\bigg\{\displaystyle\int
e^{-\sum\limits_{\alpha=1}^2\sum\limits_{j,k=-n}^n(\lambda_\alpha-H_N)_{jk}
\overline{\psi}_{j\alpha}\psi_{k\alpha}}\prod\limits_{\alpha=1}^2\prod\limits_{q=-n}^n
d\,\overline{\psi}_{q\alpha}d\,\psi_{q\alpha}\bigg\}\\
={\bf E}\bigg\{\displaystyle\int e^{-\sum\limits_{\alpha=1}^2\lambda_\alpha\sum\limits_{p=-n}^n
\overline{\psi}_{p\alpha}\psi_{p\alpha}} \exp\bigg\{\sum\limits_{j<k}\sum\limits_{\alpha=1}^2\Big(\Re
H_{jk}\cdot(\overline{\psi}_{j\alpha}\psi_{k\alpha}
+\overline{\psi}_{k\alpha}\psi_{j\alpha})\\
+i\Im H_{jk}\cdot(\overline{\psi}_{j\alpha}\psi_{k\alpha}-\overline{\psi}_{k\alpha}
\psi_{j\alpha})\Big)+\sum\limits_{j=-n}^nH_{jj}\cdot\sum\limits_{\alpha=1}^2\overline{\psi}_{j\alpha}
\psi_{j\alpha}\bigg\}
\prod\limits_{\alpha=1}^2\prod\limits_{q=-n}^nd\,\overline{\psi}_{q\alpha}d\,\psi_{q\alpha}\bigg\},
\end{array}
\end{equation}
where $\{\psi_{j\alpha}\}$, $j=-n,\ldots,n$, $\alpha=1,2$ are the Grassmann variables ($2n+1$ variables for each
determinant in (\ref{F})). Here and below we use Greek letters like $\alpha, \beta$ etc. for
the field index and Latin letters $j, k$ etc. for the position index.

Integrating over the measure (\ref{band}) we get
\begin{align}\label{usr}
&F_2(\Lambda)=\displaystyle\int \prod\limits_{\alpha=1}^2\prod\limits_{q=-n}^nd\,\overline{\psi}_{q\alpha}
d\,\psi_{q\alpha}
\exp\Big\{-\sum\limits_{\alpha=1}^2\lambda_\alpha\sum\limits_{p=-n}^n \overline{\psi}_{p\alpha}
\psi_{p\alpha}\Big\}\\ \notag\times
\exp&\Big\{\sum\limits_{j<k}J_{jk}(\overline{\psi}_{j1}\psi_{k1}+\overline{\psi}_{j2}\psi_{k2})
(\overline{\psi}_{k1}\psi_{j1}+\overline{\psi}_{k2}\psi_{j2})+\sum\limits_{j=-n}^n\dfrac{J_{jj}}{2}
(\overline{\psi}_{j1}\psi_{j1}+\overline{\psi}_{j2}\psi_{j2})^2\Big\}.
\end{align}
Applying a couple of times the Hubbard-Stratonovich transform (\ref{Hub}), we get:
\begin{multline}\label{HS}
\intd\exp\Big\{-\dfrac{1}{2}\sum\limits_{jk}J^{-1}_{jk}\Tr
X_jX_k-i\sum\limits_{j}(\overline{\psi}_{j1},\overline{\psi}_{j2})X_j
\binom{\psi_{j1}}{\psi_{j2}}\Big\}\prod\limits_{j=-n}^ndX_j\\
=(2\pi^2)^{N}\mdet^2 J\cdot
\exp\Big\{\dfrac{1}{2}\sum\limits_{j,k}J_{jk}(\overline{\psi}_{j1}\psi_{k1}+\overline{\psi}_{j2}\psi_{k2})
(\overline{\psi}_{k1}\psi_{j1}+\overline{\psi}_{k2}\psi_{j2})\Big\},
\end{multline}
where $X_j$ is Hermitian $2\times 2$ matrix and $dX_j$ is defined in (\ref{dX}).

Substituting this and (\ref{J}) for $J^{-1}_{jk}$ into (\ref{usr}), putting
$\Lambda=\Lambda_0+\dfrac{\hat{\xi}}{N\rho(\lambda_0)}$, and using (\ref{G_Gr}) to
integrate over the Grassmann variables, we obtain
\begin{align*}\notag
&F_2\Big(\Lambda_0+\dfrac{\hat{\xi}}{N\rho(\lambda_0)}\Big)=(2\pi^2)^{-N}\mdet^{-2} J
\int\exp\Big\{-\frac{W^2}{2}\sum\limits_{j=-n+1}^n\hbox{Tr}\,(X_j-X_{j-1})^2\Big\}\\
&\times \exp\Big\{-
\frac{1}{2}\sum\limits_{j=-n}^n\Tr X_j^2\Big\}\prod\limits_{j=-n}^n
\det\big(iX_j+\Lambda_0+\hat{\xi}/N\rho(\lambda_0)\big)\prod\limits_{j=-n}^ndX_j\\
&=-(2\pi^2)^{-N}\mdet^{-2} J\int\exp\Big\{-\frac{W^2}{2}\sum\limits_{j=-n+1}^n\hbox{Tr}\,(X_j-X_{j-1})^2-
\frac{1}{2}\sum\limits_{j=-n}^n\Tr X_j^2\Big\}\\
&\times\prod\limits_{j=-n}^n
\det\big(X_j-i\Lambda_0-i\hat{\xi}/N\rho(\lambda_0)\big)\prod\limits_{j=-n}^ndX_j,
\end{align*}
which gives (\ref{F_rep}) after shifting $X_j\to X_j+i\Lambda_0/2+i\hat{\xi}/N\rho(\lambda_0)$. The reason
of such a shift is that we need to have saddle-points lying on the contour of the integration (see (\ref{a_pm})
below).

Let us change the variables to $X_j=U_j^*A_jU_j$, where $U_j$ is a unitary matrix and
$A_j=\hbox{diag}\,\{a_j,b_j\}$, $j=-n,\ldots,n$. Then $d X_j$ of (\ref{dX}) becomes (see e.g. \cite{Me:91}, Section 3.3)
$$\dfrac{\pi}{2}(a_j-b_j)^2da_j\,db_j
d\mu(U_j),$$ where $d\mu(U_j)$ is the normalized to unity Haar measure on the unitary group $U(2)$. Thus, we have
\begin{align*}
F_2\Big(\Lambda_0+\dfrac{\hat{\xi}}{N\rho(\lambda_0)}\Big)=&-\dfrac{C(\xi)\mdet^{-2}J }{(4\pi)^{N}}
\int\,d\overline{a\vphantom{b}}\, d\overline{b}\int\limits_{U(2)^N}\,\prod\limits_{j=-n}^nd\mu(U_j)\\
&\times\exp\Big\{-\frac{W^2}{2}\sum\limits_{j=-n+1}^n\Tr (U_j^*A_jU_j-
U_{j-1}^*A_{j-1}U_{j-1})^2\Big\}\\
&\times
\exp\Big\{-\frac{1}{2}\sum\limits_{j=-n}^n\Tr \Big(A_j+\frac{i\Lambda_0}{2}\Big)^2-
\dfrac{i}{N\rho(\lambda_0)}\sum\limits_{j=-n}^n\Tr U_j^*A_jU_j\hat{\xi}\Big\}
\\
&\times\prod\limits_{k=-n}^n (a_k-i\lambda_0/2\big) (b_k-i\lambda_0/2\big)\prod\limits_{k=-n}^n
(a_k-b_k)^2,
\end{align*}
where
\begin{equation}\label{da}
d\overline{a\vphantom{b}}=\prod\limits_{j=-n}^nda_j,\quad d\overline{b}=\prod\limits_{j=-n}^ndb_j,\quad
C(\xi)=\exp\{\lambda_0(\xi_1+\xi_2)/2\rho(\lambda_0)\}.
\end{equation}
Now changing the ``angle variables'' $U_j$ to $V_j=U_jU_{j-1}^*$, $j=-n+1,\ldots,n$ (i.e. the new variables are
$U_{-n}, V_{-n+1}, V_{-n+2},\ldots,V_n$), we get (\ref{F_0_1}).
\end{proof}

\section{Sketch of the proof of Theorem \ref{thm:1}}

The strategy of the proof is the following.

\smallskip

First we will study the function $f$ and find that expected saddle-points for each $a_j$ and $b_j$
are $a_\pm$, which are defined in (\ref{a_pm}). This will be done in Section 4.1.

The second step is to prove that the main contribution to the integral (\ref{F_0_1}) is given
by $\Sigma$, i.e. by the integral over $\Omega_\delta$ (see (\ref{Om_delta})). More precisely, we are going to prove that
\begin{equation}\label{F_2}
F_2\Big(\Lambda_0+\dfrac{\hat{\xi}}{N\rho(\lambda_0)}\Big)=-\dfrac{C(\xi) \mdet^{-2}J}{(4\pi)^{N}}
\cdot \Sigma\cdot(1+o(1)),\quad W\to \infty.
\end{equation}
The bound for the complement $|\Sigma_c|$ can be obtained by inserting the absolute value inside the integral and
by performing exactly the integral over the unitary groups. After this, since we are
far from the saddle-points of $f$, one can control the integral. This will be done in Lemma \ref{l:2},
Section 5.1.

The next step is the calculation of $\Sigma$ (see Section 5.2, Lemma \ref{l:sigma}).
First note that shifting
$$U_j \to \left(
\begin{array}{ll}
0&1\\
1&0
\end{array}
\right) U_j$$
for some $j$, we can rotate each domain of type $$\{ \{a_j\}, \{b_j\}: (a_j\in U_\delta(a_+),\,\,
b_j\in U_\delta(a_-))\,\,\hbox{or}\,\,
(a_j\in U_\delta(a_-),\, b_j\in U_\delta(a_+))\,\,\forall j\}$$ to the $\delta$-neighborhood
of the point $(\overline{a}_+,\overline{a}_-)$ with $\overline{a}_\pm$ of (\ref{a_pm}). Thus, we can
consider the contribution over $\Omega^{\pm}_{\delta}$ as
$2^{N}$ contributions of the $\delta$-neighborhood of the point $(\overline{a}_+,\overline{a}_-)$.
Consider this neighborhood (or the neighborhoods of the points
$a_j=b_j=a_+$ or $a_j=b_j=a_-$ for $\Omega^{+}_{\delta}$ or $\Omega^{-}_{\delta}$ correspondingly),
and change the variables as
\begin{align}\label{change}
a_j&\to a_++\tilde{a}_j/W,\quad |\tilde{a}_j|\le\delta W,\\ \notag
b_j&\to a_-+\tilde{b}_j/W, \quad\, |\tilde{b}_j|\le\delta W,
\end{align}
and set $\widetilde{A}_j=\hbox{diag}\,\{\tilde{a\vphantom{b}}_j,\tilde{b}_j\}$.
To compute $\Sigma$, one has to perform first the integral over the unitary groups.
This integral is some analytic in $\{\tilde{a\vphantom{b}}_j/W\}$, $\{\tilde{b}_j/W\}$ function. The main idea is to prove that the leading part of this function can be obtained by
replacing all $V_s$ in the ``bad'' term
$$\exp\Big\{-\frac{i}{N\rho(\lambda_0)}\sum\limits_{j=-n}^n
\Tr \big(\prod\limits^{-n+1}_{s=j} V_s\cdot U_{-n}\big)^*(L+\widetilde{A}_s/W)\,
(\prod\limits^{-n+1}_{s=j} V_s \cdot U_{-n}\big)\hat{\xi}\Big\}$$
with $I$. To this end, we expand the ``bad'' term into the series and for each summand, which is
analytic in $\{\tilde{a\vphantom{b}}_j/W\}$, $\{\tilde{b}_j/W\}$, find the bound for
its Taylor coefficients (see Lemma \ref{l:un}).

To integrate with respect to $\{\tilde{a\vphantom{b}}_j\}$, $\{\tilde{b}_j\}$, we expand
\begin{equation}\label{f_exp}
\begin{array}{cc}
f(x)-f(a_\pm)=c_{\pm}(x-a_\pm)^2+s_3(x-a_\pm)^3+\ldots=c_{\pm}(x-a_\pm)^2+\varphi_\pm(x-a_\pm), \\
 c_\pm=1-\dfrac{\lambda_0^2}{4}\pm
\dfrac{i\lambda_0}{2}\cdot \sqrt{1-\dfrac{\lambda_0^2}{4}},
\end{array}
\end{equation}
and then leave only the quadratic form in the exponent in (\ref{F_0_1}).
At this step we will face with a problem to study a complex valued Gaussian
$(2n+1)$~-dimensional distribution (\ref{mu})
with $\gamma\in \mathbb{C}$, $\Re \gamma>0$ (in our case $\gamma=c_+$ or $c_-$).
The properties of the measure $\mu$ will be studied in Section 4.2.
The most important steps here are:
\begin{itemize}
\item to prove that
\begin{equation}\label{in_sk}
\Big\langle \exp\Big\{\sum\limits_{j=-n}^n\varphi(x_j/W)\Big\} -1\Big\rangle_0=o(1),
\end{equation}
where $\varphi$ is any analytic in the neighborhood of $0$ function whose Taylor expansion
starts from the third order, and
$\langle\ldots\rangle_0$ is defined in (\ref{angle})
(this will be done in Lemma \ref{l:in});

\item to prove that if for some function the absolute value of each coefficient of its Taylor expansion
does not exceed the corresponding coefficient of the other function (majorant), then we can estimate the averaging
of the first function over the complex measure by the averaging of the majorant over the positive
one (see Lemma \ref{l:maj}). This helps to integrate the function obtained after the integration
over the unitary groups, since the proper majorant can be found.
\end{itemize}
These arguments show that the leading term of $\Sigma_\pm$ is the integral over the Gaussian measures
$\mu_{c_\pm}$ in $\{a_j\}$ and $\{b_j\}$
variables, and the integral over the
unitary group $d\mu(U_{-n})$ which gives the sine-kernel. This gives an asymptotic expression for $\Sigma_\pm$
(see Lemma \ref{l:sig},
(\ref{F_0_3}) -- (\ref{F_0_4})).

It will be also shown in Section 5.2.2 that the integrals $\Sigma_+$ and $\Sigma_-$ over $\Omega_\delta^+$ and
$\Omega_\delta^-$ have smaller orders than $\Sigma_\pm$.

\section{Preliminary results}

\subsection{Saddle-point analysis for $f$ of (\ref{f})}
Considering zeros of the first derivative of the function $f$ of (\ref{f}), we find that the expected
saddle-points are $a_\pm$, which are defined in (\ref{a_pm}).

We can write in the small neighborhood of $a_\pm$
\begin{equation*}
\begin{array}{cc}
f(x)-f(a_\pm)=c_{\pm}(x-a_\pm)^2+s_3(x-a_\pm)^3+\ldots=c_{\pm}(x-a_\pm)^2+\varphi_\pm(x-a_\pm), \\
 c_\pm=1-\dfrac{\lambda_0^2}{4}\pm
\dfrac{i\lambda_0}{2}\cdot \sqrt{1-\dfrac{\lambda_0^2}{4}},
\end{array}
\end{equation*}
where $|\varphi_\pm(x-a_\pm)|=O(|x-a_\pm|^3)$.

Let us also study
\begin{align}\label{f*}
f_*(x)&=\Re(f(x)-f(a_\pm))=\frac{1}{2}(x^2-\lambda_0^2/4-\log (x^2+\lambda_0^2/4))-c_0,\\ \notag
c_0&=\Re f(a_\pm)=1/2-\lambda_0^2/4.
\end{align}
We need
\begin{lemma}\label{l:min_L}
The function $f_*(x)$ for $x\in \mathbb{R}$ attains its minimum at
$x=a_{\pm}$, where $a_{\pm}$ is defined in (\ref{a_pm}).
Moreover, $f_*(a_\pm)=0$ and if $x\not\in U_\delta(a_\pm):=(a_\pm-\delta,
a_\pm+\delta)$ for sufficiently small $\delta>0$, then
\begin{equation}\label{ineqv_ReV}
f_*(x)\ge C\delta^2.
\end{equation}
In addition, we have for $x\in (-\infty,\delta)$
\begin{equation}\label{in_left}
f_*(x)\ge \alpha\,(x-a_-)^2,
\end{equation}
where $\alpha$ is some positive constant. A similar inequality holds for $x\in (-\delta,+\infty)$ (with $a_+$ instead
of $a_-$).
\end{lemma}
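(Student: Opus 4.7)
The plan is to treat \eqref{f*} as a one-variable calculus problem, using its explicit form. First I would differentiate
\begin{equation*}
f_*'(x)=x\cdot\frac{x^2+\lambda_0^2/4-1}{x^2+\lambda_0^2/4},
\end{equation*}
so that the critical points on $\mathbb{R}$ are exactly $x=0$ and $x=\pm\sqrt{1-\lambda_0^2/4}=a_\pm$. Substituting $x=a_\pm$ gives $x^2+\lambda_0^2/4=1$ and hence $f_*(a_\pm)=\tfrac12(1-\lambda_0^2/2)-c_0=0$, consistent with the Taylor expansion \eqref{f_exp} whose quadratic coefficient has real part $1-\lambda_0^2/4$. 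A second differentiation then yields $f_*''(a_\pm)=2(1-\lambda_0^2/4)>0$ and, for $\lambda_0\in(-2,2)\setminus\{0\}$, $f_*''(0)=1-4/\lambda_0^2<0$ (when $\lambda_0=0$ the point $0$ is a logarithmic singularity where $f_*\to+\infty$), identifying $a_\pm$ as local minima and $0$ as a local maximum or pole. Evenness of $f_*$, continuity away from the possible singularity, and the fact that $f_*(x)\sim x^2/2$ at infinity then force the two local minima to be the global ones with common value $0$.

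For \eqref{ineqv_ReV} I would combine the Taylor expansion with compactness. Fix once and for all a small $\eta>0$ independent of $\delta$ so that on $U_\eta(a_\pm)$ the cubic remainder in \eqref{f_exp} is dominated by half the quadratic term; then
\begin{equation*}
f_*(x)\ge \tfrac12(1-\lambda_0^2/4)(x-a_\pm)^2\ge \tfrac12(1-\lambda_0^2/4)\,\delta^2 \qquad\text{on } U_\eta(a_\pm)\setminus U_\delta(a_\pm).
\end{equation*}
On the complement $\mathbb{R}\setminus(U_\eta(a_+)\cup U_\eta(a_-))$, the function $f_*$ is continuous and strictly positive on any bounded piece, while $f_*(x)\ge x^2/4$ for $|x|$ large; this yields a positive lower bound $c_\eta$ independent of $\delta$, and absorbing both estimates into a single constant produces \eqref{ineqv_ReV}.

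For \eqref{in_left}, I would set $g(x):=f_*(x)/(x-a_-)^2$ on $(-\infty,\delta)$, shrinking $\delta$ further if needed so that $\delta<a_+$. By \eqref{f_exp}, $g$ extends continuously to $a_-$ with value $1-\lambda_0^2/4>0$. Since the only real zeros of $f_*$ are $a_\pm$ and $a_+\notin(-\infty,\delta)$, the function $g$ is continuous and strictly positive on $(-\infty,\delta]\setminus\{a_-\}$. Combined with $g(x)\to 1/2$ as $x\to-\infty$ and a finite positive value at the right endpoint, $g$ attains a positive infimum $\alpha$ on $(-\infty,\delta]$, giving \eqref{in_left}. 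The analogous inequality on $(-\delta,+\infty)$ with $a_+$ follows immediately from the evenness of $f_*$.

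I do not expect any serious obstacle: the statement reduces to elementary real analysis for a concrete function with exactly three stationary points and controlled growth at infinity. The mildest subtlety is the order of quantifiers in \eqref{ineqv_ReV}: the neighborhood radius $\eta$ must be chosen first so that the cubic remainder in \eqref{f_exp} is absorbed uniformly, after which $\delta$ may be sent to $0$ inside $U_\eta(a_\pm)$.
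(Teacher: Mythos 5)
Your proposal is correct and follows essentially the same route as the paper: identify the critical points, check $f_*(a_\pm)=0$, $f_*''(a_\pm)=2(1-\lambda_0^2/4)>0$, and use the quadratic Taylor behavior near $a_\pm$ together with the growth at infinity. The paper's own proof is in fact only a terse sketch (it does not explicitly rule out other global minimizers and dismisses \eqref{in_left} as "easy to see"); your version supplies the missing details — the full list of stationary points, the evenness and growth at infinity to globalize the minimum, the $\eta$-versus-$\delta$ compactness step for \eqref{ineqv_ReV}, and the ratio function $g=f_*/(x-a_-)^2$ for \eqref{in_left} — all of which are accurate and appropriate.
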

The proof of this simple lemma can be found in Section 6.

\subsection{Analysis of the measure $\mu_\gamma$}
In this section we study the properties of the complex Gaussian distribution $\mu_\gamma$
 defined in (\ref{mu}). Set
\begin{align}\label{mu_m}
\mu^{(m)}_{\gamma}(x)=\exp\big\{-\frac{1}{2}\sum\limits_{j=2}^{m}(x_j-x_{j-1})^2
-\frac{\gamma}{W^2}\sum\limits_{j=1}^{m}x_j^2\big\}.
\end{align}

\begin{lemma}\label{l:okr}
We have for any $\gamma\in \mathbb{C}$, $\Re \gamma>0$

\begin{description}
\item{(1)\,\,}
 \begin{align}\label{f_eq}
 Z^{(m)}_\gamma&:=\displaystyle\int
\mu^{(m)}_{\gamma}(x)
\prod\limits_{q=1}^md x_q=(2\pi)^{m/2}\mdet^{-1/2}(-\Delta+2\gamma/W^2)\\ \notag
&=(2\pi)^{m/2}\Big(\dfrac{\sqrt{2\gamma}}{W}\sinh \dfrac{m\sqrt{2\gamma}}{W}\Big)^{-1/2}(1+o(1))
\end{align}
Moreover, if we set
\begin{equation}\label{G}
G^{(m)}(\gamma)=\left(-\Delta+\dfrac{2\gamma}{W^2}\right)^{-1},
\end{equation}
then
\begin{equation}\label{G_as}
|G^{(m)}_{ii}(\gamma)|\le\dfrac{C_\gamma W}{\sqrt{2\gamma}}\coth\dfrac{m\sqrt{2\gamma}}{W}(1+o(1)).
\end{equation}


\item{(2)\,\,} $ \dfrac{|Z^{(m)}_\gamma-Z_{\delta,\gamma}^{(m)}|}{|Z^{(m)}_\gamma|}:=
|Z^{(m)}_\gamma|^{-1}\bigg|\,\displaystyle\int\limits_{\max |x_i|>\delta W}\mu^{(m)}_{\gamma}(x)
\prod\limits_{q=1}^md x_q\bigg|\le C_1\, e^{-C_2\delta^2 W},\quad W\to\infty, $ where $m> CW$,
$\delta=W^{-\kappa}$ for sufficiently small $\kappa<\theta/8$, and
\[
Z_{\delta,\gamma}^{(m)}= \intd_{-\delta W}^{\delta W} \mu_{\gamma}^{(m)}(x) \prod\limits_{q=1}^md x_q.
\]

In addition, for any $m$
\[
|Z^{(m)}_\gamma|^{-1}\bigg|\,\displaystyle\int\limits_{|x_k-x_1|>\delta W}\mu^{(m)}_{\gamma}(x)
\prod\limits_{q=1}^md x_q\bigg|\le C_1\, e^{-C_2\delta^2 W},\quad W\to\infty,
\]
and for $m>CW$ and any $\gamma_1,\gamma_2\in \mathbb{C}$, $\Re \gamma_1,\Re \gamma_2>0$
\begin{equation}\label{otn_mod}
\dfrac{|Z^{(m)}_{\gamma_1}|}{|Z^{(m)}_{\gamma_2}|}\le e^{C_1m/W}, \quad W\to\infty.
\end{equation}
\item{(3)\,\,} Let $m>C_1W$, $k\le Cm/W$, $S=\{i_1,\ldots,i_s\} \subset \{1,\ldots,m\}$, and
$\sum\limits_{l=1}^sk_{i_l}=3k$, where
$k_l\in \{3,\ldots,k\}$. Then $$
|Z^{(m)}_\gamma|^{-1}\bigg|\,\displaystyle\int\limits_{\max |x_i|>\delta W}\prod\limits_{j\in S}(x_j/W)^{k_j}\cdot \mu^{(m)}_{\gamma}(x)
\prod\limits_{q=1}^md x_q\bigg|\le e^{-C_1\delta^2 W},\quad W\to\infty,
$$
where $\delta=W^{-\kappa}$ for sufficiently small $\kappa<\theta/8$.
\end{description}
\end{lemma}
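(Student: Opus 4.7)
My plan is to reduce each assertion to a real Gaussian estimate via the pointwise identity $|\mu^{(m)}_\gamma(x)|=\mu^{(m)}_{\Re\gamma}(x)$ (immediate from (\ref{mu_m})), combined with a closed-form computation of $\det(-\Delta+2\gamma/W^2)$ in part~(1). For part~(1) I would start from the standard Gaussian identity $Z^{(m)}_\gamma=(2\pi)^{m/2}\det^{-1/2}M$ with $M=-\Delta+2\gamma/W^2$, valid with the principal branch because $\Re M$ is positive definite. To extract the asymptotic shape, diagonalize $-\Delta$ under the Neumann boundary conditions (\ref{lapl}); its eigenvalues are $\lambda_k=4\sin^2(k\pi/(2m))$, $k=0,\ldots,m-1$. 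Setting $\cosh\beta=1+\gamma/W^2$ so that $\beta\sim\sqrt{2\gamma}/W$, each eigenvalue of $M$ becomes $2(\cosh\beta-\cos(k\pi/m))$, and the Chebyshev identity $U_{m-1}(\cosh\beta)=\sinh(m\beta)/\sinh\beta$ together with the isolated $k=0$ factor $2(\cosh\beta-1)=4\sinh^2(\beta/2)$ collapses the product to $\det M=2\tanh(\beta/2)\sinh(m\beta)$, giving (\ref{f_eq}). The bound (\ref{G_as}) follows from the transfer-matrix representation $G^{(m)}_{ii}=\psi_-(i)\psi_+(i)/\mathcal{W}$ with $\psi_-(j)=\cosh(\beta(j-1/2))$, $\psi_+(j)=\cosh(\beta(m+1/2-j))$ the solutions adapted to the left and right Neumann conditions, and $\mathcal{W}$ their discrete Wronskian of magnitude $\sinh\beta\sinh(m\beta)$; a direct estimate yields the $\coth$ form.

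For part~(2) the identity $|\mu^{(m)}_\gamma|=\mu^{(m)}_{\Re\gamma}$ reduces the numerator to a real Gaussian tail. In the case $\max|x_i|>\delta W$ with $m>CW$, the diagonal $G^{(m)}_{jj}(\Re\gamma)=O(W)$ from part~(1) makes each $x_j$ sub-Gaussian at scale $\sqrt{W}$, so $\Pr(|x_j|>\delta W)\le Ce^{-c\delta^2W}$, and a union bound over $j$ (polynomial cost) controls the maximum. In the case $|x_k-x_1|>\delta W$ with no restriction on $m$, I would observe that the variance of the linear functional $x_k-x_1$ is $O(W)$ uniformly in $m,k$: viewed as an effective resistance in an electrical network with unit bond conductances and grounding conductances $2\Re\gamma/W^2$, it saturates at the screening length $W/\sqrt{2\Re\gamma}$ and so never exceeds $\min(k-1,O(W))=O(W)$. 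This again yields $Ce^{-c\delta^2W}$. The normalization ratio $Z^{(m)}_{\Re\gamma}/|Z^{(m)}_\gamma|$ reduces via the closed form to $|\sinh(m\sqrt{2\Re\gamma}/W)/\sinh(m\sqrt{2\gamma}/W)|^{1/2}$ times a bounded factor; since $|\sqrt{2\Re\gamma}-\Re\sqrt{2\gamma}|$ is bounded, this is $e^{O(m/W)}$ for $m>CW$ and $O(1)$ otherwise, which gives (\ref{otn_mod}). Combined with the tail, and under $\kappa<\theta/8$ ensuring $\delta^2W\gg m/W$, the main estimate follows.

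For part~(3) I would absorb the polynomial prefactor via the elementary inequality
\[
\bigl|x/W\bigr|^{k}\le\bigl(k/(2\alpha e)\bigr)^{k/2}\exp\bigl(\alpha(x/W)^{2}\bigr),\qquad\alpha>0,
\]
obtained by maximizing $|t|^ke^{-\alpha t^2}$. Choosing $\alpha=\Re\gamma/4$ and multiplying over $j\in S$ replaces $\prod_{j\in S}|x_j/W|^{k_j}\mu^{(m)}_{\Re\gamma}(x)$ by $(Ck)^{3k/2}\mu^{(m)}_{\Re\gamma/2}(x)$, with $C$ depending only on $\gamma$. The part~(2) tail bound then applies to $\mu^{(m)}_{\Re\gamma/2}$ and yields $|Z^{(m)}_\gamma|^{-1}\cdot(\text{LHS})\le(Ck)^{3k/2}e^{O(m/W)}e^{-c\delta^2W}$. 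The delicate point is ensuring this is still $\le e^{-C_1\delta^2W}$; taking logarithms, one needs $k\log k\ll\delta^2W$, i.e.\ $(m/W)\log W\ll W^{1-2\kappa}$, which with $m\le N$ and $W^2=N^{1+\theta}$ translates to $\kappa<\theta/(1+\theta)$, comfortably satisfied by $\kappa<\theta/8$. This combinatorial control is the main obstacle across the lemma, while the closed-form Chebyshev computation in part~(1) is the key technical input that drives the rest of the bookkeeping.
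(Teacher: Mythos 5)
Your proposal is correct and follows essentially the same route as the paper: a closed-form determinant/Green's-function computation for part~(1), a Chernoff-plus-union-bound reduction via $|\mu^{(m)}_\gamma|=\mu^{(m)}_{\Re\gamma}$ for part~(2), and absorption of the monomial into a weakened Gaussian weight for part~(3), with the same final constraint $\kappa<\theta/(1+\theta)$. The only cosmetic difference is in part~(1): you diagonalize the Neumann Laplacian and collapse the eigenvalue product via the Chebyshev identity, while the paper sets up the recurrence $T_n(x)=(2+x)T_{n-1}(x)-T_{n-2}(x)$ for the characteristic polynomials of the truncated tridiagonal matrix and solves it explicitly; these are the spectral and transfer-matrix faces of the same computation, and your exact formula $\det M=2\tanh(\beta/2)\sinh(m\beta)$ agrees with the paper's $S_m(2\gamma/W^2)=\frac{\sqrt{2\gamma}}{W}\sinh\frac{m\sqrt{2\gamma}}{W}(1+o(1))$ in the regime of interest (and is in fact exact). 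Likewise, your Wronskian representation $G^{(m)}_{ii}=\psi_-(i)\psi_+(i)/\mathcal{W}$ is exactly the paper's $T_{i-1}T_{m-i}/S_m$ in different notation, and the effective-resistance bound you invoke for $\mathrm{Var}(x_k-x_1)$ is the same content as the paper's estimate $G^{(k)}_{11}-G^{(k)}_{1k}\le C\min\{m,W\}$. (One trivial slip: choosing $\alpha=\Re\gamma/4$ in part~(3) yields $\mu^{(m)}_{3\Re\gamma/4}$, not $\mu^{(m)}_{\Re\gamma/2}$; take $\alpha=\Re\gamma/2$ to match your own claim, though any $\alpha<\Re\gamma$ works.)
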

The proof of the lemma is rather standard and can be found in Section 6.

Let us study the properties of the averages of (\ref{angle}),
where $\delta=W^{-\kappa}$, $\kappa<\theta/8$.

 We will use bellow the following form of the Wick theorem:
\begin{lemma}\label{l:Wick+}
(i) For any smooth function $f$
\begin{equation}\label{Wick}
\langle x_{i_1} f(x_{i_1},\ldots,x_{i_p}) \rangle=\sum\limits_{j=1}^p\langle x_{i_1}x_{i_j} \rangle
\langle \partial f(x_{i_1},\ldots,x_{i_p})/\partial x_{i_j} \rangle.
\end{equation}
The same is valid for $\langle \ldots \rangle_*$, where $\langle\ldots\rangle$, $\langle\ldots\rangle_*$
are defined in (\ref{angle}).
\bigskip

(ii) \begin{equation}\label{av_*}
|\langle x_{i_1}^{k_1}\ldots x_{i_l}^{k_l}\rangle|\le
\langle x_{i_1}^{k_1}\ldots x_{i_l}^{k_l}\rangle_*.\end{equation}
\end{lemma}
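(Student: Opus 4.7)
The common setup is this: $\mu_\gamma(x)=\exp\{-\tfrac12(x,M_\gamma x)\}$ with $M_\gamma:=-\Delta+2\gamma/W^2$ is a (possibly complex) Gaussian density whose modulus equals exactly $\mu_{\Re\gamma}(x)$. Since $\Re\gamma>0$, this modulus has Gaussian decay, so every integral appearing below converges absolutely and Gaussian integration-by-parts is legitimate. For part (i) my plan is to apply the standard Stein identity: using $\partial_{x_l}\mu_\gamma=-(M_\gamma x)_l\,\mu_\gamma$ and the vanishing of boundary terms at infinity, one obtains $\langle\partial_{x_l}F\rangle=\sum_j(M_\gamma)_{lj}\langle x_j F\rangle$ for any smooth $F$ of polynomial growth; inverting gives $\langle x_{i_1}F\rangle=\sum_j(M_\gamma^{-1})_{i_1 j}\langle\partial_{x_j}F\rangle$. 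Taking $F=x_b$ identifies $(M_\gamma^{-1})_{ab}=\langle x_a x_b\rangle$, and because $F=f(x_{i_1},\ldots,x_{i_p})$ depends only on the listed variables, the sum over $j$ collapses to one over $j\in\{i_1,\ldots,i_p\}$, yielding (\ref{Wick}). Exactly the same derivation with $\Re\gamma$ in place of $\gamma$ gives the formula for $\langle\cdot\rangle_*$.

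For (ii), I plan to iterate (i) on the monomial $x_{i_1}^{k_1}\cdots x_{i_l}^{k_l}$ to obtain the Wick sum-over-pairings expansion
\[
\big\langle x_{i_1}^{k_1}\cdots x_{i_l}^{k_l}\big\rangle=\sum_{\pi}\prod_{(a,b)\in\pi}G_{ab}(\gamma),\qquad G(\gamma):=M_\gamma^{-1},
\]
where $\pi$ runs over all pairings of the $k_1+\cdots+k_l$ factors (the average vanishes when the total degree is odd, in which case (\ref{av_*}) is trivial). The identical identity, but with the nonnegative entries $G_{ab}(\Re\gamma)$, holds for $\langle\cdot\rangle_*$. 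Taking absolute values and applying the triangle inequality, the inequality (\ref{av_*}) is reduced to the termwise Green's-function bound $|G_{ab}(\gamma)|\le G_{ab}(\Re\gamma)$ for all indices $a,b$.

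The main obstacle is this last matrix inequality, which I would prove by a random-walk/Neumann-series representation. Write $M_\gamma=c_\gamma I-\widetilde T$, where $c_\gamma:=2+2\gamma/W^2$ and $\widetilde T$ is the symmetric nonnegative matrix carrying the nearest-neighbour adjacency together with a diagonal entry $1$ at each of the two boundary sites (to absorb the Neumann reduction of the diagonal of $-\Delta$ from $2$ to $1$, cf.\ (\ref{lapl})). Then every row sum of $\widetilde T$ equals $2$, so $\|\widetilde T\|=2$, while $|c_\gamma|\ge\Re c_\gamma=2+2\Re\gamma/W^2>2$. Hence the Neumann series
\[
G_{ab}(\gamma)=\sum_{n\ge0}\frac{\widetilde T^n_{ab}}{c_\gamma^{n+1}}
\]
converges absolutely, and since $\widetilde T^n_{ab}\ge0$ and $|c_\gamma|\ge c_{\Re\gamma}$, a termwise comparison yields $|G_{ab}(\gamma)|\le G_{ab}(\Re\gamma)$. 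Combined with the Wick expansion this proves (\ref{av_*}); once the Green's-function inequality is in hand, the remaining steps are purely combinatorial bookkeeping over pairings.
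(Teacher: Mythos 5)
Your proposal is correct and follows essentially the same route as the paper: part (i) via Gaussian integration by parts, and part (ii) by combining the Wick pairing expansion with the termwise Neumann-series comparison $|(M_\gamma^{-1})_{ij}|\le(M_{\Re\gamma}^{-1})_{ij}$, which is exactly the argument in the paper (with $\widetilde T$ being what the paper calls $\tilde M=\Delta+2I$). Your write-up is in fact slightly more careful than the paper's, since you make the Wick expansion explicit and verify that $\|\widetilde T\|=2<|c_\gamma|$ so the Neumann series actually converges.
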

\begin{proof}
The first part of the lemma is well-known Wick's theorem, which can be easily proven using
the integration by parts.

To prove the second part set
\begin{equation}\label{M}
M=-\triangle+\gamma/W^2=(2+\gamma/W^2)I-\tilde{M},\quad
M_*=-\triangle+\Re \gamma/W^2=(2+\Re \gamma/W^2)I-\tilde{M},
\end{equation}
where $\tilde{M}=\Delta+2I$.
Then
\[
\langle x_{i} x_{j} \rangle=(M^{-1})_{ij},\quad \langle x_{i} x_{j} \rangle_*=(M^{-1}_*)_{ij}.
\]
Besides, since all entries of $\tilde{M}$ are positive and $\Re \gamma>0$,
\begin{multline*}
\big|(M^{-1})_{ij}\big|=\Big|\sum\limits_{k=0}^\infty\dfrac{(\tilde{M}^k)_{ij}}{(2+\gamma/W^2)^{k+1}}\Big|\\
\le
\sum\limits_{k=0}^\infty\dfrac{(\tilde{M}^k)_{ij}}{|2+\gamma/W^2|^{k+1}}
\le
\sum\limits_{k=0}^\infty\dfrac{(\tilde{M}^k)_{ij}}{(2+\Re \gamma/W^2)^{k+1}}=(M_*^{-1})_{ij}.
\end{multline*}
This and (\ref{Wick}) yield (ii).
\end{proof}
To leave only the quadratic form of (\ref{f_exp}) in the exponent in (\ref{F_0_1}), we have to prove
(\ref{in_sk}).
This can be done using three ideas: (1) we can replace $\langle\ldots\rangle_0$ by $\langle\ldots\rangle$
with an error which we can control; (2) using (\ref{av_*}) we can estimate the averaging
of some function over the complex measure by the averaging of the ``changed'' function (which means that we
replace all coefficients in the Taylor expansion of the function by its absolute values) over the positive
measure (we take $\Re c_{\pm}$ instead of $c_{\pm}$); (3) using Wick's theorem (\ref{Wick}) we can prove
(\ref{in_sk}) for the positive measure (see Lemma \ref{l:s_Wick}).

 Define
\begin{equation}\label{E_phi}
E_n[g]:=\exp\Big\{-\sum\limits_{j=-n}^ng(x_j/W)\Big\}
\end{equation}
for any function $g:\mathbb{R}\to \mathbb{C}$.

Then (\ref{in_sk}) can be rewritten in the form
\begin{lemma}\label{l:in}
For $E_n$ of (\ref{E_phi}) we have
\begin{equation}\label{in}
\big|\big\langle E_n[\varphi_\pm]\big\rangle_0-1\big|=o(1),
\end{equation}
where $\varphi_{\pm}$ are defined in (\ref{f_exp}).
\end{lemma}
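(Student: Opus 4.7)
The plan is to follow the three-step strategy outlined in the paragraph just above the lemma.

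\textbf{Step 1: strip the truncation.} Because $\delta=W^{-\kappa}\to 0$ lies well inside a fixed disk of analyticity of $\varphi_\pm$ around $0$, on the truncated region $\{|x_j|\le\delta W\}$ the function $E_n[\varphi_\pm]-1$ admits an absolutely convergent multivariate Taylor expansion
\begin{equation*}
E_n[\varphi_\pm]-1=\sum_{\alpha\neq 0}c_\alpha\,W^{-|\alpha|}\prod_j x_j^{\alpha_j}.
\end{equation*}
Integrating term by term against $\mu_\gamma$ and invoking Lemma~\ref{l:okr}(3) for each monomial replaces $\langle x^\alpha\rangle_0$ by $\langle x^\alpha\rangle$ up to an exponentially small tail in $W$; the geometric decay of the $c_\alpha$'s (inherited from the analyticity of $\varphi_\pm$) keeps the summed error still $o(1)$, so $\langle E_n[\varphi_\pm]-1\rangle_0=\langle E_n[\varphi_\pm]-1\rangle+o(1)$.

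\textbf{Step 2: majorize by the positive measure.} By Lemma~\ref{l:Wick+}(ii), $|\langle x^\alpha\rangle|\le\langle x^\alpha\rangle_*$. Summing $|c_\alpha|\,|\langle x^\alpha\rangle|\le|c_\alpha|\,\langle x^\alpha\rangle_*$ and noting that $|c_\alpha|$ is itself bounded by the $\alpha$-coefficient of $\exp\{\sum_j\tilde\varphi_\pm(x_j/W)\}$, where $\tilde\varphi_\pm(y):=\sum_{k\ge 3}|a_k^\pm|\,y^k$ is the elementary majorant of $\varphi_\pm(y)=\sum_k a_k^\pm y^k$ obtained by taking absolute values of Taylor coefficients, one arrives at
\begin{equation*}
\bigl|\langle E_n[\varphi_\pm]-1\rangle\bigr|\le\Bigl\langle\exp\Bigl\{\sum_{j}\tilde\varphi_\pm(x_j/W)\Bigr\}-1\Bigr\rangle_{\!*},
\end{equation*}
where the right-hand side is interpreted as the (convergent, by Step~3) formal sum $\sum_\alpha\tilde c_\alpha\langle x^\alpha\rangle_*$; interchange of $\alpha$-summation and Gaussian average is legitimate by monotone convergence, since both $\tilde c_\alpha\ge 0$ and $\langle x^\alpha\rangle_*$ (a Wick sum of nonnegative $G^*_{ij}$'s) are nonnegative.

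\textbf{Step 3: Wick estimate for the positive Gaussian.} Expand the right-hand side as $\sum_{m\ge 1}\frac{1}{m!}\bigl\langle\bigl(\sum_j\tilde\varphi_\pm(x_j/W)\bigr)^m\bigr\rangle_*$. For $m=1$, $\langle\tilde\varphi_\pm(x_j/W)\rangle_*$ is dominated by its $y^4$ contribution $3|a_4^\pm|\,(G^*_{jj}/W^2)^2$, because all odd moments $\langle x_j^k\rangle_*$ vanish; using the bound $|G^*_{jj}|\le CW$ from Lemma~\ref{l:okr}(1), the sum over $j=-n,\ldots,n$ yields $O(N/W^2)=O(N^{-\theta})=o(1)$. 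For $m\ge 2$, iterated Wick contractions via Lemma~\ref{l:Wick+}(i) pair up the $x_j$'s; each contraction contributes a factor $G^*_{ij}/W^2$, which after summing one index using the exponential decay of $G^*_{ij}$ on scale $W$ becomes $O(W^{-1})$, producing a geometric bound $O((CN/W^2)^m)$ at order $m$. Summing over $m$ then gives $\exp\{CN/W^2\}-1=o(1)$. This combinatorial estimate is precisely the content of the auxiliary Lemma~\ref{l:s_Wick}.

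The hardest piece will be Step~3. The $m=1$ estimate is immediate, but showing that the geometric bound $(CN/W^2)^m$ persists uniformly in $m\ge 2$ requires careful bookkeeping of Wick pairings on the positive Gaussian, exploiting both the exponential decay of $G^*_{ij}$ on the band scale $W$ and the fact that $\tilde\varphi_\pm$ begins at cubic order in $y$; without both features, the summation over the $N$ sites would swamp the gain per Taylor mode. Steps~1 and~2 are by contrast essentially routine bookkeeping once the Taylor--majorant framework is in place.
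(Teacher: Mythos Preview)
Your three-step outline matches the paper's strategy, but Steps~1 and~3 have gaps that the paper fills with arguments you have not supplied.

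\textbf{Step 1.} Lemma~\ref{l:okr}(3) carries the hypothesis $k\le Cm/W$: it applies only to monomials of total degree at most $Cn/W$, so you cannot invoke it for every term of the Taylor expansion. Even on that range, ``geometric decay of $c_\alpha$'' alone is not enough, because the number of admissible multi-indices of degree $k$ in $N$ variables grows like $\binom{N+k-1}{k}$; for $k\sim n/W$ this is $e^{C(n/W)\log W}$, and one must check (as the paper does) that the tail bound $e^{-C\delta^2 W}$ beats it. For the remaining range $k>Cn/W$ the paper uses a separate generating-function argument: it dominates $S_s^0$ by $\widetilde S_s^0$ via $|x|^3\le(p^{-1}x^2+px^4)/2$, bounds $\sum_s q^s\widetilde S_s^0$ through Lemma~\ref{l:s_Wick}, and extracts $|\Sigma_k^0|\le 2e^{Cn/W-k}$. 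Without this tail estimate your term-by-term passage from $\langle\cdot\rangle_0$ to $\langle\cdot\rangle$ does not sum.

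\textbf{Steps 2--3.} Your Step~2 lands on the \emph{untruncated} expectation $\langle\cdot\rangle_*$, whereas Lemma~\ref{l:s_Wick} is stated and proved for the \emph{truncated} $\langle\cdot\rangle_{0,*}$ with \emph{polynomial} $g$. The paper closes this by returning to $\langle\cdot\rangle_{0,*}$ (another use of Lemma~\ref{l:okr}(3)) and then invoking the pointwise bound $1+\sum_{l\ge3}|\phi_l|(x/W)^l\le e^{C(x^3/W^3+x^4/W^4)}$, valid on $|x|\le\delta W$, to reduce to a quartic polynomial before applying Lemma~\ref{l:s_Wick}. Your sketched Wick estimate also contains an error: by (\ref{e_gr}) one has $\sum_j G^*_{ij}=W^2/\Re\gamma$, so a single cross-contraction summed over one index contributes $O(1)$, not $O(W^{-1})$; the claimed bound $O((CN/W^2)^m)$ does not follow from per-edge decay. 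The actual mechanism in Lemma~\ref{l:subl} is that every vertex in the Wick graph has degree $\ge 3$, which forces each connected component to contribute at most $N/W^2$.
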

The key point in the proof of Lemma \ref{l:in} is
\begin{lemma}\label{l:s_Wick}
Let $g$ be a polynomial of degree $q$ with real coefficients starting from the third power, i.e.
$g(x)=\sum\limits_{j=3}^q c_jx^j$, $c_j\in \mathbb{R}$. Then we have
\begin{equation}\label{in_*}
\big|\big\langle E_n[g]\big\rangle_{0,*}-1\big|=o(1),\quad n\to\infty.
\end{equation}
\end{lemma}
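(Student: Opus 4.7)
The plan is to expand $\exp\{-\sum_j g(\tilde x_j)\}$ as a power series (with $\tilde x_j = x_j/W$), evaluate the Gaussian moments that appear by Wick's theorem (Lemma \ref{l:Wick+}(i)), and organize the resulting sum as a linked-cluster expansion. The essential feature that makes everything small is that $g$ starts at degree three, so every ``vertex'' in the Feynman-style sum carries at least three $\tilde x$-legs.

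Set $F(\tilde x) = \sum_{j=-n}^n g(\tilde x_j)$. First I would pass from the truncated measure $\langle\cdot\rangle_{0,*}$ to the unrestricted Gaussian $\langle\cdot\rangle_*$ with respect to $\mu_{\Re\gamma}$. After Taylor-expanding $e^{-F}$, Lemma \ref{l:okr}(2)--(3) bounds the tail contribution of each polynomial summand by $O(e^{-C\delta^2 W})$, which is harmless. It therefore suffices to show $\langle e^{-F}\rangle_* = 1 + o(1)$.

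On the full Gaussian measure I would use
\[
\bigl\langle e^{-F}\bigr\rangle_* - 1 = \sum_{k \ge 1} \frac{(-1)^k}{k!}\sum_{j_1,\ldots,j_k}\bigl\langle g(\tilde x_{j_1})\cdots g(\tilde x_{j_k})\bigr\rangle_*
\]
and, expanding $g(\tilde x_j) = \sum_{d=3}^q c_d \tilde x_j^d$, apply Wick: each joint moment becomes a sum over pairings of the legs attached to the vertices $j_1, \ldots, j_k$, with each pairing producing a product of covariances $\widetilde G_{il} := G_{il}/W^2$. From Lemma \ref{l:okr}(1) and the standard Neumann Green's function bounds, $\widetilde G_{jj} = O(W^{-1})$, $|\widetilde G_{jl}| = O(W^{-1}\exp(-c|j-l|/W))$, and $\sum_l|\widetilde G_{jl}| = O(1)$.

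The decisive step is the linked-cluster (cumulant) reformulation
\[
\log \bigl\langle e^{-F}\bigr\rangle_* = \sum_{k \ge 1}\frac{(-1)^k}{k!}\kappa_k(F),
\]
with $\kappa_k(F)$ the sum over \emph{connected} Wick diagrams on $k$ labelled vertices of degrees $d_1,\ldots,d_k \in \{3,\ldots,q\}$. A connected diagram on $k$ vertices admits a spanning tree of $k-1$ internal edges and carries at least $3k/2$ edges in total. Summing positions along the spanning tree via exponential decay of $\widetilde G$ yields at most $O(NW^{k-1})$, while each of the $\ge 3k/2$ pairings contributes $O(W^{-1})$. Thus one connected diagram of size $k$ is bounded by $O(NW^{-k/2-1})$; after accounting for the $\le (q-2)^k$ choices of degrees and the $O(k!)$ pairings (absorbed by the $1/k!$ prefactor and the connectedness constraint), the whole connected sum is $O(N/W^2) = O(N^{-\theta}) = o(1)$. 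Exponentiating gives $\langle e^{-F}\rangle_* = 1 + o(1)$, and transferring back gives $\langle E_n[g]\rangle_{0,*}-1 = o(1)$, as claimed.

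The main technical obstacle is the bookkeeping of the cluster expansion: Wick pairings proliferate factorially, and one must verify that the $1/k!$ prefactor together with the connectedness constraint and the exponential decay of $\widetilde G$ along the tree backbone keep the series summable. The boundedness of $\deg g = q$ controls the sum over degrees, but it is the hypothesis that $g$ starts at degree three (not two) that is really essential: it forces each connected cluster to have at least three legs per vertex, upgrading the per-cluster estimate from the inadequate $O(N/W)$ to $O(N/W^2) = O(N^{-\theta})$ and thereby producing the $o(1)$ rate.
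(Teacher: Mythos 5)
Your strategy is a classical linked-cluster (log-cumulant) expansion, whereas the paper proves the two one-sided bounds separately: the lower bound trivially via $e^x-1\ge x$, and the upper bound by iterating the one-sided inequality $e^x\le 1+xe^x$ at most $s_\kappa\sim 1/\kappa$ times (with Wick pairing applied greedily, and the surviving terms killed by the accumulated $W^{-\kappa\alpha}$ powers). The paper's truncated iteration, together with the Cauchy--Schwarz ``break cycles into semiloops'' trick of Lemma \ref{l:subl}, sidesteps precisely the combinatorial problem that your proposal runs into.

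The gap is concrete. You claim that the Wick pairings number ``$O(k!)$'' and are ``absorbed by the $1/k!$ prefactor and the connectedness constraint.'' That count is too low: $k$ vertices of degree $d_i\in\{3,\ldots,q\}$ carry at least $3k$ half-edges, and the number of perfect matchings of $m$ half-edges is $(m-1)!!\sim (m/e)^{m/2}$; with $m\ge 3k$ this is at least of order $C^k\,k!^{3/2}$, which overwhelms $1/k!$ by a factor of order $k!^{1/2}$. Connectedness removes only a constant fraction of pairings, not a factorial one, and the exponential decay of $\widetilde G$ along a spanning tree controls the position sums but not the number of inequivalent contraction patterns hung on a given tree backbone. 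Consequently the cumulant series as you set it up is not absolutely summable for any $q\ge 4$, and the central identity $\log\langle e^{-F}\rangle_*=\sum_k\frac{(-1)^k}{k!}\kappa_k(F)$ is not justified; a convergent cluster expansion here would require something of Brydges--Kennedy type (interpolated propagators summed over trees only), which is a substantially different argument from what you sketch. The per-diagram estimate $O(NW^{-k/2-1})$ and the Green's-function bounds $\widetilde G_{jj}=O(W^{-1})$, $\sum_l\widetilde G_{jl}=O(1)$ that you invoke are all correct and indeed mirror the ingredients of the paper's Lemma \ref{l:subl}; it is the resummation step that fails. If you want to keep the cumulant framing, you would need either to truncate at a $W$-dependent order (as the paper does with $s_\kappa$) and bound the remainder directly, or to replace the naive Wick-diagram count with a genuine tree-graph bound.
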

\begin{proof} \textbf{The lower bound.}

Since $e^x-1\ge x$, we have
\[
\big\langle E_n[g]\big\rangle_{0,*}-1\ge \big\langle \sum\limits_{j=-n}^ng(x_j/W)\big\rangle_{0,*}=\big\langle
\sum\limits_{j=-n}^ng(x_j/W)\big\rangle_{*}+o(1),
\]
where we use the third assertion of Lemma \ref{l:okr} in the last equality.
Using Wick's theorem (\ref{Wick}) and $(M_*^{-1})_{ii}=CW$ (see the assertion (1) of Lemma \ref{l:okr}),
we can write
\[
\big\langle
(x_j/W)^{2l}\big\rangle_{*}=O(W^{-l}),
\]
and hence
\[
\big\langle
\sum\limits_{j=-n}^ng(x_j/W)\big\rangle_{*}= O((2n+1)/W^2)=o(1).
\]

 \textbf{The upper bound.}

Let us prove that
\begin{equation}\label{ineqv}
\big\langle E_n[g]\big\rangle_{0,*}-1\le \varepsilon_{1,n}
\big\langle E_n[g]\big\rangle_{0,*},
\end{equation}
which implies
\[
\big\langle E_n[g]\big\rangle_{0,*}-1\le 2\varepsilon_{1,n},
\]
where $\varepsilon_{1,n}=o(1)$, as $n\to\infty$.

\textbf{Step 1. Replacing $\langle\ldots\rangle_{0,*}$ with $\langle\ldots\rangle_{*}$}

Note that if we choose $s_\kappa>3$ such that (recall that $\delta=W^{-\kappa}$, $\kappa<\theta/8$)
\begin{equation}\label{s}
W^{-\kappa s_\kappa}\le W^{-2},
\end{equation}
then for any $p>s_\kappa/2$ and for $x_j\in (-\delta W,\delta W)$
\[
\sum\limits_{j=-n}^n(x_j/W)^{2p}<N/W^2=o(1),
\]
and thus if we replace $g(x)$ by $g(x)+Cx^{2p}$ with any $C$, then $E_n[g]$ will be changed by $E_n[g](1+o(1))$.
Since it is easy to see that we can choose $C$
such that $c_0x^2/2+g(x)+Cx^{2p}$ has only one minimum $x=0$ in $\mathbb{R}$, without loss of the
generality we can assume that
$c_0 x^2/2+g(x)\ge c_0x^2/4$. Moreover, $c_0 x^2/2+g(x)\le c_0x^2$ for $x\in (-\delta,\delta)$.
This and assertions (1), (2) of Lemma \ref{l:okr} give
\begin{equation*}
\dfrac{\intd_{\max |x_i|>\delta W}E_n[g]\mu_{c_0}(x) dx}{\intd_{\max |x_i|\le\delta W}E_n[g]\mu_{c_0}(x)
dx}\le \dfrac{\intd_{\max |x_i|>\delta W}\mu_{c_0/2}(x) dx}{\intd_{\max |x_i|\le\delta W}\mu_{2c_0}(x)
dx}\le e^{Cn/W-W\delta^2}=o(1),
\end{equation*}
because $\delta=W^{-\kappa}$ with $\kappa<\theta/8$.
Thus,
\begin{equation}\label{okr*}
\big\langle E_n[g]\big\rangle_{0,*}=\big\langle E_n[g]\big\rangle_{*}+o(1).
\end{equation}

\textbf{Step 2. Application of Wick's theorem (Lemma \ref{l:Wick+} (i))}

Since for $x\in \mathbb{R}$
\[
e^x\le 1+xe^x,
\]
we can write using Wick's theorem (\ref{Wick})
\begin{align*}
&\big\langle E_n[g]\big\rangle_*-1\le \sum\limits_{i_1}
\big\langle g(x_{i_1}/W)\cdot E_n[g]\big\rangle_*=\sum\limits_{i_1}\sum\limits_{l=3}^q
\Big\langle \dfrac{c_l x_{i_1}^l}{W^l}\cdot E_n[g]\Big\rangle_*\\
&\le
\sum\limits_{i_1}\sum\limits_{l=3}^q\dfrac{(l-1)|c_l|\langle x_{i_1}^2\rangle_*}{W^2}
\Big|\Big\langle\dfrac{x_{i_1}^{l-2}}
{W^{l-2}}\cdot E_n[g]\Big\rangle_*\Big|\\
&+
\sum\limits_{i_1,i_2}\sum\limits_{l=3}^q\dfrac{|c_l|\langle x_{i_1}x_{i_2}\rangle_*}{W^2}
\Big|\Big\langle\dfrac{x_{i_1}^{l-1}}{W^{l-1}}\cdot g' \left(\frac{x_{i_2}}{W}\right)\cdot
E_n[g]\Big\rangle_*\Big|
\end{align*}
\begin{align*}
&\le
\sum\limits_{i_1}\sum\limits_{l=4}^q\dfrac{(l-1)(l-3)|c_l|\langle x_{i_1}^2\rangle_*^2}{W^4}
\Big|\Big\langle\dfrac{x_{i_1}^{l-4}}
{W^{l-4}}\cdot E_n[g]\Big\rangle_*\Big|\\
&+\sum\limits_{i_1,i_2}\sum\limits_{l=3}^q\dfrac{(2l-3)|c_l|\langle x_{i_1}x_{i_2}\rangle_*\langle x_{i_1}^2
\rangle_*}{W^4}
\Big|\Big\langle\dfrac{x_{i_1}^{l-3}}{W^{l-3}}\cdot g' \left(\frac{x_{i_2}}{W}\right)\cdot
E_n[g]\Big\rangle_*\Big|\\
&+\sum\limits_{i_1,i_2}\sum\limits_{l=3}^q\dfrac{|c_l|\langle x_{i_1}x_{i_2}\rangle_*^2}
{W^4}
\Big|\Big\langle\dfrac{x_{i_1}^{l-2}}{W^{l-2}}\cdot g'' \left(\frac{x_{i_2}}{W}\right)\cdot
E_n[g]\Big\rangle_*\Big|\\
&+\sum\limits_{i_1,i_2,i_3}\sum\limits_{l=3}^q\dfrac{|c_l|\langle x_{i_1}x_{i_2}\rangle_*\langle x_{i_1}x_{i_3}\rangle_*}
{W^4}
\Big|\Big\langle\dfrac{x_{i_1}^{l-2}}{W^{l-2}}\cdot g' \left(\frac{x_{i_2}}{W}\right)g'\left(\frac{x_{i_3}}{W}\right)\cdot
E_n[g]\Big\rangle_*\Big|=\ldots
\end{align*}
For every term $\langle x_{i_1}^{m_1}\ldots x_{i_k}^{m_k}E_n[g]\rangle_*$, we take
$x_{i_l}$ with
the smallest index $l$ and find its pair according to (\ref{Wick}).
We repeat this procedure until we get $\langle E_n[g]\rangle_*$ or until the number
of steps becomes bigger than $s_\kappa$, where $s_\kappa$ is defined in (\ref{s}). All terms have the form
\[
\sum\limits_{i_1,\ldots,i_{p+l}}G(x_{i_1},\ldots,x_{i_{p+l}})
\left|\bigg\langle\dfrac{x^{\alpha_{p+1}}_{i_{p+1}}x_{i_{p+2}}^{\alpha_{p+2}}\ldots x^{\alpha_{p+l}}_{i_{p+l}}}{W^{\alpha}}
E_n[g]\bigg\rangle_*\right|,
\]
where $\alpha_{p+1},\ldots,\alpha_{p+l}\in \mathbb{N}$ are bounded by some absolute constant
(since in any case we make a finite number of steps), $\alpha=\alpha_{p+1}+\ldots+\alpha_{p+l}$. Here
$G(x_{i_1},\ldots,i_{p+l})$ is the product of the
expectations of some pairing
$x_{i_1}^{k_1}x_{i_2}^{k_2}\ldots x_{i_{p+l}}^{k_{p+l}}$
with $k_j\ge 3$, $j=1,\ldots,p$ and $k_j\ge 1$, $j=p+1,\ldots,p+l$ (all $\{k_j\}$ are bounded by some
absolute constant) divided by $W^{k_1+\ldots+k_{p+l}}$,
with some bounded positive coefficient.

We can visualize these pairings as connected multigraphs (i.e. graphs
which may contain multiple edges and loops) with vertices $i_1,\ldots,i_{p+l}$,
where $p+l\le s_\kappa$. The degree of $i_j$ is at least $3$ for $j\le p$ and
is at least $1$ for $j=p+1,\ldots,p+l$. The multigraphs are connected,
since one can proceed to a different connected component only if we
obtained $\langle E_n[g]\rangle_*$ before.

Let $H$ be one of such multigraphs. Any $\langle x_i x_j\rangle_*$ gives $(M^{-1}_*)_{ij}$.
Thus, any loop gives a factor $(M_*^{-1})_{ii}=CW(1+o(1))$ (see the assertion (1) of Lemma \ref{l:okr}).
Moreover, according to the Cauchy-Schwarz inequality, we
have
\begin{equation*}
(M_*^{-1})_{ij}\le (M_*^{-1})_{ii}^{1/2}(M_*^{-1})_{jj}^{1/2}.
\end{equation*}
Hence, we can remove the edge $(j_1,j_2)$ from any cycle $(j_1,j_2,\ldots,j_r,j_1)$ ($r\ne1$)
and replace it with two semiloops $(j_1,j_1)$, $(j_2,j_2)$ (a ``semiloop''
is a loop counted with the coefficient $1/2$, i.e. the contribution
of a semiloop is $|(M_*^{-1})_{ii}|^{1/2}$ instead
of $(M_*^{-1})_{ii}$; one semiloop
adds 1 to the degree of the vertex, and two semiloops add up
to one loop.) In this way we transform the multigraph $H$ to a tree $H_0$ with some loops and semiloops
(the degree of each vertex is still the same as in $H$).

Since we make a finite number of steps, there is only a finite number of graphs $H$ such
that corresponding graphs $H_0$ are equal to each other. Hence, we can
consider the sum over $H_0$ instead of $H$. Let $G_0(x_{i_1},\ldots,x_{i_{p+l}})$
be the function, which corresponds to the new graph~$H_0$.

Note that, according to (\ref{okr*}),
\begin{align*}
&\Big|\Big\langle\dfrac{x^{\alpha_{p+1}}_{i_{p+1}}x_{i_{p+2}}^{\alpha_{p+2}}\ldots x^{\alpha_{p+l}}_{i_{p+l}}}
{W^{\alpha}}
E_n[g]\Big\rangle_*\Big|\le \Big|\Big\langle\dfrac{x^{\alpha_{p+1}}_{i_{p+1}}x_{i_{p+2}}^{\alpha_{p+2}}\ldots x^{\alpha_{p+l}}_{i_{p+l}}}{W^{\alpha}}
E_n[g]\Big\rangle_{0,*}+o(1)\Big|\\
&\le \dfrac{1}{W^{\kappa \alpha}}
\big\langle E_n[g]\big\rangle_{0,*}+o(1)= \dfrac{1}{W^{\kappa \alpha}}
\big\langle E_n[g]\big\rangle_{*}+o(1).
\end{align*}
Therefore, we are left to prove
\begin{lemma}\label{l:subl}
Let $H_0$ be a tree with loops and
semiloops, whose vertices $i_1,\ldots, i_{p+l}$ admit the following condition: the degree of each
vertex $i_j$ is at least $3$ for $j\le p$ and is at least $1$ for $j=p+1,\ldots,p+l$.
Denote by $m$ the sum of degrees of all vertices and
let also $G_0(x_{i_1},\ldots,x_{i_{k}})$
be the function of the pairing, which corresponds to $H_0$. Then we have
\begin{equation}\label{bound}
\sum\limits_{i_1, \ldots, i_{p+l}}G_0(x_{i_1},\ldots,x_{i_{p+l}})\le N/W^{m/2-(p+l)+1}.
\end{equation}
Moreover,
\begin{equation}\label{sum}
\sum\limits_{i_1, \ldots, i_{p+l}}W^{-\kappa\alpha}G_0(x_{i_1},\ldots,x_{i_{p+l}})=o(1).
\end{equation}
\end{lemma}
\begin{proof}
Since $(1,\ldots,1)$ is an eigenvector for $M_*$ of (\ref{M}) with eigenvalue $W^2/\Re \gamma$, we have
\begin{equation}\label{e_gr}
\sum\limits_{j} (M^{-1}_*)_{ij}=\dfrac{W^2}{\Re \gamma},\quad i=-n,\ldots,n.
\end{equation}
Let us consider the sum over $i_1,\ldots,i_{p+l}$. Any loop or semiloop gives $W$ or $W^{1/2}$
respectively. Thus, since the
tree has $p+l-1$ edges, all loops and semiloops give the contribution $W^{m/2-(p+l)+1}$.
Using (\ref{e_gr}), we obtain that
the contribution of the tree edges is $N\cdot W^{2(p+l-1)}$. Therefore, since any $x_{i_k}$
has also the coefficient $W^{-1}$, we get that the sum over $i_1,\ldots,i_{p+l}$ is bounded by
$N/W^{m/2-(p+l)+1}$.
Evidently $m$ is even, and hence $m/2-p+1$ is integer.

Now let us prove (\ref{sum}). Consider two cases

(1) \textbf{The case $l=0$.}

First consider the case when we get $\langle E_n[g]\rangle_*$ at some step. Then $l=0$ and $H_0$ is a tree
with $p$ vertices and some loops and semiloops, where the degree of each vertex is at least $3$.
Hence, $m\ge 3p$, where $m$ is the sum of all degrees, and thus $m/2-p+1\ge m/6+1>1$,
which means $m/2-p+1\ge 2$. Therefore, $N/W^{m/2-k+1}=N/W^2=o(1)$, and hence (\ref{bound})
implies (\ref{sum}).

(2) \textbf{The case $l>0$.}

Let now $l>0$. Set $k=p+l$. Using (\ref{bound}), we get that the sum over all vertices is not greater
than $N/W^{m/2-k+1}$. Since $H_0$ is a connected graph, we have
$m\ge 2(k-1)$, i.e. $m/2-k+1\ge 0$. The sum in (\ref{sum}) has a factor $W^{-\kappa \alpha}$.
In addition, $m+\alpha\ge 3k$ and $m=2s_\kappa$, since
we did $s_\kappa$ steps and thus obtained $s_\kappa$ edges. If $m/2-k+1\ge 2$, then the sum can
be bounded by
$W^{-\kappa \alpha}N/W^2=o(1)$. Hence, we are left to consider the case $0\le m/2-k+1\le 1$, i.e.
$2(k-1)\le m\le 2k$. Therefore, we get $k\ge s_\kappa$ (because $m=2s_\kappa$), thus
\[
2s_\kappa+\alpha=m+\alpha\ge 3k\ge 3 s_\kappa,
\]
which implies $\alpha>s_\kappa$. Hence, the sum in (\ref{sum}) is bounded by
$$W^{-\kappa \alpha}N/W^{m/2-k+1}\le N/W^{\kappa \alpha}=o(1),$$
which gives (\ref{sum}).
\end{proof}

Now (\ref{sum}) implies (\ref{ineqv}) with $\langle\dots\rangle_*$ and hence
with $\langle\dots\rangle_{0,*}$ (see (\ref{okr*})).

\end{proof}

\textbf{Proof of Lemma \ref{l:in}.}

We can write for $x\in (-\delta,\delta)$
\[
\widetilde{\varphi}_{\pm}(x):=\exp\{-\varphi_{\pm}(x)\}-1=\sum\limits_{l=3}^\infty \phi_lx^l,
\]
where $|\phi_l|\le (C_0)^l$. Thus,
\begin{equation}\label{in_vsp}
|\langle E_n[\varphi_{\pm}]\rangle_0-1|=
\Big|\langle \prod\limits_{j=-n}^n(1+\sum\limits_{l=3}^\infty \phi_lx^l)\rangle_0-1\Big|=\Big|
\sum\limits_{k=3}^{\infty}\Sigma_k^0\Big|,
\end{equation}
where $\Sigma_k^0$, $\Sigma_k$ are the sums of all terms
$\langle \prod_{l=1}^s(\phi_{k_l}x_{i_l}^{k_l}/W^{k_l})\rangle_0$
and $\langle\prod_{l=1}^s(\phi_{k_l}x_{i_l}^{k_l}/W^{k_l})\rangle$ respectively
with $k_1+\ldots+k_s=k$, $k_i\in \{3,\ldots,k\}$ ($\Sigma_{k,*}^0$, $\Sigma_{k,*}$ are defined by the same way
with $\langle\ldots\rangle_*$ instead of $\langle\ldots\rangle$ and with $|\phi_l|$ instead of $\phi_l$).

Also denote
\begin{equation*}
S_s^0=\sum\limits_{i_1<\ldots<i_s} \langle \prod\limits_{l=1}^s |x_{i_l}/W|^3\rangle_{0,*}.
\end{equation*}
According to (\ref{otn_mod}), we have
\[
|\langle(\phi_{k_1}x_{i_1}^{k_1}/W^{k_1})\ldots (\phi_{k_s}x_{i_s}^{k_s}/W^{k_s})\rangle_0|\le (C_0)^k\delta^{k-3s}
e^{Cn/W} \langle \prod\limits_{l=1}^s |x_{i_l}/W|^3\rangle_{0,*}.
\]
Hence, since the number of partitions of $k$ to $s$ non-zero summands is not grater than $\binom{k}{s}$, we
obtain
\begin{equation}\label{ots_sig1}
|\Sigma_k^0|\le e^{Cn/W}(C_0)^k\sum\limits_{s=1}^{k/3}\binom{k}{s}
\delta^{k-3s}S_s^0\le e^{Cn/W}(2C_0)^k\sum\limits_{s=1}^{k/3}\delta^{k-3s}S_s^0.
\end{equation}
Note now that
\begin{equation}\label{in_kub}
|x|^3\le \dfrac{p^{-1}x^2+px^4}{2},
\end{equation}
and hence, again according to (\ref{otn_mod}), we get for any $p>0$
\begin{equation}\label{s_0k}
S_s^0\le \sum\limits_{i_1<\ldots<i_s}\Big\langle \prod\limits_{l=1}^s\dfrac{p^{-1}x_{i_l}^2/W^2+px_{i_l}^4/W^4}{2}\Big\rangle_{0,*}
:=\widetilde{S}_{s}^0.
\end{equation}
Besides,
\[
1+q\cdot \dfrac{p^{-1}x^2+px^4}{2}\le (1+\dfrac{qx^2}{2p})(1+\dfrac{pqx^4}{2})\le e^{qx^2/2p}
(1+\dfrac{pqx^4}{2}),
\]
and thus, taking in account (\ref{otn_mod}), we have for any $p,q>0$ such that $q/p<c_0$ with $c_0$ of
(\ref{f*})
\begin{align*}
1+\sum\limits_{k=1}^{2n+1}q^k
\widetilde{S}_{k}^0&=\Big \langle \prod\limits_{j=-n}^n
\Big (1+q\cdot \dfrac{p^{-1}x_{j}^2/W^2+px_{j}^4/W^4}{2}\Big) \Big \rangle_{0,*}\\
&\le \Big \langle e^{q/2p\cdot\sum_jx_j^2/W^2}\prod\limits_{j=-n}^n
\Big (1+\dfrac{pq x_{j}^4/W^4}{2}\Big) \Big \rangle_{0,*}\\
&\le e^{C_{p,q}n/W}\Big \langle \prod\limits_{j=-n}^n
\Big (1+\dfrac{pq x_{j}^4/W^4}{2}\Big) \Big \rangle_{0,c_0-q/p}\le Ce^{C_{p,q}n/W},
\end{align*}
where the last inequality holds in view of Lemma \ref{l:s_Wick} ($\langle\ldots\rangle_{0,c_0-q/p}$ means (\ref{angle})
with $\gamma=c_0-q/p$). This gives
\[
\widetilde{S}_{k}^0\le e^{C_{p,q}n/W}/q^k,
\]
and we have from (\ref{s_0k}) for $k>Cn/W$ with sufficiently big $C$
\begin{equation}\label{in_s_k}
S_k^0\le e^{(C_{p,q}+C_1)n/W}q^{-k}.
\end{equation}
Take $q>(2|C_0|e)^3$. Then (\ref{ots_sig}) and (\ref{in_s_k}) yield for $k>C_1n/W$
\begin{equation}\label{ots_sig}
|\Sigma_k^0|\le e^{Cn/W}\sum\limits_{s=1}^{k/3}(2C_0)^k\delta^{k-3s}q^{-s}
\le
\dfrac{e^{Cn/W}(2C_0)^k}{q^{k/3}}\sum\limits_{s=1}^{k/3}(\delta^3q)^{k/3-s}\le 2e^{Cn/W-k}.
\end{equation}
This and (\ref{in_vsp}) imply
\begin{equation}\label{in_vsp1}
|\langle E_n[\varphi_{\pm}]\rangle_0-1|\le \Big|\sum\limits_{k=3}^{Cn/W}\Sigma_0^k
\Big |+e^{-C_1n/W}.
\end{equation}
Taking into account that the number of distributions of $k$ items into $n$ boxes is $\binom{n+k-1}{k}$
and using the assertion (3) of Lemma \ref{l:okr}, we get
\begin{align*}
|Z_\gamma|^{-1}\Big|\int\limits_{\max |x_i|>\delta W} \sum\limits_{s=1}^{k/3}\sum\limits_{k_1,\ldots,k_s}\sum_{i_1<\ldots<i_s}
\dfrac{|x_{i_1}^{k_1}\ldots x_{i_s}^{k_s}|}{W^{k}}\mu_{\gamma}(x)dx\Big|\\
\le e^{-C\delta^2W}\binom{n+k-1}{k}\le e^{2k\log (n/k)-C\delta^2W}\le e^{-C\delta^2W/4},
\end{align*}
where the second sum in the first line is over all collections $\{k_i\}_{i=1}^s$, $\sum k_i=k$, $k_i\in\{3,\ldots,k\}$.
This yields
\[
\Sigma_k=\Sigma_k^0+e^{-C\delta^2W/4},\quad \Sigma_{k,*}=\Sigma_{k,*}^0+e^{-C\delta^2W/4},\quad k\le Cn/W,
\]
and thus by
(\ref{av_*}) we have
\begin{align}\label{*}
&\Big|\sum\limits_{k=1}^{Cn/W}
\Sigma_k^0\Big|= \Big|\sum\limits_{k=1}^{Cn/W}
\Sigma_{k}\Big|+e^{-C\delta^2W/4}
\le \sum\limits_{k=1}^{Cn/W}
\Sigma_{k,*}+e^{-C\delta^2W/4}\\ \notag
&\le \sum\limits_{k=1}^{Cn/W}
\Sigma_{k,*}^0+2e^{-C\delta^2W/4}\le \langle
\prod\limits_{i=-n}^n(1+\sum\limits_{l=3}^\infty |\phi_l|x_i^l/W^l)-1\rangle_{0,*}+2e^{-C\delta^2W/4}.
\end{align}
Since $|\phi_l|\le (C_0)^l$, there exists $C$ such that
\begin{equation}\label{**}
1+\sum\limits_{l=3}^\infty |\phi_l|x^l/W^l\le e^{C(x^3/W^3+x^4/W^4)},\quad x\in(-\delta W,\delta W).
\end{equation}
This, Lemma \ref{l:s_Wick}, and (\ref{in_vsp1}) yield (\ref{in}). $\quad \quad\quad\quad\quad\quad\quad\quad
\quad\quad\quad\quad\quad\quad\quad\quad\quad\quad\quad\quad\Box$
\bigskip

Define the following partial ordering. Let $\Phi_1(x_1,\ldots,x_n)$, $\Phi_2(x_1,\ldots,x_n)$ be two analytic
functions in some ball centered at $0$, and let the coefficients of the Taylor expansion of $\Phi_2$ be
non-negative. Then we write
\begin{equation}\label{major}
\Phi_1\prec \Phi_2
\end{equation}
if the absolute value of each coefficient of the Taylor expansion of $\Phi_1$ does not exceed the corresponding
coefficient of $\Phi_2$.

 It is easy to see that
\begin{equation}\label{prop}
\Phi_3\prec \Phi_1,\quad \Phi_4\prec \Phi_2 \Rightarrow \Phi_3\Phi_4\prec \Phi_1\Phi_2.
\end{equation}

We will need
\begin{lemma}\label{l:maj}
(i) $\quad$ Let $\,|\phi_1|\le CW^{-1}$, $|\phi_2|=o(1)$ and $|\phi_k|\le C^k$ for some absolute
constant $C>0$. Then
\begin{align}\label{gen_phi}
&\langle\prod\limits_{i=-n}^n(1+\sum\limits_{l=1}^\infty |\phi_l|x_i^l/W^l)\rangle_{0,*}\le
\exp\{C|\phi_2|n/W\}.
\end{align}

(ii) $\quad$ If
\begin{equation*}
\Phi_1(s_1,\ldots,s_n)-\Phi_1(0,\ldots,0)\prec \prod\limits_{j=1}^n(1+q(s_i))-1,
\end{equation*}
where $s_i=s(\tilde{a\vphantom{b}}_i/W,\tilde{a\vphantom{b}}_{i+1}/W,\ldots,\tilde{a\vphantom{b}}_{i+k}/W,
\tilde{b}_i/W,\tilde{b}_{i+1}/W,\ldots,\tilde{b}_{i+k}/W)$ is a polynomial with $s(0,\ldots,0)=0$, $k$ is
an $n$-independent constant, and $q(s)=\sum_{j=1}^{\infty}|c_j|s^j$ with $|c_1|\le CW^{-1}$, $|c_2|=o(1)$, $|c_l|\le (C_0)^l$,
$l\ge 3$, then
\begin{equation*}
\big|\langle\Phi_1(s_1,\ldots,s_n)-\Phi_1(0,\ldots,0)\rangle_0\big|\le \langle\prod\limits_{j=1}^n(1+q(s_i^*))-1\rangle_{0,*}+e^{-Cn/W},
\end{equation*}
where $s_i^*$ is obtained from $s_i$ by replacing the coefficients of $s$ with their absolute values.
\end{lemma}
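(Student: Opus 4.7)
For part (i), the strategy is to extend the Wick-theorem analysis of Lemma~\ref{l:s_Wick} to accommodate the low-order contributions $l=1,2$ under the smallness hypotheses $|\phi_1|\le CW^{-1}$, $|\phi_2|=o(1)$. The starting point is the pointwise bound $1+y\le e^y$, which gives
\[
\prod_{i=-n}^n\Big(1+\sum_{l=1}^\infty |\phi_l|x_i^l/W^l\Big)\le \exp\{A_1+A_2+A_3\},
\]
with $A_1=|\phi_1|\sum_i x_i/W$, $A_2=|\phi_2|\sum_i x_i^2/W^2$, and $A_3=\sum_{l\ge 3}|\phi_l|\sum_i x_i^l/W^l$. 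The factor $e^{A_2}$ combines with the Gaussian density $\mu_{\Re\gamma}$ to produce $\mu_{\Re\gamma-|\phi_2|}$, and the corresponding normalization ratio $Z_{\Re\gamma-|\phi_2|}/Z_{\Re\gamma}$ equals $\exp\{C|\phi_2|n/W\}$ by the explicit formula in Lemma~\ref{l:okr}(1); this is the target factor.

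Under the shifted Gaussian I would apply Cauchy--Schwarz to obtain $\langle e^{A_1+A_3}\rangle\le \langle e^{2A_1}\rangle^{1/2}\langle e^{2A_3}\rangle^{1/2}$. The linear factor is a pure Gaussian moment generating function: by (\ref{e_gr}) the variance of $\sum_i x_i/W$ is $n/(\Re\gamma-|\phi_2|)$, so $\langle e^{2A_1}\rangle_*=\exp\{2|\phi_1|^2 n/(\Re\gamma-|\phi_2|)\}=1+o(1)$, since $|\phi_1|^2 n\le Cn/W^2 = O(N^{-\theta})\to 0$. The higher-order factor is handled by Lemma~\ref{l:s_Wick}: the hypothesis $|\phi_l|\le C^l$ yields $1+\sum_{l\ge 3}|\phi_l||x|^l/W^l\le \exp\{C(|x|^3/W^3+|x|^4/W^4)\}$ on $|x|\le \delta W$ (as in (\ref{**})), so the tree-graph Wick analysis applied to $g(x)=-C(x^3+x^4)$ gives $\langle e^{2A_3}\rangle_{0,*}=1+o(1)$. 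Switching back from $\langle\cdot\rangle_*$ to $\langle\cdot\rangle_{0,*}$ incurs only a factor $1+O(e^{-C\delta^2W})$ by Lemma~\ref{l:okr}(2), and combining everything yields the stated bound.

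For part (ii), the plan is to transport the majorization through the averaging. I would expand $\Phi_1(s_1,\ldots,s_n)-\Phi_1(0,\ldots,0)$ as a power series in the underlying variables $\tilde a_j/W,\tilde b_j/W$; the hypothesis bounds each coefficient in absolute value by the corresponding nonnegative coefficient of $\prod_j(1+q(s_i))-1$. Applying the triangle inequality monomial by monomial and using (\ref{av_*}) to pass from $\langle\cdot\rangle_0$ to $\langle\cdot\rangle_{0,*}$ replaces each coefficient of $s$ by its absolute value; by (\ref{prop}) the majorant then reassembles into $\langle\prod_j(1+q(s_i^*))-1\rangle_{0,*}$. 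The discrepancies between $\langle\cdot\rangle_0$ and $\langle\cdot\rangle$, and between $\langle\cdot\rangle_*$ and $\langle\cdot\rangle_{0,*}$, are controlled by Lemma~\ref{l:okr}(2) and give the residual $e^{-Cn/W}$.

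The principal obstacle in (i) is verifying that $\langle e^{2A_3}\rangle_{0,*}$ is $O(1)$: the Wick-graph argument of Lemma~\ref{l:s_Wick} classifies connected multigraphs whose marked vertices have degree at least three, and one must check that doubling the exponent and shifting $\Re\gamma$ by $|\phi_2|=o(1)$ do not disturb the combinatorial bound $N/W^{m/2-(p+l)+1}\le N/W^2$ of Lemma~\ref{l:subl}. In (ii) the subtle point is managing the interplay of the four different averages while ensuring that the replacement $s_i\mapsto s_i^*$ preserves both the majorization ordering $\prec$ and the polynomial structure needed to invoke (\ref{av_*}).
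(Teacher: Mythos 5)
Your proposal is correct and follows essentially the same route as the paper. For part (i) the paper proves the pointwise bound $1+\sum_{l\ge1}|\phi_l|x^l/W^l\le\exp\{|\phi_1|x/W+(|\phi_2|-|\phi_1|^2/2)x^2/W^2+C(x^3/W^3+x^4/W^4)\}$ and then applies Cauchy--Schwarz, grouping the linear and quadratic contributions together into one explicit Gaussian integral (yielding $\exp\{C_1n|\phi_1|^2+C_2|\phi_2|n/W\}$) and handling the cubic/quartic piece by Lemma~\ref{l:s_Wick}; you instead absorb the quadratic piece into the measure first (producing the partition-function ratio $\exp\{C|\phi_2|n/W\}$ directly) and then Cauchy--Schwarz the linear piece against the cubic/quartic piece. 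These are equivalent rearrangements of the same three ingredients --- pointwise exponential bound, Cauchy--Schwarz, and the Wick-graph estimate of Lemma~\ref{l:s_Wick} together with the $\langle\cdot\rangle_{0,*}$-vs-$\langle\cdot\rangle_*$ comparison from Lemma~\ref{l:okr}(2). For part (ii) both you and the paper rely on the stability of $\prec$ under polynomial substitution (replacing coefficients by absolute values) to push the majorization down to the underlying $\tilde a,\tilde b$ variables and then rerun the truncation-and-tail argument of Lemma~\ref{l:in}; your description compresses the bookkeeping (truncating at $k\le Cn/W$, passing from $\langle\cdot\rangle_0$ to $\langle\cdot\rangle$ to $\langle\cdot\rangle_*$ via (\ref{av_*}) and back) but identifies the correct mechanism and error sources.
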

\begin{proof}
The proof of the lemma essentially repeats the proof of Lemma \ref{l:in}. Indeed,
using
\begin{align*}
|\phi_{2}|\,x^2&\le \dfrac{p^{-1}x^2+px^4}{2},\quad |\phi_2|=o(1),\\
|\phi_1||x|&\le |x|/W\le \dfrac{n^{-1}+(n/W^2)\,|x|^2}{2}\le \dfrac{p^{-1}x^2+px^4}{2}+Cn^{-1}
\end{align*}
instead of (\ref{in_kub}), we can prove (\ref{*})
for the series started from $l=1$ with $|\phi_1|\le CW^{-1}$, $|\phi_2|=o(1)$. Besides, in view of (\ref{**})
\[
1+\sum\limits_{l=1}^\infty |\phi_l|x^l/W^l\le e^{|\phi_1|\,x/W+(|\phi_2|-|\phi_1|^2/2)\,x^2/W^2+C(x^3/W^3+x^4/W^4)},
\quad x\in(-\delta W,\delta W),
\]
and hence the Cauchy-Schwarz inequality yields
\begin{align*}\notag
&\langle\prod\limits_{i=-n}^n(1+\sum\limits_{l=1}^\infty |\phi_l|x_i^l/W^l)\rangle_{0,*}\le
\Big\langle \exp\Big\{\sum\limits_{i=-n}^n2C(x_i^3/W^3+x_i^4/W^4)\Big\}\Big\rangle_{0,*}^{1/2}\\
&\times \Big\langle \exp\Big\{\sum\limits_{i=-n}^n(2|\phi_i|\,x/W+(2|\phi_2|-|\phi_1|^2)
\,x_i^2/W^2\Big\}\Big\rangle_{0,*}^{1/2}\\ \notag
&\le \exp\{C_1n|\phi_1|^2+C_2|\phi_2|n/W\}(1+o(1))\le \exp\{C|\phi_2|n/W\},
\end{align*}
where, to obtain the third line, we use Lemma \ref{l:s_Wick}
for the first factor and take the Gaussian integral
for the second factor. This proves (\ref{gen_phi}).

The second assertion of the lemma follows from the fact that
if
\begin{equation*}
\Phi_1(s_1,\ldots,s_k)\prec \Phi_2(s_1,\ldots,s_k),
\end{equation*}
then, putting $s_i=P_i(x_1,\ldots,x_k)$ for some polynomials $P_i$, we get
\begin{equation*}
\Phi_1(P_1(x_1,\ldots,x_k),\ldots, P_k(x_1,\ldots,x_k) )
\prec \Phi_2 (P_1^*(x_1,\ldots,x_k),\ldots, P_k^*(x_1,\ldots,x_k)),
\end{equation*}
where $P_i^*$ is obtained from $P_i$ by replacing the coefficients of $P_i$ with their absolute values.
In addition, there exist polynomials $S_a$, $S_b$ such that $S_a(0)=S_b(0)=0$ and
\[
s(\tilde{a\vphantom{b}}_i/W,\tilde{a\vphantom{b}}_{i+1}/W,\ldots,\tilde{a\vphantom{b}}_{i+k}/W,
\tilde{b}_i/W,\tilde{b}_{i+1}/W,\ldots,\tilde{b}_{i+k}/W)\prec \prod\limits_{j=i}^{i+k}
S_a(\tilde{a\vphantom{b}}_j/W) \prod\limits_{j=i}^{i+k} S_b(\tilde{b}_j/W).
\]
Using this two facts, one can repeat the argument of the proof of Lemma \ref{l:in}.
\end{proof}

\subsection{Integration over the unitary group $U(2)$}

The integral over the unitary group $U(2)$ can be computed using the
well-known Harish Chandra/Itsykson-Zuber formula (see e.g. \cite{Me:91}, Appendix 5)
\begin{proposition}\label{p:Its-Z}
Let $C$ be a normal $p\times p$ matrix with distinct eigenvalues $\{c_i\}_{i=1}^p$ and
$D=\hbox{diag}\{d_1,\ldots,d_p\}$, $d_i\in \mathbb{R}$. Then
\begin{equation*}
\int \exp\{t\Tr CU^*DU\} d\,\mu(U)=\Big(\prod\limits_{j=1}^{p-1}j!\Big)
\dfrac{\det[\exp\{tc_id_j\}]_{i,j=1}^n}{t^{p(p-1)/2}\triangle(C)\triangle(D)},
\end{equation*}
where $t$ is some constant and $\triangle(C)$, $\triangle(D)$ are the Vandermonde determinants
for the eigenvalues
$\{c_i\}_{i=1}^p$, $\{d_i\}_{i=1}^p$ of $C$ and $D$.

Moreover,
\begin{multline}\label{Its-Zub}
\int\limits_{U(p)}\int\limits_\Omega \exp\Big\{-\dfrac{t}{2}\Tr (C-U^*DU)^2\Big\} \triangle^2(D)f(D)
d\mu(U)d\,D\\
=\Big(\prod\limits_{j=1}^pj!\Big)\cdot t^{-p(p-1)/2} \int\limits_\Omega \exp\Big\{-\dfrac{t}{2}\sum\limits_{j=1}^p
(c_j-d_j)^2\Big\}\dfrac{\triangle(D)} {\triangle(C)} f(d_1,\ldots,d_p) d\,D,
\end{multline}
where $f(D)$ is any symmetric function of $\{d_j\}_{j=1}^p$ in the symmetric domain $\Omega$, $d\,D=
\prod\limits_{j=1}^pd\,d_j$.
\end{proposition}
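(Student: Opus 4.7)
The first identity is the classical Harish-Chandra/Itzykson-Zuber formula, which I would simply quote from \cite{Me:91}; its proof proceeds either via the Weyl character formula applied to the radial part of the Laplacian on $U(p)$, or by verifying directly that both sides satisfy the same heat equation on the space of Hermitian matrices with matching initial data. I would take this as given and devote the actual work to deriving (\ref{Its-Zub}) from it.

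The first step is to open the square. Since $C$ is normal with eigenvalues $\{c_j\}$ and $D=\mathrm{diag}\{d_1,\ldots,d_p\}$,
\[
-\tfrac{t}{2}\Tr(C-U^*DU)^2 = -\tfrac{t}{2}\sum_{j=1}^p c_j^2 - \tfrac{t}{2}\sum_{j=1}^p d_j^2 + t\,\Tr CU^*DU,
\]
and the first two sums are $U$-independent. I pull them out of the $U$-integration and apply the Harish-Chandra formula to the remaining integrand $\exp\{t\,\Tr CU^*DU\}$, which produces
\[
\frac{\prod_{j=1}^{p-1}j!}{t^{p(p-1)/2}\,\triangle(C)\triangle(D)}\,\det\!\big[\exp\{tc_id_j\}\big]_{i,j=1}^p.
\]
One copy of $\triangle(D)$ in the denominator cancels against $\triangle^2(D)$ in the integrand, leaving $\triangle(D)\,f(D)$ multiplying the determinant under $\int_\Omega dD$.

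The second step is the standard antisymmetrization collapse. I expand the determinant as $\sum_{\sigma\in S_p}\mathrm{sgn}(\sigma)\prod_i \exp\{tc_i d_{\sigma(i)}\}$ and relabel $d_i\mapsto d_{\sigma^{-1}(i)}$ inside the $\Omega$-integral for each summand. The symmetric function $f(D)$, the symmetric factor $\exp\{-\tfrac{t}{2}\sum d_j^2\}$, and the symmetric domain $\Omega$ are all invariant under this relabeling; $\triangle(D)$ contributes an extra $\mathrm{sgn}(\sigma)$, which cancels the permutation sign. Hence all $p!$ summands produce the same integral. Completing the square via $-\tfrac{t}{2}\sum d_j^2 + t\sum c_jd_j = -\tfrac{t}{2}\sum(c_j-d_j)^2 + \tfrac{t}{2}\sum c_j^2$ cancels the prefactor $\exp\{-\tfrac{t}{2}\sum c_j^2\}$, and the combinatorial constant assembles as $p!\cdot\prod_{j=1}^{p-1}j! = \prod_{j=1}^p j!$, yielding (\ref{Its-Zub}).

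There is no substantive obstacle here; the only care required is the sign bookkeeping between $\triangle(D)$ and $\mathrm{sgn}(\sigma)$ and the observation that the Harish-Chandra formula requires the eigenvalues of $C$ to be distinct. In the applications of the paper $C$ always appears as a diagonal matrix $A_{j-1}=\mathrm{diag}\{a_{j-1},b_{j-1}\}$ with generically distinct entries, so this hypothesis is automatic, and the degenerate case may be obtained by continuity if ever needed.
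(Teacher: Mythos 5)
The paper itself does not prove this proposition but merely cites Mehta's book; your derivation of~(\ref{Its-Zub}) from the Harish-Chandra/Itzykson-Zuber formula by opening the square, cancelling one factor of $\triangle(D)$, antisymmetrizing over the Leibniz expansion of the determinant (using the symmetry of $\Omega$, $f$, and the Gaussian weight, with $\triangle(D)$ absorbing the permutation sign), and completing the square is the standard argument and is correct. The bookkeeping $p!\cdot\prod_{j=1}^{p-1}j!=\prod_{j=1}^p j!$ and the use of normality of $C$ to write $\Tr C^2=\sum_j c_j^2$ are both handled properly.
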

The proof of the proposition can be found in \cite{Me:91}.

Moreover, it follows from the properties of the Haar measure on the unitary group U(2) that
\begin{equation*}
\int\limits_{U(2)} \prod\limits_{l,s=1}^2 V_{ls}^{p_{ls}}\bar{V}_{ls}^{q_{ls}}\exp\Big\{\Tr CV^*DV\Big\}d\mu(V)\ne 0
\end{equation*}
only if $p_{11}-q_{11}=p_{22}-q_{22}=-(p_{12}-q_{12})=-(p_{21}-q_{21})$. Since
\begin{align*}
&|(V_j)_{12}|^2=|(V_j)_{21}|^2,\\
& |(V_j)_{11}|^2=|(V_j)_{22}|^2=1-|(V_j)_{12}|^2\\
&V_{11}\bar{V}_{12}=-V_{21}\bar{V}_{22},
\end{align*}
this means that all non-zero moments of the measure $\exp\Big\{\Tr CV^*DV\Big\}d\mu(V)$ can be expressed via
expectations of $|(V_j)_{12}|^{2s}$. In addition,
\begin{equation}\label{un_mom}
\int\limits_{U(2)}
 |V_{12}|^{2s}e^{t(\Tr CV^*DV-\Tr CD)}d\mu(V)=(-1)^s\dfrac{d^s}{dx^s}\dfrac{1-e^{-x}}{x}
 \Big|_{x=t(c_1-c_2)(d_1-d_2)}.
\end{equation}

\section{Proof of the main theorem}
In this section we will prove Theorem \ref{thm:1} applying the steepest descent method
to the integral representation (\ref{F_0_1}).

\subsection{The bound for $\Sigma_c$}
\begin{lemma}\label{l:2} Let $\Sigma_c$ be the part of the integral in (\ref{F_0_1}) over the complement
of the domain $\Omega_\delta$, which is defined in (\ref{Om_delta}). Then
\begin{equation}\label{2}
|\Sigma_c|\le C_1W^{-8n-2}(4\pi)^{N}e^{-2Nc_0} e^{-C_2W^{1-2\kappa}},
\end{equation}
where $\kappa<\theta/8$ and $c_0=\Re f(a_\pm)$.
\end{lemma}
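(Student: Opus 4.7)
The strategy is to bound $|\Sigma_c|$ by moving absolute values inside the integral representation (\ref{F_0_1}), integrating out the unitary groups exactly via the Harish Chandra/Itsykson-Zuber formula of Proposition \ref{p:Its-Z}, and then combining the saddle-point structure of $f$ from Lemma \ref{l:min_L} with the Gaussian estimates for $\mu_\gamma$ from Lemma \ref{l:okr} to extract the decay $e^{-C_2W^{1-2\kappa}}$ from the constraint of lying in $\Omega_\delta^c$.

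For the first step, in (\ref{F_0_1}) the exponential containing $\hat\xi$ has purely imaginary argument (since $a_j,b_j\in\mathbb{R}$ are eigenvalues of Hermitian matrices, $\hat\xi$ is real diagonal, and $P_jU_{-n}$ is unitary), hence has modulus $1$; the modulus of $\exp\{-\sum(f(a_j)+f(b_j))\}$ equals $\exp\{-\sum(f_*(a_j)+f_*(b_j))-2Nc_0\}$ by (\ref{f}); the remaining factors are already non-negative; and $\int d\mu(U_{-n})=1$. Each $V_j$ integral is then evaluated using Proposition \ref{p:Its-Z} with $p=2$:
\[
\int_{U(2)} e^{-\frac{W^2}{2}\Tr(V_j^*A_jV_j-A_{j-1})^2}\,d\mu(V_j)=\frac{e^{-\frac{W^2}{2}[(a_j-a_{j-1})^2+(b_j-b_{j-1})^2]}-e^{-\frac{W^2}{2}[(a_j-b_{j-1})^2+(b_j-a_{j-1})^2]}}{W^2(a_{j-1}-b_{j-1})(a_j-b_j)}.
\]
Bounding the numerator by the sum of its two positive exponentials and telescoping the Vandermonde denominators against the $\prod(a_j-b_j)^2$ Jacobian factors produces an overall $W^{-4n}$ prefactor, a harmless boundary factor $|a_{-n}-b_{-n}||a_n-b_n|$, and a non-negative integrand consisting of at most $2^{2n}$ products of independent Gaussian-chain factors in $(a_j)$ and $(b_j)$, each additionally weighted by $\exp\{-\sum(f_*(a_j)+f_*(b_j))\}$.

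Next, any configuration in $\Omega_\delta^c$ satisfies one of two scenarios: (i) some $a_{j_0}$ or $b_{j_0}$ lies outside $U_\delta(a_+)\cup U_\delta(a_-)$, or (ii) all $(a_j,b_j)$ lie in the union of the two $\delta$-neighborhoods but the saddle-assignment pattern is inhomogeneous. Case (ii) forces a transition of the chain between the $O(1)$-separated saddles at some index, paying $\gtrsim \frac{W^2}{2}(a_+-a_-)^2$ in the Gaussian exponent and giving $e^{-cW^2}\ll e^{-C_2W^{1-2\kappa}}$. For case (i), Lemma \ref{l:min_L} gives $f_*(a_{j_0})\geq \alpha\delta^2$ (and $\geq C>0$ when $a_{j_0}$ is far from both saddles). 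After rescaling $a_j=a_\pm+\tilde a_j/W$ in each saddle neighborhood, the Gaussian chain together with the quadratic part of $f_*$ becomes the measure $\mu_\gamma$ of (\ref{mu}) with $\gamma=\Re c_\pm$, and the constraint becomes $|\tilde a_{j_0}|>\delta W$. Since the diagonal covariance $\langle\tilde a_{j_0}^2\rangle_*=O(W)$ by (\ref{G_as}), a Gaussian tail estimate analogous to assertion (2) of Lemma \ref{l:okr} bounds this event by $e^{-cW^{1-2\kappa}}$. Combined with the $W^{-4n}$ unitary factor, the Gaussian-chain normalizations, and cancellation against $\mdet^{-2}J$ in (\ref{F_0_1}), this produces the claimed $W^{-8n-2}(4\pi)^Ne^{-2Nc_0}e^{-C_2W^{1-2\kappa}}$ bound.

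The main obstacle is carrying out case (i) uniformly across the full range of $a_{j_0}$ positions (the thin annulus around $a_\pm$ giving $\delta^2$ versus the far region giving $O(1)$) and tracking the polynomial boundary factor $|a_{-n}-b_{-n}||a_n-b_n|$ during the Gaussian integration; the latter can be controlled by the Cauchy-Schwarz bound on $(M_*^{-1})_{ij}$ used in the proof of Lemma \ref{l:Wick+} together with the partition-function ratio bound (\ref{otn_mod}) that allows passing between measures with different $\gamma$.
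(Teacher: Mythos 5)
Your opening steps match the paper exactly: insert absolute values, note that the $\hat\xi$-exponential has unit modulus, extract $e^{-2Nc_0}\exp\{-\sum(f_*(a_j)+f_*(b_j))\}$, and use the Harish Chandra/Itzykson--Zuber identity to integrate out the unitary groups, leaving a Gaussian chain in the $\{a_j\}$ and $\{b_j\}$ weighted by the $f_*$-penalties and the boundary factor $|a_{-n}-b_{-n}||a_n-b_n|$.

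Where you diverge is in the decomposition of $\Omega_\delta^c$. The paper splits by jump size: first it separates off configurations with a single large local fluctuation $|a_j-a_{j-1}|\ge Wn^{-\theta/4}$ (trivially suppressed by the chain), and then, for slowly varying configurations, it introduces the \emph{peak decomposition}: it identifies maximal stretches of consecutive indices over which the chain stays on one side of zero and exceeds $\delta W$ somewhere, \emph{deletes} the chain couplings $(a_{p_r+1}-a_{p_r})^2$, $(a_{q_r}-a_{q_r-1})^2$ at the endpoints of each peak so that the peak segment decouples, and then applies the $\mu_\gamma$ tail estimate of Lemma~\ref{l:okr}(2) on each decoupled segment (which now is anchored to a \emph{single} saddle via the global bound $f_*(x)\ge\alpha(x-a_\pm)^2$ of~(\ref{in_left})). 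Finally it sums over the number and placement of peaks with a $\binom{2n+1}{2k}$ counting factor. You instead split by saddle-assignment pattern: either some $a_{j_0}$ is outside both $\delta$-neighborhoods (your case (i)), or all points are in the neighborhoods but the pattern is inhomogeneous (your case (ii)). Your case (ii) is fine: an inhomogeneous pattern within the neighborhoods forces a single $O(1)$ jump in the $a$- or $b$-chain, costing $O(W^2)$.

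The gap is in your case (i). You claim that "rescaling $a_j=a_\pm+\tilde a_j/W$ in each saddle neighborhood" converts the integrand into $\mu_\gamma$ with the constraint $|\tilde a_{j_0}|>\delta W$, and then Lemma~\ref{l:okr}(2) applies. But this reduction requires a \emph{single} anchor: the chain term $(a_j-a_{j-1})^2$ only becomes $(\tilde a_j-\tilde a_{j-1})^2$ if $a_j$ and $a_{j-1}$ are expanded around the same point. As soon as the chain visits both sides of zero --- which is exactly the configuration responsible for the borderline cost, since a one-sided excursion is easy and a direct jump between saddles is $e^{-cW^2}$ --- there is no consistent choice of $a_\pm$, and one cannot write the integrand as a single $\mu_\gamma$. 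Moreover the $f_*$ penalty in the middle region is only $\gtrsim\delta^2=W^{-2\kappa}$ per index, and the chain cost over a gradual crossing of length $m$ is of order $W^2/m$ by Cauchy--Schwarz, which for $m\sim n$ is only $N^\theta$; a short computation shows that for moderate $\theta$ this alone falls short of $W^{1-2\kappa}$, so the two penalties really must be combined over each contiguous excursion, which is exactly what the peak decomposition accomplishes. Your final paragraph correctly flags "carrying out case (i) uniformly across the full range of $a_{j_0}$ positions" as the main obstacle, but the tools you offer there (Cauchy--Schwarz on $(M_*^{-1})_{ij}$ and the ratio bound~(\ref{otn_mod})) address the boundary factor, not this obstacle. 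The peak decomposition --- decoupling the chain at sign changes, applying the Gaussian tail estimate per segment, and summing over the combinatorics of segments --- is the missing ingredient.
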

\begin{proof}
According to (\ref{F_0_1}), we have
\begin{align*}
|\Sigma_c|&\le e^{-2Nc_0}\cdot \int\limits_{\Omega_\delta^C}\exp\Big\{-
\sum\limits_{j=-n}^n(f_*(a_j)+f_*(b_j))\Big\}\\
&\times\exp\Big\{-\frac{W^2}{2}\sum\limits_{j=-n+1}^n\Tr
(V_jA_jV_j^*-A_{j-1})^2\Big\}\\
&\times\prod\limits_{l=-n}^n(a_l-b_l)^2d\,\mu(U_{-n})\,d\overline{a\vphantom{b}}\,
d\overline{b}\, \prod\limits_{p=-n+1}^nd\mu(V_p),
\end{align*}
where $f_*$ and $c_0$ are defined in (\ref{f*}). Here we insert the absolute value inside the integral
and use that
\[
\Big|\exp\Big\{-
\frac{i}{N\rho(\lambda_0)}\sum\limits_{j=-n}^n\Tr \big(P_jU_{-n}\big)^*A_j\,
(P_jU_{-n}\big)\hat{\xi}\Big\}\Big|=1.
\]
To simplify formulas below, set
\begin{equation}\label{I_0}
I_0= W^{-8n-2}(4\pi)^{N}e^{-2Nc_0}\cdot \left|\mdet^{-1}\left(-\Delta+2c_+/W^2\right)\right|.
\end{equation}
As we will see below, $I_0$ is an order of $\Sigma$ (see Lemma \ref{l:sigma}).
Also recall that, according to Lemma \ref{l:okr}, eq. (\ref{f_eq}),
\begin{equation}\label{in_det}
e^{-C_1N/W}\le \left|\mdet^{-1}\left(-\Delta+2c_+/W^2\right)\right|\le e^{-C_2N/W},
\end{equation}
and that $W^2=N^{1+\theta}$, $\kappa<\theta/8$,
and hence $CN/W\ll W^{1-2\kappa}$.

We are going to prove that
\begin{equation}\label{in_i_0}
|\Sigma_c/I_0|\le e^{-CW^{1-2\kappa}}.
\end{equation}
Using Harish Chandra/Itzykson -- Zuber formula (\ref{Its-Zub}), we get (recall that\\
$A_j=\hbox{diag}\,\{a_j,b_j\}$, $j=-n,\ldots,n$ and $\Omega_\delta^C$ is still a symmetric domain)
\begin{align} \notag
&I_0^{-1}\cdot|\Sigma_c|\le \dfrac{2^{2n}e^{-2Nc_0}}{W^{4n}I_0}\int\limits_{\Omega_\delta^C}
\exp\Big\{-\frac{W^2}{2}\sum\limits_{j=-n+1}^n\Big(
(a_j-a_{j-1})^2+
(b_j-b_{j-1})^2\Big)\Big\}\\ \label{sig_vn}
&\times\exp\Big\{-\sum\limits_{j=-n}^n(f_*(a_j)+f_*(b_j))\Big\}\,
 |(a_{-n}-b_{-n})(a_n-b_n)|\,d\overline{a\vphantom{b}}\, d\overline{b}\\ \notag
&\le CW^{-2}(2\pi)^{-N} e^{C_1N/W}\int\limits_{W\Omega_\delta^C}\exp\Big\{-\frac{1}{2}
 \sum\limits_{j=-n+1}^n\Big(
(a_j-a_{j-1})^2+(b_j-b_{j-1})^2\Big)\Big\}\\ \notag
&\times\exp\Big\{-\sum\limits_{j=-n}^n(f_*(a_j/W)+f_*(b_j/W))\Big\}\,
 |(a_{-n}-b_{-n})(a_n-b_n)|\,d\overline{a\vphantom{b}}\, d\overline{b},
\end{align}
where $f_*$ and $c_0$ are defined in (\ref{f*}). The first line here is obtained performing
recursively the integral over $V_j$ and $A_j$ starting from $j=n$ and going backwards. At each step the
integral can be written in the form (\ref{Its-Zub}), with a suitable choice of the
function $f$. In the third line we did the change
$a_j\to a_j/W$, $b_j\to b_j/W$ and used (\ref{I_0}) -- (\ref{in_det}).

Consider $a_{-n},\ldots,a_n$ (for $b_{-n},\ldots,b_n$ we have the same).
Let us divide all configurations of $\{a_j\}$ into two parts.

(i) \textbf{First part: configurations where there is at least one local large scale fluctuation.}

This means that there exists an index $j_0$ for which $|a_{j_0}-a_{j_0-1}|\ge Wn^{-\theta/4}$,
where $\theta$ is defined in the condition of Theorem \ref{thm:1}.
Let us prove that the integral (\ref{sig_vn}) over such configuration obey (\ref{2}).

Indeed, in this case
\[
\frac{1}{2}\sum\limits_{j=-n+1}^n
(a_j-a_{j-1})^2\ge CW^2n^{-\theta/2}=Cn^{1+\theta/2}.
\]
Besides, Lemma \ref{l:min_L} yields
\begin{align}\notag
f_*(x)&\ge \alpha\, (x-a_-)^2,\quad\quad x\le a_-,\\ \label{f_in}
f_*(x)&\ge \alpha\, (x-a_+)^2,\quad\quad x\ge a_+,
\end{align}
and hence the integral in (\ref{sig_vn}) over $\prod_{q=-n}^nd a_q$ can be bounded by
\begin{align*}
&\exp\{-Cn^{1+\theta/2}\}\int\limits_{|a_{j_0}-a_{j_0-1}|\ge Wn^{-\varepsilon}}
\exp\Big\{-\sum\limits_{j=-n}^nf_*(a_j/W)\Big\}\prod_{q=-n}^nd a_q\\
&\le\exp\{-Cn^{1+\theta/2}\}\int
\exp\Big\{-\sum\limits_{j=-n}^nf_*(a_j/W)\Big\}\prod_{q=-n}^nd a_q\\
&\le (C_1W)^N\exp\{-Cn^{1+\theta/2}\}\le \exp\{-Cn^{1+\theta/2}/2\}.
\end{align*}
Here we use (\ref{f_in}) to estimate the integral in the second line (after change $a_j\to Wa_j$ the integral
over $da_j$ converges and thus can be bounded by the constant). By the same way in view of (\ref{f_in}) the
integral over $\prod_{q=-n}^nd b_q$ can be bounded by $(CW)^N$ (because $\sum(b_j-b_{j-1})^2\ge 0$). Hence,
the integral (\ref{sig_vn}) over the configuration with at least one local large scale fluctuation (in
$\{a_j\}$ or $\{b_j\}$) is bounded by (recall $W^2=N^{1+\theta}$, $0<\theta\le 1$)
\[
(C_1W)^{N}W^{-2}e^{C_2N/W}\cdot \exp\{-Cn^{1+\theta/2}/2\}\le \exp\{-Cn^{1+\theta/2}/4\},
\]
which obeys (\ref{2}). Note that the expression $|(a_n-b_n)(a_{-n}-b_{-n})|$ does not play any important
role in the bounds.

(ii) \textbf{Second part: no large local scale fluctuations.}

Let now $|a_j-a_{j-1}|\le Wn^{-\theta/4}$, $j=-n+1,\ldots,n$ (and the same is valid for $\{b_j\}$).
Without loss of generality let $a_{-n}<0$.
Let $l_1$ be the first number such that $a_{l_1}>\delta W$, where $\delta>0$ is sufficiently small.
Consider the nearest to $l_1$ indices
$p_1<l_1$ and $q_1>l_1$ such that $a_{p_1}\le 0$, $a_{q_1}\le 0$. We will call the sequence
$a_{p_1+1},\ldots,a_{q_1-1}$
a ``peak''.
Remove from the sum $\sum\limits_{j=-n+1}^n
(a_j-a_{j-1})^2$ the terms $(a_{p_1+1}-a_{p_1})^2$ and $(a_{q_1}-a_{q_1-1})^2$ (the integral becomes larger).
Then take the first number $l_2>q_1$ such that $a_{l_2}>\delta W$ and
the nearest to $l_2$ indices
$p_2<l_2$ and $q_2>l_2$ such that $a_{p_2}\le 0$, $a_{q_2}\le 0$ and again remove the terms
$(a_{p_2+1}-a_{p_2})^2$ and $(a_{q_2}-a_{q_2-1})^2$, and so on (the last peak can be from
$a_{p_j+1}$ to $a_{-n}$). Assume that we obtain $k$ of such peaks.

Consider one of them. Let it consist of $m+1$ positive numbers $a_{p_r+1},\ldots,a_{p_r+m+1}=a_{q_r-1}$.
Since $|a_j-a_{j-1}|\le W n^{-\theta/4}$, we have $m\ge n^{\theta/4}\delta$
and taking into account that $a_{p_r}<0$, we have $|a_{p_r+1}/W-a_+|>\delta$, $|a_{l_r}-a_{p_r+1}|
\ge\delta W/2$. Let $Q_{p_r,m}$ be the domain of configurations such that $a_{p_r+1},\ldots,a_{p_r+m+1}$
form the peak.

Since $a_{p_r+1},\ldots,a_{p_r+m+1}>0$, according to Lemma \ref{l:min_L}, we can write
\[
f_*(a_{p_r+s}/W)\ge \alpha\, (a_{p_r+s}/W-a_+)^2,\quad s=1,\ldots,m+1.
\]
Using the inequality in the r.h.s. of (\ref{sig_vn}) and applying Lemma \ref{l:okr} to the integral
over $d a_{p_r+1}\ldots da_{p_r+m+1}$, we get (recall that
$|a_{p_r+1}/W-a_+|>\delta$ and $|a_{l_r}-a_{p_r+1}|\ge\delta W/2$)
\begin{align}\label{peak}
&\int\limits_{Q_{p_r,s}}
\exp\Big\{-\frac{1}{2}\sum\limits_{j=p_r+2}^{p_r+m+1}
(a_j-a_{j-1})^2-\sum\limits_{j=p_r+1}^{p_r+m+1}f_*(a_j/W)\Big\}d a_{p_r+1}\ldots da_{p_r+m+1}\\ \notag
&\le \int\limits_{|a_{p_r+1}-a_{l_r}|>\delta W/2}
\exp\Big\{-\frac{1}{2}\sum\limits_{j=p_r+2}^{p_r+m+1}
(a_j-a_{j-1})^2-\sum\limits_{j=p_r+1}^{p_r+m+1}\alpha\, a_j^2/W^2\Big\}d a_{p_r+1}\ldots da_{p_r+m+1}\\
\notag&\le (2\pi)^{m/2}\cdot (2\alpha)^{-1/4}W^{1/2}
\cdot\big(\sinh\dfrac{m\sqrt{2\alpha}}{W}\big)^{-1/2}\cdot e^{-C_2\delta^2W}\\
\notag&\le (2\pi)^{m/2}\cdot C_1W\cdot
e^{-m\sqrt{2\alpha}/(2W)-C_2\delta^2W}.
\end{align}
The last inequality holds since for $m\ge 1$ and large $W$
\[
\big(1-e^{-\frac{2m\sqrt{2\alpha}}{W}}\big)^{-1/2}\le C_1 W^{1/2}.
\]
Hence, for the integrals over $k$ peaks of the length $m_1,\ldots, m_k$
we obtain the bound
\[
(2\pi)^{\sum m_i/2}\,(C_1W)^{k}\exp\{-\sqrt{2\alpha}\sum m_i/(2W)\}\exp\{-C_2\delta^2Wk\}.
\]
By the same way we can estimate the integral over $da_{q_l},\ldots,da_{p_{l+1}}$ (i.e. over $a_j$'s that
lie between two peaks )
by $(2\pi)^{s/2}(C_1W)\exp\{-s\sqrt{2\alpha}/(2W)\}$, where $s=p_{l+1}-q_l+1$.
Finally, the whole integral over $\{a_j\}$ configurations with $k$ peaks which begin at
$p_{l_1},\ldots, p_{l_k}$ and end at $q_{l_1},\ldots,q_{l_k}$ can be bounded by
\[
(2\pi)^{N/2} (C_1W)^{2k+1}\exp\{-\sqrt{2\alpha}\,N/(2W)\}\exp\{-\delta^2Wk\}.
\]
The number of $\{a_j\}$ configurations with $k$ peaks is smaller than $\binom{2n+1}{2k}$
(since the number of choices of the ``beginnings'' and ``ends'' of $k$
peaks is $\binom{2n+1}{2k}$ and not all choices are suitable). Hence, we get the bound for the
integral (\ref{sig_vn}) over all $\{a_j\}$ configurations which have at least one peak:
\begin{align*}
&(2\pi)^{N/2} \exp\{-\sqrt{2\alpha}\,N/(2W)\}\sum\limits_{k=1}^{n}
\binom{2n+1}{2k}(C_1W)^{2k+1}\exp\{-\delta^2Wk\}\\
&\le (2\pi)^{N/2} \exp\{-\sqrt{2\alpha}\,N/(2W)\}\cdot W\cdot e^{-C\delta^2W}
((1+C_1W e^{-\delta^2W})^n-1)\\
&\le (2\pi)^{N/2} \exp\{-\sqrt{2\alpha}\,N/(2W)\} e^{-C_2\delta^2W}.
\end{align*}
Moreover, for the configurations of $\{a_j\}$ without peaks
$f_*(a_j)\ge \alpha\,(a_j-a_+)^2$ for each $j$, and so, according to Lemma \ref{l:okr},
the integral over $\Omega_\delta^C$ over such configurations can be bounded by
\[
(2\pi)^{N/2} C_1W^{1/2} \exp\{-\sqrt{2\alpha}\,N/(2W)\} e^{-C_2\delta^2W}.
\]
By the same way estimating the integral over $\{b_j\}$ and substituting the bounds to (\ref{sig_vn}),
we get Lemma \ref{l:2}.
\end{proof}

\subsection{Calculation of $\Sigma$}
\begin{lemma}\label{l:sigma}
For the integral $\Sigma$ over the domain $\Omega_\delta$ (see (\ref{Om_delta})) we have
\begin{align}\label{sigma}
\Sigma&=\dfrac{e^{-2Nc_0}\rho(\lambda_0)^2(4\pi)^{N}\pi^2}{W^{8n+2}}\cdot
\dfrac{\sin \pi(\xi_1-\xi_2)}{\pi(\xi_1-\xi_2)}
\cdot\Big|\mdet^{-1}\Big(-\Delta+\frac{2c_+}{W^2}\Big)\Big|(1+o(1))\\ \notag
&=(\pi\rho(\lambda_0))^2\cdot \dfrac{\sin \pi(\xi_1-\xi_2)}{\pi(\xi_1-\xi_2)}\cdot I_0 ,\quad W\to\infty,
\end{align}
where $I_0$ is defined in (\ref{I_0}).
\end{lemma}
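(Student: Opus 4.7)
The plan is to decompose $\Sigma=\Sigma_\pm+\Sigma_++\Sigma_-$ according to the three components of $\Omega_\delta$ in (\ref{Om_delta}), show that $\Sigma_\pm$ produces the leading asymptotic (\ref{sigma}), and verify that $\Sigma_+$ and $\Sigma_-$ are of strictly smaller order. As noted in Section~3, on $\Omega_\delta^\pm$ one can absorb the combinatorial ambiguity in the labeling of $a_j$ versus $b_j$ by left-multiplying each $U_j$ by the $2\times 2$ swap, reducing $\Sigma_\pm$ to $2^N$ copies of the integral over a single $\delta$-neighborhood of $(\overline a_+,\overline a_-)$. After the rescaling (\ref{change}), the expansion (\ref{f_exp}) rewrites the potential term as
\[
-\!\!\sum_{j=-n}^n\!(f(a_j)+f(b_j))=-2Nc_0-\frac{c_+}{W^2}\sum_j \tilde a_j^{\,2}-\frac{c_-}{W^2}\sum_j \tilde b_j^{\,2}-\sum_j\bigl(\varphi_+(\tilde a_j/W)+\varphi_-(\tilde b_j/W)\bigr),
\]
so that the Gaussian weights $\mu_{c_\pm}$ of (\ref{mu}) emerge naturally, with the cubic-and-higher remainders to be absorbed using Lemma~\ref{l:in}.

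Next I would dispose of the $V_j$-integrations. The main obstacle is the $\hat\xi$-term in (\ref{F_0_1}): a priori it couples all the $V_s$ together with $U_{-n}$ in a non-factorized way. Following the outline of Section~3, I would expand this ``bad'' exponential in its Taylor series, and estimate the Taylor coefficients in $\tilde a_j/W,\tilde b_j/W$ via the majorant ordering (\ref{major}) using Lemma~\ref{l:maj}. The claim (essentially Lemma~\ref{l:un}) is that the averaging over $\prod_{q>-n}d\mu(V_q)$ against the kinetic weight produces the identity $V_s\equiv I$ as the dominant contribution, with an error that is $o(1)$ after the $\mu_{c_\pm}$ averaging. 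This is the step I expect to be the main technical obstacle, since every $V_s$ appears in every summand indexed by $j$; the proof relies on the observation that the expectations of the off-diagonal components $|(V_s)_{12}|^{2k}$ against the kinetic weight are small by (\ref{un_mom}), and that the majorant bounds of Lemma~\ref{l:maj}(ii) convert the resulting complex estimates into convergent positive-measure ones.

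Once the $V_s$'s are frozen to $I$, the $U_{-n}$-integral decouples, and the remaining kinetic term together with the Vandermonde factor $\prod(a_l-b_l)^2$ is processed by (\ref{Its-Zub}), which diagonalises the action and produces precisely the discrete-Laplacian quadratic form with complex masses $c_\pm/W^2$ on $\{\tilde a_j\}$ and $\{\tilde b_j\}$. The $\{\tilde A_j/W\}$-contribution to the $U_{-n}$-exponent is $o(1)$ in norm (the sum concentrates on scale $o(N\rho(\lambda_0))$ by Lemma~\ref{l:okr}(1)), so the $U_{-n}$-integral reduces to
\[
\int_{U(2)}\exp\Bigl\{-\tfrac{i}{\rho(\lambda_0)}\Tr U^*LU\,\hat\xi\Bigr\}d\mu(U),
\]
and Proposition~\ref{p:Its-Z} with $p=2$, $C=L$, $D=\hat\xi$, $t=-i/\rho(\lambda_0)$ evaluates this to $\dfrac{\sin\pi(\xi_1-\xi_2)}{\pi(\xi_1-\xi_2)}$ up to the explicit prefactors $\rho(\lambda_0)^2\pi^2/(a_+-a_-)^2$ coming from the Vandermonde ratio in (\ref{Its-Zub}) and the normalisation constants.

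Finally, the remaining integrals in $\tilde a_j,\tilde b_j$ are complex Gaussians with covariance $(-\Delta+2c_\pm/W^2)^{-1}$; their normalisations are given by Lemma~\ref{l:okr}(1), and Lemma~\ref{l:in} ensures that the cubic remainders $\varphi_\pm$ contribute only a factor $1+o(1)$. Combining all factors --- the $2^N$ from the swap symmetry, the Vandermonde factor $\prod(a_l-b_l)^2\sim(a_+-a_-)^{2N}$ to leading order, the rescaling Jacobians $W^{-2N}$, the normalisation $C(\xi)\to 1+o(1)$, and the sine-kernel from the $U_{-n}$-integral --- yields exactly the prefactor $e^{-2Nc_0}\rho(\lambda_0)^2(4\pi)^N\pi^2 W^{-8n-2}$ in (\ref{sigma}), with $|\mdet^{-1}(-\Delta+2c_+/W^2)|$ arising from $c_-=\overline{c_+}$ so that the two complex Gaussian normalisations combine into a modulus. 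The residual claim $|\Sigma_\pm^{+}|,|\Sigma_\pm^{-}|=o(|\Sigma_\pm|)$ follows because on $\Omega^\pm_\delta$ the Vandermonde $\prod(a_l-b_l)^2$ is of order $1$ at every site while on $\Omega^\pm_\delta$'s complements $\Omega^{+}_\delta,\Omega^{-}_\delta$ it is exponentially suppressed in $N$ by the rescaling-induced factor $W^{-2N}$ relative to the dominant configuration.
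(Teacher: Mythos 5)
Your overall strategy (split $\Sigma=\Sigma_\pm+\Sigma_++\Sigma_-$, reduce $\Sigma_\pm$ to the $(\overline a_+,\overline a_-)$ neighborhood via the swap symmetry, freeze the $V_s$'s to $I$ using a majorant argument for the $\hat\xi$-term, then perform the $U_{-n}$-integral and the complex Gaussian integrals) follows the paper's plan, and the identification $c_-=\overline{c_+}$ producing a modulus of the determinant is correct. However, there are two genuine errors in how you account for the Vandermonde and, consequently, in how you dismiss $\Sigma_+$ and $\Sigma_-$.

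First, the claim that the Vandermonde $\prod_{l}(a_l-b_l)^2\sim(a_+-a_-)^{2N}$ survives to the final count is wrong. When you telescope the Itsykson--Zuber formula (\ref{Its-Zub}) through the chain of kinetic couplings (equivalently, via (\ref{int_V})), the denominator produces a factor $\prod_{q=-n+1}^n(a_q-b_q)(a_{q-1}-b_{q-1})$, and the ratio against the Vandermonde collapses to just the two boundary factors $(a_{-n}-b_{-n})(a_n-b_n)$. So only $(a_+-a_-)^2$ remains in $\Sigma_\pm$, not $(a_+-a_-)^{2N}$; the $(4\pi)^N$ in (\ref{sigma}) arises from $2^{2n}(2\pi)^N$ (swap symmetry times the two Gaussian normalizations), not from powers of $a_+-a_-$. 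Your accounting would produce a spurious $(4\pi^2\rho(\lambda_0)^2)^N$ in the answer.

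Second, and more seriously, the mechanism you invoke for $|\Sigma_+|,|\Sigma_-|=o(|\Sigma_\pm|)$ is incorrect. Because the bulk of the Vandermonde cancels against the IZ denominator in exactly the same way on $\Omega_\delta^+$, there is no ``exponential suppression by $W^{-2N}$''. After the same reduction one is left, for $\Sigma_+$, with the small boundary factor $(\tilde a_{-n}-\tilde b_{-n})(\tilde a_n-\tilde b_n)/W^2$ in place of $(a_+-a_-)^2$, and the Gaussian average of $\tilde a_{-n}\tilde a_n$ against $\mu_{c_+}$ brings in the corner covariance $D^{-1}_{-n,n}=1/\mdet D$, which can be as large as $O(W)$ (it is only $O(W)$, not $o(W)$, when $\theta=1$). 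The net suppression is therefore only $O(W^{-1})$, exactly as the paper derives in (\ref{sig2_++}) and the estimate $|\Sigma_+|\le CW^{-1}|\Sigma_\pm|$. Your argument, if taken literally, would also predict a factor-of-$N$ mismatch in the leading constant of $\Sigma_\pm$, which is the same error in a different guise. To close the gap you need to run the IZ cancellation on $\Omega_\delta^+$ as well, keep track of the surviving boundary Vandermonde factors and the corner Green's function, and verify that $\theta\le 1$ suffices for the resulting $O(W^{-1})$ bound --- this is precisely what the paper does in Section~5.2.2.
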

Note that (\ref{sigma}) together with (\ref{in_i_0}) yield
\[
|\Sigma_c|\le e^{-CW^{1-2\kappa}} |\Sigma|,
\]
which gives (\ref{F_2}).

Now using (\ref{F_2}) and (\ref{sigma}) we get Theorem \ref{thm:1}.

Thus, we are left to compute $\Sigma$. We are going to show that the leading term in $\Sigma$
is given by $\Sigma_\pm$, i.e. that the contributions of $\Sigma_+$ and $\Sigma_-$ are smaller.

\subsubsection{Calculation of $\Sigma_\pm$}
Consider the $\delta$-neighborhood of the point $(\overline{a}_+,\overline{a}_-)$ with $\overline{a}_{\pm}$ of
(\ref{a_pm}) and $\delta=W^{-\kappa}$.

Let us show that
\begin{lemma}\label{l:sig}
For the integral $\Sigma_\pm$ over the domain $\Omega_\delta^\pm$ of (\ref{Om_delta}) we have
\begin{align*}\notag
\Sigma_{\pm}=\dfrac{e^{-2Nc_0}\rho(\lambda_0)^2(4\pi)^{N}\pi^2}{W^{8n+2}}\cdot\dfrac{\sin \pi(\xi_1-\xi_2)}{\pi(\xi_1-\xi_2)}
\cdot\Big|\mdet^{-1}\Big(-\Delta+\frac{2c_+}{W^2}\Big)\Big|(1+o(1)),\quad W\to\infty.
\end{align*}
\end{lemma}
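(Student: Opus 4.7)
The plan is to follow the four-step scheme sketched in Section~3, making essential use of the preparatory results of Section~4.

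\textbf{Step 1 (Symmetrization and local rescaling).} First I exploit the rotational symmetry noted in Section~3: at every site $j$ where $(a_j,b_j)$ belongs to $U_\delta(a_-)\times U_\delta(a_+)$ rather than $U_\delta(a_+)\times U_\delta(a_-)$, I conjugate the corresponding $A_j$ by the swap matrix $\bigl(\begin{smallmatrix}0&1\\1&0\end{smallmatrix}\bigr)$, absorbing it into $U_{-n}$ and the $V_j$'s. This reduces $\Sigma_\pm$ to $2^N$ identical copies of the integral over the single sub-domain $\{a_j\in U_\delta(a_+),\,b_j\in U_\delta(a_-)\;\forall j\}$. On this domain I rescale as in (\ref{change}): $a_j=a_++\tilde a_j/W$, $b_j=a_-+\tilde b_j/W$, which produces a Jacobian $W^{-2N}$. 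Using the expansion (\ref{f_exp}) around $a_\pm$, the $f$-weight becomes $e^{-2Nc_0}$ times $\mu_{c_+}(\tilde a)\,\mu_{c_-}(\tilde b)\cdot E_n[\varphi_+](\tilde a)\,E_n[\varphi_-](\tilde b)$ (after accounting for the overall minus sign and for the imaginary phase in $f(a_+)+f(a_-)$, which combine using $N=2n+1$ odd). The Vandermonde $\prod_l(a_l-b_l)^2$ contributes the endpoint factor $\approx(a_+-a_-)^2=(2\pi\rho(\lambda_0))^2$ after the Itzykson--Zuber telescoping below.

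\textbf{Step 2 (Freezing the ``bad'' term).} Next, before performing the Gaussian $V_j$-chain, I analyze the $\hat\xi$-dependent factor $B$ in the integrand of (\ref{F_0_1}), which is analytic in $\{\tilde a_j/W,\tilde b_j/W\}$. I expand $B$ in its Taylor series in these small parameters and, via the as-yet-unstated Lemma~\ref{l:un} announced in Section~3 (whose proof rests on the exact $U(2)$ moment formula (\ref{un_mom}) and on the concentration of the Gaussian chain), bound the Taylor coefficients uniformly. Combined with the majorization Lemma~\ref{l:maj}(ii), this shows that replacing every $V_s$ in $B$ by the identity and every $L+\tilde A_s/W$ by $L$ changes $\Sigma_\pm$ only by a factor $1+o(1)$. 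After this replacement the bad term simplifies to $\exp\{-\tfrac{i}{\rho(\lambda_0)}\Tr U_{-n}^{*}LU_{-n}\hat\xi\}$, decoupling completely from $\{V_s\}$ and from the rescaled chain variables.

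\textbf{Step 3 (Gaussian chain, sine-kernel, assembly).} With the bad term frozen, the $V_j$-integrals factor into the chain covered by the Itzykson--Zuber formula (\ref{Its-Zub}); applied recursively from $j=n$ downward (as in the derivation of (\ref{sig_vn}) in Lemma~\ref{l:2}), it cancels the factors $\prod_l(a_l-b_l)^2$ against the $\triangle(D)/\triangle(C)$ ratios and produces the chain weights $\mu_{c_+}(\tilde a)\,\mu_{c_-}(\tilde b)$ of (\ref{mu}) together with explicit $2W^{-2}$ factors per step. The remaining $\tilde a,\tilde b$ integral then equals $\langle E_n[\varphi_+]\rangle_0\,\langle E_n[\varphi_-]\rangle_0\cdot Z^{(N)}_{c_+}\,Z^{(N)}_{c_-}$, which by Lemma~\ref{l:in} is $Z^{(N)}_{c_+}\,Z^{(N)}_{c_-}(1+o(1))$, and by Lemma~\ref{l:okr}(1) together with $c_-=\overline{c_+}$ equals $(2\pi)^{N}\,|\mdet^{-1}(-\Delta+2c_+/W^2)|(1+o(1))$. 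The $U_{-n}$-integral of the frozen bad term reduces to the $s=0$ case of (\ref{un_mom}) (or to Proposition~\ref{p:Its-Z} with $p=2$), yielding $\frac{\sin\pi(\xi_1-\xi_2)}{\pi(\xi_1-\xi_2)}(1+o(1))$ once one substitutes $(a_+-a_-)/\rho(\lambda_0)=2\pi$. Collecting the $2^N$ multiplicity, the Jacobian $W^{-2N}$, the chain prefactors $2^{2n}W^{-4n}$, the endpoint Vandermonde $(2\pi\rho(\lambda_0))^2$, the Gaussian normalization $(2\pi)^N$, the determinantal factor, and the sine-kernel gives precisely the asymptotic claimed.

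\textbf{Main obstacle.} The delicate step is Step~2. The bad term is non-local in the $V_s$: through the left-product $P_j=\prod_{s=j}^{-n+1}V_s$ it couples arbitrarily many factors, and its full Taylor expansion in $\tilde a_j/W,\tilde b_j/W$ generates moments of arbitrarily high degree against the complex Gaussian chain. Controlling these uniformly in $N$ requires a combinatorial graph-counting argument in the spirit of Lemma~\ref{l:subl}, combined with the explicit $U(2)$ moments (\ref{un_mom}) and the majorization bound (\ref{av_*}) that allows passing from the complex measure $\mu_{c_\pm}$ to its positive counterpart $\mu_{\Re c_\pm}$. Only once this uniform Taylor-coefficient bound (Lemma~\ref{l:un}) is established can Lemma~\ref{l:maj}(ii) be invoked to conclude that the $V_s\to I$ replacement inside $B$ is harmless.
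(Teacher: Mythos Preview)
Your proposal is correct and follows essentially the same route as the paper: rescale to obtain (\ref{F_0_2}), freeze the $\hat\xi$-dependent factor via Lemma~\ref{l:un}, apply Itzykson--Zuber to the $V_j$-chain and to $U_{-n}$, and finish with Lemma~\ref{l:in} for the residual $\varphi_\pm$ to arrive at (\ref{F_0_3})--(\ref{F_0_4}). Two minor remarks: the nearest-neighbor chain part of $\mu_{c_\pm}$ is separated out \emph{algebraically} from the kinetic term already in Step~1 (this is how the paper gets $\mu_{c_+}(a)\mu_{c_-}(b)$ in (\ref{F_0_2})), not produced by IZ in Step~3 as you also write---only the on-site $c_\pm/W^2$ piece comes from the $f$-expansion; and the paper's actual proof of Lemma~\ref{l:un} does not use graph counting \`a la Lemma~\ref{l:subl} but instead introduces auxiliary random unitaries $\widetilde V_j$ (Lemma~\ref{l:phi}) and a telescoping majorization argument, though this does not affect your outline since you invoke that lemma as a black box.
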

\begin{proof}
%
%

Performing the change $a_j-a_+=\tilde{a\vphantom{b}}_j/W$, $b_j-a_-=\tilde{b}_j/W$ in (\ref{F_0_1})
and using (\ref{f_exp}),
we obtain (recall that $a_\pm=\pm\pi\rho(\lambda_0)$)
\begin{align}\notag
\Sigma_{\pm}=&W^{-2N}2^{2n}e^{-2Nc_0-i\pi(\xi_1-\xi_2)}
\int\limits_{|\tilde{a\vphantom{b}}_j|,|\tilde{b}_j|\le W^{1-\kappa}} \int\limits_{U(2)^N}\mu_{c_+}(a)\mu_{c_-}(b)
\\ \label{F_0_2}
&\times e^{W^2\sum\limits_{j=-n+1}^n\Tr \left(V_j^*(L+\tilde{A}_j/W)V_{j}(L+\tilde{A}_{j-1}/W)-
(L+\tilde{A}_j/W)(L+\tilde{A}_{j-1}/W)\right)}\\ \notag
& \times e^{-\sum\limits_{k=-n}^n(\varphi_+(\tilde{a}_k/W)+\varphi_-(\tilde{b}_k/W))
-\frac{i}{N\rho(\lambda_0)}\sum\limits_{k=-n}^n\big(\Tr (P_kU_{-n})^*(L+\tilde{A}_k/W)\,
(P_kU_{-n})\hat{\xi}-\Tr L\hat{\xi}\big)}
\\ \notag
&\times\prod\limits_{l=-n}^n(a_+-a_-+(\tilde{a\vphantom{b}}_l-\tilde{b}_l)/W)^2d\mu(U_{-n})\prod\limits_{q=-n+1}^n
d\mu(V_q)\,d\overline{a\vphantom{b}}
\,d\overline{b},
\end{align}
where $L=\hbox{diag}\,\{a_+,a_-\}$, $\tilde{A}_j=\hbox{diag}\,\{\tilde{a\vphantom{b}}_j,\tilde{b}_j\}$, and $\mu_{\gamma}(a)$
is defined in (\ref{mu}).

Now we are going to integrate over $\{V_j\}$.

Denote
\begin{align}\label{F_V}
F(\overline{\vphantom{b} a},\overline{b},V)&=-\frac{i}{\rho(\lambda_0)}\sum\limits_{k=-n}^n\big(\Tr
\big(P_kU_{-n}\big)^*(L+\tilde{A}_k/W)\,
(P_kU_{-n}\big)\hat{\xi}-\Tr L\hat{\xi}\big),\\ \notag
d\,\tilde{\vphantom{A}\mu}(V,\tilde{A})&=e^{W^2\sum\limits_{j=-n+1}^n\Tr \left(V_j^*(L+\tilde{A}_j/W)V_{j}(L+\tilde{A}_{j-1}/W)-
(L+\tilde{A}_j/W)(L+\tilde{A}_{j-1}/W)\right)}\prod\limits_{q=-n+1}^n d\mu(V_q),\\ \notag
I_{\tilde{\mu}}(\tilde{A})&=\int d\,\tilde{\mu\vphantom{A}}(V,\tilde{A}).
\end{align}
According to the Itsykson-Zuber formula (see Proposition \ref{p:Its-Z})
\begin{equation}\label{int_V}
I_{\tilde{\mu}}(\tilde{A})=\,W^{-4n}\prod\limits_{q=-n+1}^n\dfrac{1-e^{-W^2(a_+-a_-+(\tilde{a\vphantom{b}}_q-\tilde{b}_q)/W)
(a_+-a_-+(\tilde{a\vphantom{b}}_{q-1}-\tilde{b}_{q-1})/W)}}{(a_+-a_-+(\tilde{a\vphantom{b}}_q-\tilde{b}_q)/W)
(a_+-a_-+(\tilde{a\vphantom{b}}_{q-1}-\tilde{b}_{q-1})/W)}.
\end{equation}
We want to integrate the r.h.s. of (\ref{F_0_2}) over $d\tilde{\mu\vphantom{A}}(V,\tilde{A})$. To this end,
we expand \\$\exp\big\{F(\overline{\vphantom{b} a},\overline{b},V)\big\}$ into a series in $|(V_j)_{12}|^2$
(note that $|(V_j)_{12}|^2=|(V_j)_{21}|^2$,\\ $|(V_j)_{11}|^2=|(V_j)_{22}|^2=1-|(V_j)_{12}|^2$).
Formula (\ref{un_mom}) implies
\begin{align}
&\int |(V_j)_{12}|^{2s} d\tilde{\mu\vphantom{A}}(V,\tilde{A})\\ \notag
=&\,W^{-4n}\prod\limits_{q\ne j}\dfrac{1-e^{-W^2(a_+-a_-+(\tilde{a\vphantom{b}}_q-\tilde{b}_q)/W)
(a_+-a_-+(\tilde{a\vphantom{b}}_{q-1}-\tilde{b}_{q-1})/W)}}{(a_+-a_-+(\tilde{a\vphantom{b}}_q-\tilde{b}_q)/W)
(a_+-a_-+(\tilde{a\vphantom{b}}_{q-1}-\tilde{b}_{q-1})/W)}\\ \notag
&\times (-1)^s\dfrac{d^s}{dx^s}\dfrac{1-e^{-x}}{x}
\bigg|_{x=W^2(a_+-a_-+(\tilde{a\vphantom{b}}_{j-1}-\tilde{b}_{j-1})/W)(a_+-a_-+(\tilde{a\vphantom{b}}_j-\tilde{b}_j)/W)}.
\end{align}

We are going to show that the leading term of the integral is given by the summands without $|(V_j)_{12}|^2$.
\begin{lemma}\label{l:un} In the notations of (\ref{F_V})
\begin{equation}\label{F_ch}
\Big|
\Big\langle \Big\langle\big(\exp\{(F(\overline{\vphantom{b} a},\overline{b},V)-F(0,0,I))/N\}-1\big)
\cdot\displaystyle\Pi_1\cdot\Pi_2\Big\rangle_{0}\Big\rangle_{\tilde{\mu}}\Big|=o(1),\quad N\to\infty,
\end{equation}
where $\Pi_1$, $\Pi_2$ are the products of the Taylor's series for $\exp\{\varphi_+(\tilde{a}_j/W)\}$ and
for \\ $\exp\{\varphi_-(\tilde{b}_j/W)\}$ and
\begin{equation}\label{ang_mu}
\langle\ldots\rangle_{\tilde{\mu}}
=I_{\tilde{\mu}}(\tilde{A})^{-1}\intd (\ldots) d\tilde{\vphantom{A}\mu}(V,\tilde{A}).
\end{equation}
\end{lemma}
\begin{proof}
Since $\hat{\xi}=\frac{\xi_1+\xi_2}{2}\, I+\frac{\xi_1-\xi_2}{2a_+}\, L$, we have
\begin{align*}
&\Tr (P_kU_{-n})^*(L+\tilde{A}_k/W)
(P_kU_{-n})\hat{\xi}-\Tr (L+\tilde{A}_k/W)\hat{\xi}\\
&=\frac{\xi_1-\xi_2}{2a_+}\,\Tr ((P_kU_{-n})^*
(L+\tilde{A}_k/W)\,
(P_kU_{-n})L-(L+\tilde{A}_k/W)L)\\
&=2a_+(\xi_2-\xi_1)\cdot |(P_kU_{-n})_{12}|^2(1+
(\tilde{a\vphantom{b}}_k-\tilde{b}_k)/(a_+-a_-)W),
\end{align*}
thus
\begin{multline}\label{razn}
F(\overline{\vphantom{b} a},\overline{b},V)-F(0,0,I)\\
=\frac{2ia_+(\xi_1-\xi_2)}{\rho(\lambda_0)}
\sum\limits_{k=-n+1}^n\left(|(P_kU_{-n})_{12}|^2-|(U_{-n})_{12}|^2\right)\cdot \Big(1+
\dfrac{\tilde{a\vphantom{b}}_k-\tilde{b}_k}{(a_+-a_-)W}\Big).
\end{multline}
We can write
\begin{multline*}
\exp\Big\{\dfrac{1}{N}\Big(F(\overline{\vphantom{b} a},\overline{b},V)-F(0,0,I)\Big)\Big\}-1\\
=\sum\limits_{p=1}^\infty
\dfrac{C^p}{p!\,N^p}\sum\limits_{k_1,\ldots,k_p} \Big\langle
\prod\limits_{j=1}^p \Big[\left(|(P_{k_j}U_{-n})_{12}|^2-|(U_{-n})_{12}|^2\right)\cdot \Big(1+
\dfrac{\tilde{a\vphantom{b}}_{k_j}-\tilde{b}_{k_j}}{(a_+-a_-)W}\Big)\Big]\Big\rangle_{\tilde{\mu}},
\end{multline*}
where $\langle\ldots\rangle_{\tilde{\mu}}$ is defined in (\ref{ang_mu}).
Hence, we have to study
\begin{equation}\label{Phi}
\Phi_{k_1,\ldots,k_p}(\overline{\vphantom{b} a},\overline{b})=\Big\langle\prod\limits_{j=1}^p
\left(|(P_{k_j}U_{-n})_{12}|^2-|(U_{-n})_{12}|^2
\right)\Big\rangle_{\widetilde{\mu}}.
\end{equation}

Let $p<Cn/W$ for some constant $C$. Introduce i.i.d $\{t_j\}$ such that the density of the distribution has
the form
\begin{equation}\label{rho_t}
\rho(t_j)=\dfrac{(a_+-a_-)^2}{2}\, t_j \exp\{-t_j^2(a_+-a_-)^2\}\cdot \mathbf{1}_{0<t_j<W/2}.
\end{equation}
Consider the unitary matrices
\begin{equation}\label{V_tild}
\widetilde{V}_j=\left(\begin{array}{cc}
\tilde{r}_je^{i\tilde{\theta}_j}& \tilde{v}_je^{i\theta_j}\\
-\tilde{v}_je^{-i\theta_j}&\tilde{r}_je^{-i\tilde{\theta}_j}
\end{array}\right),
\end{equation}
where
\begin{align*}
\tilde{v}_j&=\dfrac{t_j}{W}\cdot\Big(1+\dfrac{\tilde{a\vphantom{b}}_j-\tilde{b}_j}{W(a_+-a_-)}\Big)^{-1/2}
\Big(1+\dfrac{\tilde{a\vphantom{b}}_{j-1}-\tilde{b}_{j-1}}{W(a_+-a_-)}\Big)^{-1/2},\\
\tilde{r}_j&=(1-\tilde{v}_j^2)^{1/2},
\end{align*}
and $\theta_j,\tilde{\theta}_j\in [-\pi,\pi)$.

We need
\begin{lemma}\label{l:phi}
\begin{multline}\label{Phi_tild}
\widetilde{\Phi}_{k_1,\ldots,k_p}(\overline{\vphantom{b}a},\overline{b}):=\Big\langle\prod\limits_{j=1}^p
\Big( |(\prod\limits^{-n+1}_{l=k_j}\widetilde{V}_l\cdot U_{-n})_{12}|^2-|(U_{-n})_{12}|^2\Big)
\Big\rangle_{t_j,\theta_j,\tilde{\theta}_j}\\=
\Phi_{k_1,\ldots,k_p}(\overline{\vphantom{b} a},\overline{b})+O(e^{-cW^2}),
\end{multline}
where $\langle\ldots\rangle_{t_j,\theta_j,\tilde{\theta}_j}$ means the expectation over $\{t_j\}$ with respect
to the measure with the distribution (\ref{rho_t}) and over $\{\theta_j\}$,$\{\tilde{\theta}_j\}$ from $-\pi$ to
$\pi$.
\end{lemma}
The proof of the lemma can be found in Section 6.

Denote
\begin{equation}\label{s_j}
s_j=1-\Big(1+\dfrac{\tilde{a\vphantom{b}}_j-\tilde{b}_j}{W(a_+-a_-)}\Big)
\Big(1+\dfrac{\tilde{a\vphantom{b}}_{j-1}-\tilde{b}_{j-1}}{W(a_+-a_-)}\Big).
\end{equation}
Expanding $\widetilde{V}_j$ with respect to $s_j$ we get
\[
\widetilde{V}_j=\widetilde{V}_j(0)+\dfrac{t_j}{W}((1-s_j)^{-1/2}-1)V_j^{1}+\dfrac{t_j^2}{W^2}\sum\limits_{r=1}^\infty
V^{(r)}_js_j^r,
\]
where $\widetilde{V}_j(0)$ is a unitary matrix (and hence $\|\widetilde{V}_j(0)\|\le 1$),
\[
\widetilde{V}_j^{1}=\left(
\begin{array}{cc}
0&e^{i\theta_j}\\
-e^{-i\theta_j}&0
\end{array}\right),\quad \|\widetilde{V}^{(r)}_j\|\le C^r\quad (r=1,2,\ldots),
\]
and $\{\widetilde{V}^{(r)}_j\}$ are diagonal matrices.

Since the integrals of $e^{im\theta_j}$ equal 0 for $m\ne 0$ and $2\pi$ for $m=0$, we conclude that if we
replace the coefficients in front of $e^{i\theta_j}$ and $e^{-i\theta_j}$ with the bounds for their absolute values,
then, after the averaging with respect to $\theta_j$, the resulting coefficients in front of $s_j^k$ will
grow. Hence,
\begin{align*}
\widetilde{\Phi}_{k_1,\ldots,k_p}(\overline{\vphantom{b}a},\overline{b})-
\widetilde{\Phi}_{k_1,\ldots,k_p}(0,0)\prec 4\Big(\Big\langle
\prod \Big|1+\frac{t_j}{W}\, e^{i\theta_j} s^*_jg(s^*_j)+\frac{t_j^2}{W^2}s^*_j
g(s^*_j)\Big|^{2p}\Big\rangle_{t_j,\theta_j}-1\Big),
\end{align*}
where $g(t)=C_0/(1-Ct)$ with some $n$-independent $C,C_0$ and
\begin{equation*}
s_j^*=\dfrac{\tilde{a\vphantom{b}}_j+\tilde{b}_j+
\tilde{a\vphantom{b}}_{j-1}+\tilde{b}_{j-1}}{W(a_+-a_-)}
+\dfrac{(\tilde{a\vphantom{b}}_{j-1}+\tilde{b}_{j-1})
(\tilde{a\vphantom{b}}_j+\tilde{b}_j)}{W^2(a_+-a_-)^2}.
\end{equation*}
Moreover,
\[
\Big\langle
\dfrac{t_j^{2k}}{W^{2k}}\Big\rangle_{t_j}\le \dfrac{k!}{(a_+-a_-)^{2k}W^{2k}},
\]
and thus we conclude
\[
\Big\langle
\prod \Big|1+\frac{t_j}{W}\, e^{i\theta_j} s_j^*g(s_j^*)+\frac{t_j^2}{W^2}s_j^*
g(s_j^*)\Big|^{2p}\Big\rangle_{t_j,\theta_j}\prec \prod \Big(1+\frac{2p}{W^2}s_j^*g_1(s_j^*)+
\frac{p^2}{W^2}(s_j^*)^2
g(s_j^*)^2\Big).
\]
Set
\[
\Pi_3=\prod\limits_{j=1}^p \Big(1+
\dfrac{\tilde{a\vphantom{b}}_{k_j}-\tilde{b}_{k_j}}{(a_+-a_-)W}\Big),\quad
\Pi_{3,*}=\prod\limits_{j=1}^p\Big(1+\dfrac{\tilde{a\vphantom{b}}_{k_j}+\tilde{b}_{k_j}}{(a_+-a_-)W}\Big).
\]
Since $p\le Cn/W$, we have $2p/W^2\le W^{-1}$, $p^2/W^2=o(1)$. In addition, $\Pi_3$ has degree
$p<Cn/W$, $|\Pi_3|\le (1+\delta)^p$ and thus does not spoil the bounds (\ref{in_vsp}) -- (\ref{*}).
Thus, Lemma \ref{l:maj} yields
\begin{align*}
&\Big|\Big\langle(\widetilde{\Phi}_{k_1,\ldots,k_p}(\overline{\vphantom{b}a},\overline{b})-
\widetilde{\Phi}_{k_1,\ldots,k_p}(0,0))\cdot
\displaystyle\Pi_1\cdot\Pi_2\cdot \Pi_3\Big\rangle_0\Big|\\
&\le 4\Big\langle\Big(\prod \Big(1+\frac{2p}{W^2}s_jg(s_j)+\frac{p^2}{W^2}s_j^2
g(s_j)^2\Big)-1\Big)
\cdot\displaystyle\Pi_{1,*}\cdot\Pi_{2,*}\cdot \Pi_{3,*}\Big\rangle_{0,*}+e^{-Cn/W}\\
&\le 4(1+\delta)^p\Big\langle\Big(\exp\Big\{\sum\limits_{i=-n}^n\Big(\frac{Cp}{W^2}
\cdot\frac{\tilde{a}_i+\tilde{b}_i}{W}+
\frac{p^2c}{W^2}\cdot\frac{\tilde{a}_i^2+\tilde{b}_i^2}{W^2}\Big)\Big\} -1\Big)
\cdot\displaystyle\Pi_{1,*}\cdot\Pi_{2,*}\Big\rangle_{0,*}+e^{-Cn/W}\\
&\le 4e^{\delta p}\Big\langle\Big(\exp\Big\{\sum\limits_{i=-n}^n\Big(\frac{Cp}{W^2}\cdot
\frac{\tilde{a}_i+\tilde{b}_i}{W}+
\frac{p^2c}{W^2}\cdot\frac{\tilde{a}_i^2+\tilde{b}_i^2}{W^2}\Big)\Big\} -1\Big)^2
\Big\rangle_{0,*}^{1/2}
\cdot\Big\langle\displaystyle\Pi_{1,*}^2\cdot\Pi_{2,*}^2\Big\rangle_{0,*}^{1/2}+e^{-Cn/W},
\end{align*}
where $\Pi_1$, $\Pi_2$ are the products of the Taylor's series for $\exp\{\varphi_+(\tilde{a}_j/W)\}$ and
for \\ $\exp\{\varphi_-(\tilde{b}_j/W)\}$, and $\Pi_{1,*}$, $\Pi_{2,*}$ are obtained form $\Pi_1$, $\Pi_2$ by
changing the coefficients to their absolute values.

We proved earlier (see Lemma \ref{l:s_Wick}) that the second factor is $1+o(1)$. Moreover, taking
the Gaussian integral of the first
factor (similarly to the proof of Lemma \ref{l:maj} (i)), we obtain
\begin{multline*}
\Big|\Big\langle(\widetilde{\Phi}_{k_1,\ldots,k_p}(\overline{\vphantom{b}a},\overline{b})-
\widetilde{\Phi}_{k_1,\ldots,k_p}(0,0))
\cdot\displaystyle\Pi_1\cdot\Pi_2\cdot \Pi_3\Big\rangle_0\Big|\\
\le 4e^{\delta p}
\Big(\exp\Big\{\frac{cp^2n}{W^3}\Big\}-1\Big)\le 4e^{\delta p}\Big(\exp\Big\{\frac{cpn^2}{W^4}\Big\}-1\Big),
\end{multline*}
and thus, since $p<Cn/W$,
\begin{multline}\label{phi_bound}
\sum\limits_{p=1}^{Cn/W}\dfrac{(C_1)^p}{p!N^p}\sum\limits_{k_1,\ldots,k_p} \Big|
\Big\langle(\widetilde{\Phi}_{k_1,\ldots,k_p}
(\overline{\vphantom{b}a},\overline{b})-\widetilde{\Phi}_{k_1,\ldots,k_p}(0,0))
\cdot\displaystyle\Pi_1\cdot\Pi_2\cdot\Pi_3\Big\rangle_0\Big|\\
\le \exp\{e^{C_2n^2/W^4+C_1\delta}\}-e^{C_1\delta}=o(1).
\end{multline}
If $p\gg n/W$, then $1/\sqrt{p!}\ll e^{-Cn/W}$, and hence we can replace $\langle\ldots\rangle_0$ with
$\langle\ldots\rangle_{0,*}$
(see Lemma~\ref{l:okr}) and then take the absolute value under the integral and get the bound\\
\[e^{C_1n/W}([\sqrt{Cn/W}]!)^{-1}\sum\limits_{p=CN/W}^\infty (C_2)^p/\sqrt{p!}=o(1).\]

Let us prove now that
\[
\widetilde{\Phi}_{k_1,\ldots,k_p}(0,0)=\Big\langle\prod\limits_{j=1}^p
\Big( |(\widetilde{P}_{k_j}(0)U_{-n})_{12}|^2-|(U_{-n})_{12}|^2\Big)
\Big\rangle_{t_j,\theta_j,\tilde{\theta}_j}=o(1),
\]
where
\[
\widetilde{P}_{k_j}(0)=\prod\limits^{-n+1}_{l=k_j}\widetilde{V}_l(0).
\]
To this end, we write
\begin{align*}
&\Big\langle\prod\limits_{j=1}^p
\Big| |(\widetilde{P}_{k_j}(0)U_{-n})_{12}|^2-|(U_{-n})_{12}|^2\Big|
\Big\rangle_{t_j,\theta_j,\tilde{\theta}_j}\le \Big\langle
\Big| |(\widetilde{P}_{k_1}(0)U_{-n})_{12}|^2-|(U_{-n})_{12}|^2\Big|
\Big\rangle_{t_j,\theta_j,\tilde{\theta}_j}\\
&\le \Big\langle
\Big| |(\widetilde{V}_{k_1}(0))_{12}(\widetilde{P}_{k_1-1}(0)U_{-n})_{22}+
(\widetilde{V}_{k_1}(0))_{11}(\widetilde{P}_{k_1-1}(0)U_{-n})_{12}|^2-|(U_{-n})_{12}|^2\Big|
\Big\rangle_{t_j,\theta_j,\tilde{\theta}_j}\\
&=\Big\langle|(\widetilde{P}_{k_1-1}(0)U_{-n})_{22}|^2\Big\rangle_{t_j,\theta_j,\tilde{\theta}_j}\cdot
\Big\langle|(\widetilde{V}_{k_1}(0))_{12}|^2\Big\rangle_{t_j,\theta_j,\tilde{\theta}_j}
\\&+\Big\langle
\Big| |(\widetilde{P}_{k_1-1}(0)U_{-n})_{12}|^2-|(U_{-n})_{12}|^2\Big|
\Big\rangle_{t_j,\theta_j,\tilde{\theta}_j}\\&\le
\dfrac{C}{W^2}+\Big\langle
\Big| |(\widetilde{P}_{k_1-1}(0)U_{-n})_{12}|^2-|(U_{-n})_{12}|^2\Big|
\Big\rangle_{t_j,\theta_j,\tilde{\theta}_j}\le \ldots\le \dfrac{CN}{W^2}=o(1).
\end{align*}
This yields
\begin{multline*}
\sum\limits_{p=1}^{Cn/W}\dfrac{(C_1)^p}{p!N^p}\sum\limits_{k_1,\ldots,k_p} \Big|\Big\langle
\widetilde{\Phi}_{k_1,\ldots,k_p}(0,0)\cdot\displaystyle\Pi_1\cdot\Pi_2\cdot\Pi_3\Big\rangle_0\Big|\\
\le \dfrac{CN}{W^2}\sum\limits_{p=1}^{Cn/W}\dfrac{(C_1)^p(1+\delta)^p}{p!}\le C_1N/W^2=o(1),
\end{multline*}
which together with (\ref{phi_bound}) completes the proof of Lemma \ref{l:un}.
\end{proof}

Thus, we can change $F(\overline{\vphantom{b} a},\overline{b},V)$ to $F(0,0,I)$ in (\ref{F_0_2}), and
then integrate over
$\tilde{\mu}$, according to
(\ref{int_V}). We obtain
\begin{align}\notag
\Sigma_{\pm}=&W^{-8n-2}2^{2n}e^{-2Nc_0}
\int\limits_{U(2)}\,\,\,\int\limits_{|\widetilde{\vphantom{b}a}_j|, |\tilde{b}_j|\le W^{1-\kappa}} \mu_{c_+}(a)\,\mu_{c_-}(b)\\ \label{F_0_3}
&\times
\exp\Big\{-\sum\limits_{j=-n}^n\varphi_+(\tilde{a\vphantom{b}}_j/W)-\sum\limits_{j=-n}^n\varphi_-(\tilde{b}_j/W)\Big\}\\ \notag
&\times e^{-\frac{i}{\rho(\lambda_0)}\,\Tr U_{-n}^*LU_{-n}\hat{\xi}}\,\,
(a_+-a_-+(\tilde{a\vphantom{b}}_{-n}-\tilde{b}_{-n})/W)\\ \notag
&\times(a_+-a_-+(\tilde{a\vphantom{b}}_n-\tilde{b}_n)/W)d\,\mu(U_{-n})\prod\limits_{q=-n}^nd \tilde{a\vphantom{b}}_q\,
d \tilde{b}_q (1+o(1))
\end{align}
Integrating over $U_{-n}$ by the Itsykson-Zuber formula (see Proposition \ref{p:Its-Z}) and
using Lemma \ref{l:in}, we get finally
\begin{align}\notag
\Sigma_{\pm}=&\dfrac{W^{-8n-2}2^{2n}e^{-2Nc_0}(e^{i\pi(\xi_1-\xi_2)}-e^{i\pi(\xi_2-\xi_1)})}{2i\pi(\xi_1-\xi_2)}\int
\limits_{|\widetilde{\vphantom{b}a}_j|, |\tilde{b}_j|\le W^{1-\kappa}} \prod\limits_{q=-n}^nd \tilde{a\vphantom{b}}_q\,
d \tilde{b}_q \cdot
 \mu_{c_+}(a)\,\mu_{c_-}(b)\\ \label{F_0_4}
&\times
(a_+-a_-+(\tilde{a\vphantom{b}}_{-n}-\tilde{b}_{-n})/W)(a_+-a_-+(\tilde{a\vphantom{b}}_n-\tilde{b}_n)/W)(1+o(1))\\
\notag
&=\dfrac{2\pi^2 e^{-2Nc_0}\rho(\lambda_0)^2(4\pi)^{N} \sin(\pi(\xi_1-\xi_2))}{W^{8n+2}\cdot \pi(\xi_1-\xi_2)}\,
 \Big|\mdet^{-1}\Big(-\Delta+\frac{2c_+}{W^2}\Big)\Big| (1+o(1)).
\end{align}
\end{proof}

\subsubsection{$\Sigma_+$ and $\Sigma_-$.}
In this section we prove that the integrals $\Sigma_+$ and $\Sigma_-$ over $\Omega_\delta^+$ and
$\Omega_\delta^-$ have smaller orders than $\Sigma_\pm$.

Similarly to (\ref{F_0_2}) we get
\begin{align}\notag
\Sigma_{+}=&W^{-8n-4}2^{2n}e^{-2Nf(a_+)-i\pi(\xi_1+\xi_2)}
\int\limits_{|\tilde{a\vphantom{b}}_j|,|\tilde{b}_j|\le W^{1-\kappa}} \int\limits_{U(2)}\mu_{c_+}(a)\mu_{c_+}(b)
\\ \label{F_0_2_+}
&\times e^{\sum\limits_{j=-n+1}^n\Tr \left(V_j^*\tilde{A}_jV_{j}\tilde{A}_{j-1}-
\tilde{A}_j\tilde{A}_{j-1}\right)}\\ \notag
& \times e^{-\sum\limits_{k=-n}^n(\varphi_+(\tilde{a}_k/W)+\varphi_+(\tilde{b}_k/W))
-\frac{i}{N\rho(\lambda_0)}\sum\limits_{k=-n}^n\Tr (P_kU_{-n})^*(\tilde{A}_k/W)\,
(P_kU_{-n})\hat{\xi}}
\\ \notag
&\times\prod\limits_{l=-n}^n(\tilde{a\vphantom{b}}_l-\tilde{b}_l)^2d\mu(U_{-n})\prod\limits_{q=-n+1}^n
d\mu(V_q)\,d\overline{a\vphantom{b}}
\,d\overline{b}(1+o(1)).
\end{align}
By the same argument as for $\Sigma_{\pm}$ we get
\begin{align}\notag
\Sigma_{+}=&2^{2n}W^{-8n-4}e^{-2Nf(a_+)-i\pi(\xi_1+\xi_2)} \int\limits_{|\widetilde{\vphantom{b}a}_j|, |\tilde{b}_j|\le W^{1-\kappa}} \prod\limits_{q=-n}^nd \tilde{a\vphantom{b}}_q\,
d \tilde{b}_q\,\,\\ \label{sig2_++}
&\times \mu_{c_+}(a)\mu_{c_+}(b)\,
(\tilde{a\vphantom{b}}_{-n}-\tilde{b}_{-n})\,(\tilde{a\vphantom{b}}_{n}-\tilde{b}_{n})\,(1+o(1))\\
=&\notag 2^{2n}W^{-4(2n+1)}e^{-2Nf(a_+)-i\pi(\xi_1+\xi_2)}\int_{\mathbb{R}} \prod\limits_{q=-n}^nd
\tilde{a\vphantom{b}}_q\,
d \tilde{b}_q \\
\notag
&\times \mu_{c_+}(a)\mu_{c_+}(b)\,
(\tilde{a\vphantom{b}}_{-n}-\tilde{b}_{-n})\,(\tilde{a\vphantom{b}}_{n}-\tilde{b}_{n})\,(1+o(1))\\ \notag
=&(4\pi)^{N}W^{-8n-4}e^{-2Nf(a_+)-i\pi(\xi_1+\xi_2)} D^{-1}_{-n,n}\mdet^{-1} D,
\end{align}
where
\begin{align*}
D=-\Delta+\dfrac{2c_+}{W^2}.
\end{align*}
It is easy to see (see the proof of Lemma \ref{l:okr}) that for $W^2=N^{1+\theta}$, $0<\theta\le 1$
\[
|D^{-1}_{-n,n}|=1/|\mdet D|\le CW.
\]
%
Hence, since $\Re f(a_+)=c_0$, we get
\[
|\Sigma_{+}|\le C(4\pi)^{N}W^{-8n-3}e^{-2Nc_0}|\mdet^{-1} D|\le CW^{-1}|\Sigma_\pm|,
\]
and thus the order of $\Sigma_+$ is smaller than the order of $\Sigma_{\pm}$.
This completes the proof of Lemma \ref{l:sigma}.


\section{Auxiliary result}

\textbf{Proof of Lemma \ref{l:min_L}.}
Note that
\begin{align*}
f_*(a_{\pm})&=0,\quad
\dfrac{d}{dx} f_*(x)\Big|_{x=a_\pm}=0,\quad
\dfrac{d^2}{dx^2} f_*(x)\Big|_{x=a_\pm}=
2(1-\lambda_0^2/4)>0.
\end{align*}
Thus, function $f_*(x)$ attains its minimum at $a_{\pm}$ and expanding $f_*(x)$ in $x\in (a_\pm-\delta,a_\pm+\delta)$ we get
\begin{equation}
f_*(x)=(1-\lambda_0^2/4)(x-a_{\pm})^2+O(\delta^3).
\end{equation}
This yields (\ref{ineqv_ReV}). Besides, it is easy to see, that if we take
$\alpha=\frac{1}{2}(1-\lambda_0^2/4)$, then we obtain (\ref{in_left}) for some sufficiently small $\delta>0$.
$\quad \Box$
 \medskip

\textbf{Proof of Lemma \ref{l:okr}}

1) Set $-\Delta_1=-\Delta+E_{0}$, where $E_{0}$ is an $N\times N$ matrix whose elements
are zeros except $(E_0)_{-n,-n}=1$.

Define
\begin{equation}\label{T,S_n}
T_n(x)=\mdet\, (-\Delta_1+x\cdot I),\quad S_n(x)=\mdet\, (-\Delta+x\cdot I).
\end{equation}
It is easy to check that
\begin{align}\label{t_n}
T_n(x)&=(2+x)T_{n-1}(x)-T_{n-2}(x),\quad T_1(x)=1+x, \quad T_2(x)=x^2+3x+1,\\ \label{s_n}
S_n(x)&=(1+x)T_{n-1}(x)-T_{n-2}(x).
\end{align}
Solving the recurrent relation (\ref{t_n}), we get
\begin{equation}\label{rec}
T_m(x)=\dfrac{\zeta^{m+1}+\zeta^{-m}}{\zeta+1},\quad
S_m(x)=\dfrac{(\zeta^{m}-\zeta^{-m})(\zeta-1)}{\zeta+1}
\end{equation}
where
\[
\zeta=\dfrac{2+x+\sqrt{x^2+4x}}{2}.
\]
For $x=2\gamma/W^2$
\[
\zeta= 1+\sqrt{2\gamma}/W+\gamma/W^2+O(W^{-3}),\quad
W\to \infty.
\]
This and (\ref{s_n}) -- (\ref{rec}) yield
\begin{equation*}
T_m(2\gamma/W^2)=\cosh\dfrac{m\sqrt{2\gamma}}{W}(1+o(1)),\quad
S_m(2\gamma/W^2)=\dfrac{\sqrt{2\gamma}}{W}\sinh \dfrac{m\sqrt{2\gamma}}{W}(1+o(1)),
\end{equation*}
and thus (\ref{f_eq}).
Also it is easy to see that
\begin{equation*}
G_{ii}^{(m)}(\gamma)=\dfrac{T_{i-1}(2\gamma/W^2)T_{m-i}(2\gamma/W^2)}{S_{m}(2\gamma/W^2)}
\le \dfrac{C_\gamma W}{\sqrt{2\gamma}}\coth\dfrac{m\sqrt{2\gamma}}{W}(1+o(1)).
\end{equation*}
Moreover,
\begin{multline*}
G_{11}^{(m)}(\gamma)-G_{1m}^{(m)}(\gamma)=\dfrac{T_{m-1}(2\gamma/W^2)-1}{S_{m}(2\gamma/W^2)}\\=
C_\gamma W \coth\dfrac{m\sqrt{2\gamma}}{2W}(1+o(1))\le C_\gamma^1\min\{m,W\}.
\end{multline*}


2) Take $m\ge CW$ and $\alpha\in \mathbb{R}, \alpha>0$. Note that for any sufficiently small $\delta>0$ and $\varepsilon>0$
\begin{align}\notag
Z_{\alpha}^{(m)}-Z_{\delta,\alpha}^{(m)}=&\int\limits_{\max |x_i|>\delta
W}e^{-\frac{1}{2}\sum\limits_{j=2}^{m}(x_j-x_{j-1})^2-\frac{\alpha}{W^2}\sum\limits_{j=1}^{m}x_j^2}
\prod\limits_{q=1}^md x_q\\ \label{in1} &\le \sum\limits_{i=1}^m\int
e^{\frac{\varepsilon^2}{2}(x_i^2-W^2\delta^2)-\frac{1}{2}\sum\limits_{j=2}^{m}(x_j-x_{j-1})^2-
\frac{\alpha}{W^2}\sum\limits_{j=1}^{m}x_j^2} \prod\limits_{q=1}^md x_q\\ \notag
&=\sum\limits_{i=1}^m\dfrac{e^{-\varepsilon^2\delta^2W^2/2}}{\sqrt{2\pi}}\int dt\,\, e^{-t^2/2} \int \prod\limits_{q=1}^md x_q
\, e^{\varepsilon t x_i-\frac{1}{2}\sum\limits_{j=2}^{m}(x_j-x_{j-1})^2-
\frac{\alpha}{W^2}\sum\limits_{j=1}^{m}x_j^2}\\ \notag &=\dfrac{m\,e^{-\varepsilon^2\delta^2W^2/2}}{\sqrt{2\pi}} \cdot
Z^{(m)}_\alpha \cdot \sum\limits_{i=1}^m\int
e^{-t^2/2+\varepsilon^2G^{(m)}_{ii}(\alpha)t^2/2}dt,
\end{align}
where $G^{(m)}$ is defined in (\ref{G}).

Let us take $\varepsilon^2=(G^{(m)}_{ii}(\alpha))^{-1}/2$ in (\ref{in1}). Then taking into account (\ref{G_as}) and
$CW\le m\le
2n+1$, we obtain for $\alpha\in \mathbb{R}, \alpha>0$
\begin{equation}\label{in2}
\dfrac{Z_{\alpha}^{(m)}-Z_{\delta,\alpha}^{(m)}}{Z_{\alpha}^{(m)}}\le C_1\, e^{-C_2\delta^2 W}.
\end{equation}
Since $m\le 2n+1$, according to the first assertion of the lemma, we get
\begin{equation*}
\dfrac{|Z^{(m)}_{\gamma_1}|}{|Z^{(m)}_{\gamma_2}|}=(1+C/W)^m\le e^{C_1m/W}, \quad m,W\to\infty,
\end{equation*}
which gives (\ref{otn_mod}).
This and (\ref{in2}) yield for $m\ge CW$, $\gamma\in \mathbb{C}$, $\Re \gamma>0$
\begin{align*}
\dfrac{|Z_{\gamma}^{(m)}-Z_{\delta,\gamma}^{(m)}|}{|Z^{(m)}_\gamma|}\le\dfrac{Z_{\Re\gamma}^{(m)}-
Z_{\delta,\Re\gamma}^{(m)}}{Z^{(m)}_{\Re \gamma}}\cdot
 \dfrac{Z^{(m)}_{\Re \gamma}}{|Z^{(m)}_\gamma|}
\le C_1\, e^{-C_2\delta^2 W+Cn/W}\le C_1\, e^{-C_3\delta^2 W}.
\end{align*}
Since $W^2=N^{1+\theta}$, we can take $\delta=W^{-\kappa}$ with $\kappa<\theta/(1+\theta)$.

Take now any $m$. Using the assertion (1) of the lemma, we can write for any $\varepsilon>0$
\begin{align*}
&(Z_{\alpha}^{(m)})^{-1}\int\limits_{x_k-x_1>\delta
W}e^{-\frac{1}{2}\sum\limits_{j=2}^{m}(x_j-x_{j-1})^2-\frac{\alpha}{2W^2}\sum\limits_{j=1}^{m}x_j^2}
\prod\limits_{q=1}^md x_q\\
&\le (Z_{\alpha}^{(m)})^{-1}\int e^{\varepsilon(x_k-x_1-\delta W)-\frac{1}{2}\sum\limits_{j=2}^{k}(x_j-x_{j-1})^2-\frac{\alpha}{2W^2}\sum\limits_{j=1}^{k}x_j^2}
\prod\limits_{q=1}^kd x_q\\
&\times \int e^{-\frac{1}{2}\sum\limits_{j=k+2}^{m}(x_j-x_{j-1})^2-\frac{\alpha}{2W^2}
\sum\limits_{j=k+1}^{m}x_j^2}
\prod\limits_{q=k+1}^md x_q\\
&\le \dfrac{Z_{\alpha}^{(k)}Z_{\alpha}^{(m-k)}}{Z_{\alpha}^{(m)}}\cdot e^{-\varepsilon \delta W+c\varepsilon^2
(G^{(k)}_{11}-G^{(k)}_{1k})}\le
W e^{-\varepsilon \delta W+C\varepsilon^2 \min\{m,W\}}\le e^{-C_1\delta^2W}.
\end{align*}

3) It is easy to see that
\[
-\dfrac{\alpha x^2}{2}+k_i\log |x|\le -\dfrac{\alpha x^2}{4} + \dfrac{k_i}{2}\log \dfrac{2k_i}{\alpha}.
\]
Thus, using the assertions (1) -- (2) of the lemma, we obtain
\begin{align*}
&|Z^{(m)}_\gamma|^{-1}\bigg|\,\displaystyle\int\limits_{\max |x_i|>\delta W}\prod\limits_{j\in S}x_j^{k_j}
\cdot \mu^{(m)}_{\gamma}(x)
\prod\limits_{q=1}^md x_q\bigg|\\
&\le |Z^{(m)}_\gamma|^{-1}e^{\sum\limits_{i=1}^s\frac{k_i}{2}\log \frac{2k_i}{\Re \gamma}} \cdot \displaystyle\int\limits_{\max |x_i|>\delta W}
\mu^{(m)}_{\Re \gamma/2}(x)
\prod\limits_{q=1}^md x_q \\
&\le e^{C_1k\log k+C_2m/W}\dfrac{|Z_{\Re \gamma/2}^{(m)}-Z_{\delta,\Re \gamma/2}^{(m)}|}{|Z^{(m)}_{\Re \gamma/2}|}\le e^{-CW\delta^2},
\end{align*}
where the last inequality holds since $k\le Cm/W\ll W$.
$\quad \Box$
 \medskip

\textbf{Proof of Lemma \ref{l:phi}.}
Recall that all non-zero moments of measure $\tilde{\mu}$ can be expressed via
expectations of $|(V_j)_{12}|^{2s}$ (see Section 4.3). In addition, according to (\ref{int_V}),
\[
\langle|(V_j)_{12}|^{2s}\rangle_{V_j}=\dfrac{s!}{W^{2s}(a_+-a_-)^{2s}}
\Big(1+\dfrac{\tilde{a\vphantom{b}}_{j-1}-\tilde{b}_{j-1}}{W(a_+-a_-)}\Big)^{-s}
\Big(1+\dfrac{\tilde{a\vphantom{b}}_j-\tilde{b}_j}{W(a_+-a_-)}\Big)^{-s}+O(e^{-C_1W^2}).
\]
Besides,
\begin{equation*}
\int \prod\limits_{l,s=1}^2 \tilde{V}_{ls}^{p_{ls}}\bar{\tilde{V}}_{ls}^{q_{ls}}\rho(t) dtd\theta
d\tilde{\theta}\ne 0
\end{equation*}
only if $p_{11}-q_{11}=p_{22}-q_{22}$, $p_{12}-q_{12}=p_{21}-q_{21}$, and all non-zero
moments of the measure with respect to $t_j,\theta_j,\tilde{\theta}_{j}$ can be expressed via
the expectations of $|(\tilde{V}_j)_{12}|^{2s}$. Moreover,
\begin{multline*}
\langle|(\tilde{V}_j)_{12}|^{2s}\rangle_{t_j,\theta_j,\tilde{\theta}_j}\\
=\dfrac{s!}{W^{2s}(a_+-a_-)^{2s}}
\Big(1+\dfrac{\tilde{a\vphantom{b}}_{j-1}-\tilde{b}_{j-1}}{W(a_+-a_-)}\Big)^{-s}
\Big(1+\dfrac{\tilde{a\vphantom{b}}_j-\tilde{b}_j}{W(a_+-a_-)}\Big)^{-s}+O(e^{-C_2W^2}).
\end{multline*}
Hence, if $\sum p_{ls}=\sum q_{ls}$, $0\le p_{ls},q_{ls}\le 2p$, then
\[
\langle\prod\limits_{l,s=1}^2 V_{ls}^{p_{ls}}\bar{V}_{ls}^{q_{ls}}\rangle_V=\langle\prod\limits_{l,s=1}^2 \tilde{V}_{ls}^{p_{ls}}\bar{\tilde{V}}_{ls}^{q_{ls}}
\rangle_{t,\theta,\tilde{\theta}}+O(e^{-CW^2})
\]
Now let $\mathbf{E}_k$ be the averaging with respect to the product of the measures
$t_j,\theta_j,\tilde\theta_j$ for $j$ from $(-n+1)$ to $(-n+k)$ and the measures $d\mu(V_j)$ for
$j$ from $(-n+k+1)$ to $n$. Thus, if
\[\Psi_{k_1,\ldots,k_s}=\prod_{j=1}^s |(P_{k_j}U_{-n})_{12}|^{2},\]
then it suffices to estimate
\[
|(\mathbf{E}_{0}-\mathbf{E}_{2n})\{\Psi_{k_1,\ldots,k_s}\}|\le e^{-cW^2}
\]
for $s\le p$. Note that
\[
|(\mathbf{E}_{0}-\mathbf{E}_{2n})\{\Psi_{k_1,\ldots,k_s}\}|
\le\sum_{i}|(\mathbf{E}_{i-1}-\mathbf{E}_{i})\{\Psi_{k_1,\ldots,k_s}\}|\]
In each summand we write for $\gamma=i-1,i$ (we assume that all $k_j\ge (-n+i)$)
\begin{align*}
\mathbf{E}_{\gamma}\{\Psi_{k_1,\ldots,k_s}\}=&\mathbf{E}_{\gamma}\{\prod_{j=1}^s
|(P_{-n+i-1}V_{-n+i}(P_{-n+i}^*P_{k_j}U_{-n}))_{12}|^{2}\}\\=&
\mathbf{E}_{\gamma}\{\prod_{j=1}^s |\sum_{\alpha,\alpha'=1,2}
(P_{-n+i-1})_{1\alpha}(V_{-n+i})_{\alpha\alpha'}
(P_{-n+i}^*P_{k_j}U_{-n}))_{\alpha'2}|^{2}\}\\&=\sum_{l=1}^s C_{l}\mathbf{E}_{\gamma}\{|(V_i)_{12}|^{2l}\},
\end{align*}
where the coefficients $C_{l}$ are the same for $\gamma=i$ and $\gamma=i-1$ and can be bounded by $C^{p}$,
since $|(P_{-n+i-1})_{1\alpha}|\le 1$ and $|(P_{-n+i}^*P_{k_j}U_{-n}))_{\alpha'2}|\le 1$.
Moreover, since
\[|\mathbf{E}_{i}\{|(V_i)_{12}|^{2l}\}-\mathbf{E}_{i-1}\{|(V_i)_{12}|^{2l}\}|\le C^pp!e^{-CW^2},\]
we obtain
\[|(\mathbf{E}_{0}-\mathbf{E}_{2n})\{\Psi_{k_1\ldots,k_s}\}|\le nC^p_1p!e^{-CW^2}\]
Then the summation with respect to $s$ gives the bound $nC_1^{C_2 n/W}e^{-CW^2}=O(e^{-cW^2})$.
This yields Lemma \ref{l:phi}, since the expression under the expectation
in (\ref{Phi}) has the same number of elements of $V_j$ and $V_j^*$.
$\quad \Box$
 \medskip

{\bf Acknowledgements.}
 I am grateful to Thomas Spencer who drew my attention to this problem. Also I would like to thank
 both referees for their substantial efforts and useful comments which help to make the presentation
 of the paper much more clear.

\end{document}